\declaretheorem{theorem}
\declaretheorem[sibling=theorem]{definition}
\declaretheorem[sibling=theorem]{lemma}
\declaretheorem[sibling=theorem]{corollary}
\declaretheorem[sibling=theorem]{fact}
\newcommand\numberthis{\addtocounter{equation}{1}\tag{\theequation}}
\DeclarePairedDelimiter{\norm}{\lVert}{\rVert}
\DeclarePairedDelimiter{\abs}{\lvert}{\rvert}
\newcommand{\eps}{\ensuremath{\epsilon}}
\newcommand{\OPT}{\ensuremath{\mathrm{OPT}}}
\newcommand{\poly}{\ensuremath{\mathrm{poly}}}
\newcommand{\Otilde}{\ensuremath{\widetilde{O}}}
\newcommand{\setC}{\ensuremath{\mathcal{C}}}
\newcommand{\setP}{\ensuremath{\mathcal{P}}}
\newcommand{\R}{\ensuremath{\mathbb{R}}}
\newcommand{\N}{\ensuremath{\mathbb{N}}}
\newcommand{\diff}{\ensuremath{\mathop{}\!\mathrm{d}}}
\newcommand{\setI}{\mathcal{I}}
\newcommand{\setM}{\mathcal{M}}
\newcommand{\familyI}{\mathfrak{I}}
\newcommand{\paramset}{\ensuremath{\Theta}}
\newcommand{\normalpdf}{\ensuremath{\mathcal{N}}}
\newcommand{\mixpdf}{\ensuremath{\mathcal{M}}}
\newcommand{\polystandardpdf}{\ensuremath{\mathcal{\widetilde{P}}}}
\newcommand{\polypdf}{\ensuremath{P}}
\newcommand{\pdens}{\ensuremath{p_\mathrm{dens}}}
\newcommand{\pdiff}{\ensuremath{{p_\theta^\mathrm{diff}}}}
\newcommand{\midp}{\ensuremath{\mathrm{mid}}}
\newcommand{\good}{\ensuremath{\mathrm{good}}}
\newcommand{\bad}{\ensuremath{\mathrm{bad}}}
\newcommand{\specialcell}[2][c]{%
  \begin{tabular}[#1]{@{}c@{}}#2\end{tabular}}
\newcommand{\Ak}{\ensuremath{{\mathcal{A}_K}}}
\newcommand{\hhat}{\widehat{h}}
\newcommand{\ed}{\stackrel{\text{def}}{=}}
\algnewcommand{\LineComment}[1]{\State \(\triangleright\) #1}
\let\OldStatex\Statex
\renewcommand{\Statex}[1][3]{%
\setlength\@tempdima{\algorithmicindent}%
\OldStatex\hskip\dimexpr#1\@tempdima\relax}
\title{A Nearly Optimal and Agnostic Algorithm for \\Properly Learning a Mixture of $k$ Gaussians, for any Constant $k$}
\author{Jerry Li\thanks{Supported by NSF grant CCF-1217921 and DOE grant DE-SC0008923.}\\MIT\\\texttt{jerryzli@mit.edu}
\and                                                                            
Ludwig Schmidt\thanks{Supported by MADALGO and a grant from the MIT-Shell Energy Initiative.}\\MIT\\\texttt{ludwigs@mit.edu}}
\begin{document}
\maketitle

\begin{abstract}
Learning a Gaussian mixture model (GMM) is a fundamental problem in machine learning, learning theory, and statistics.
One notion of learning a GMM is proper learning: here, the goal is to find a mixture of $k$ Gaussians $\mixpdf$ that is close to the density $f$ of the unknown distribution from which we draw samples.
The distance between $\mixpdf$ and $f$ is typically measured in the total variation or $L_1$-norm.

We give an algorithm for learning a mixture of $k$ univariate Gaussians that is nearly optimal for any fixed $k$.
The sample complexity of our algorithm is $\Otilde(\frac{k}{\eps^2})$ and the running time is $(k \cdot \log\frac{1}{\eps})^{O(k^4)} + \Otilde(\frac{k}{\eps^2})$.
It is well-known that this sample complexity is optimal (up to logarithmic factors), and it was already achieved by prior work. 
However, the best known time complexity for proper learning a $k$-GMM was $\Otilde(\frac{1}{\eps^{3k-1}})$. 
In particular, the dependence between $\frac{1}{\eps}$ and $k$ was exponential.
We significantly improve this dependence by replacing the $\frac{1}{\epsilon}$ term with a $\log \frac{1}{\epsilon}$ while only increasing the exponent moderately.
Hence, for any fixed $k$, the $\Otilde (\frac{k}{\eps^2})$ term dominates our running time, and thus our algorithm runs in time which is \emph{nearly-linear} in the number of samples drawn.
Achieving a running time of $\text{poly}(k, \frac{1}{\eps})$ for proper learning of $k$-GMMs has recently been stated as an open problem by multiple researchers, and we make progress on this open problem.

Moreover, our approach offers an agnostic learning guarantee: our algorithm returns a good GMM even if the distribution we are sampling from is not a mixture of Gaussians.
To the best of our knowledge, our algorithm is the first agnostic proper learning algorithm for GMMs.
Again, the closely related question of agnostic and proper learning for GMMs in the high-dimensional setting has recently been raised as an open question, and our algorithm resolves this question in the univariate setting.

We achieve these results by approaching the proper learning problem from a new direction: we start with an accurate density estimate and then fit a mixture of Gaussians to this density estimate.
Hence, after the initial density estimation step, our algorithm solves an entirely deterministic optimization problem.
We reduce this optimization problem to a sequence of carefully constructed systems of polynomial inequalities, which we then solve with Renegar's algorithm.
Our techniques for encoding proper learning problems as systems of polynomial inequalities are general and can be applied to properly learn further classes of distributions besides GMMs.
\end{abstract}

\newpage

\section{Introduction}
Gaussian mixture models (GMMs) are one of the most popular and important statistical models, both in theory and in practice.
A sample is drawn from a GMM by first selecting one of its $k$ components according to the mixing weights, and then drawing a sample from the corresponding Gaussian, each of which has its own set of parameters.
Since many phenomena encountered in practice give rise to approximately normal distributions, GMMs are often employed to model distributions composed of several distinct sub-populations.
GMMs have been studied in statistics since the seminal work of Pearson \cite{Pearson94} and are now used in many fields including astronomy, biology, and machine learning.
Hence the following are natural and important questions: (i) how can we efficiently ``learn'' a GMM when we only have access to samples from the distribution, and (ii) what rigorous guarantees can we give for our algorithms?

\subsection{Notions of learning}
\label{sec:notions}
There are several natural notions of learning a GMM, all of which have been studied in the learning theory community over the last 20 years.
The known sample and time complexity bounds differ widely for these related problems, and the corresponding algorithmic techniques are also considerably different (see Table \ref{table:overviewtable} for an overview and a comparison with our results).
In order of decreasing hardness, these notions of learning are:

\begin{description}
\item[Parameter learning] The goal in parameter learning is to recover the parameters of the unknown GMM (i.e., the means, variances, and mixing weights) up to some given additive error $\eps$.\footnote{Since the accuracy $\eps$ depends on the scale of the mixture (i.e., the variance), these guarantees are often specified relative to the variance.} 
\item[Proper learning] In proper learning, our goal is to find a GMM $M'$ such that the $L_1$-distance (or equivalently, the total variation distance) between our hypothesis $M'$ and the true unknown distribution is small.
\item[Improper learning / density estimation] Density estimation requires us to find \emph{any} hypothesis $\hhat$ such that the $L_1$ distance between $\hhat$ and the unknown distribution is small ($\hhat$ does not need to be a GMM).
\end{description}

\begin{table}[htb]
\begin{tabular}{ccccc}
\toprule
Problem type & \specialcell{Sample complexity\\lower bound} & \specialcell{Sample complexity\\upper bound} & \specialcell{Time complexity\\upper bound} & \specialcell{Agnostic\\guarantee} \\
\midrule
Parameter learning & & & & \\[.1cm]
$k = 2$ & $\Theta(\frac{1}{\eps^{12}})$ \cite{HP15} & $O(\frac{1}{\eps^{12}})$ \cite{HP15}& $O(\frac{1}{\eps^{12}})$ \cite{HP15} & no \\[.3cm]
general $k$ & $\Omega(\frac{1}{\eps^{6k-2}})$ \cite{HP15} & $O((\frac{1}{\eps})^{c_k})$ \cite{MV10} & $O((\frac{1}{\eps})^{c_k})$ \cite{MV10} & no \\
\midrule
Proper learning & & & & \\[.1cm]
$k = 2$ & $\Theta(\frac{1}{\eps^{2}})$ & $\Otilde(\frac{1}{\eps^{2}})$ \cite{DK14} & $\Otilde(\frac{1}{\eps^{5}})$ \cite{DK14} & no \\[.3cm]
general $k$ & $\Theta(\frac{k}{\eps^{2}})$ & $\Otilde(\frac{k}{\eps^2})$ \cite{AJOS14} & $\Otilde(\frac{1}{\eps^{3k - 1}})$ \cite{DK14,AJOS14} & no \\[.3cm]
\textbf{Our results}& & & & \\[.1cm]
$k = 2$ &  & $\Otilde(\frac{1}{\eps^{2}})$ & \mbox{\boldmath$\Otilde(\frac{1}{\eps^{2}})$} & \textbf{yes} \\[.3cm]
general $k$ &  & $\Otilde(\frac{k}{\eps^2})$ & \mbox{\boldmath$(k \log\frac{1}{\eps})^{O(k^4)} + \Otilde(\frac{k}{\eps^2})$} & \textbf{yes}  \\
\midrule
Density estimation & & & & \\[.1cm]
general $k$ & $\Theta(\frac{k}{\eps^{2}})$ & $\Otilde(\frac{k}{\eps^{2}})$ \cite{ADLS15} & $\Otilde(\frac{k}{\eps^{2}})$ \cite{ADLS15} & yes \\
\bottomrule
\end{tabular}
\caption{Overview of the best known results for learning a mixture of univariate Gaussians.
Our contributions (highlighted as bold) significantly improve on the previous results for proper learning: the time complexity of our algorithm is nearly optimal for any fixed $k$.
Moreover, our algorithm gives agnostic learning guarantees.
The constant $c_k$ in the time and sample complexity of \cite{MV10} depends only on $k$ and is at least $k$.
The sample complexity lower bounds for proper learning and density estimation are folklore results.
The only time complexity lower bounds known are the corresponding sample complexity lower bounds, so we omit an extra column for time complexity lower bounds.
}
\label{table:overviewtable}
\vspace{-.3cm}
\end{table}

Parameter learning is arguably the most desirable guarantee because it allows us to recover the unknown mixture parameters.
For instance, this is important when the parameters directly correspond to physical quantities that we wish to infer.
However, this power comes at a cost:
recent work on parameter learning has shown that $\Omega(\frac{1}{\eps^{12}})$ samples are already necessary to learn the parameters of a mixture of two univariate Gaussians with accuracy $\eps$ \cite{HP15} (note that this bound is tight, i.e., the paper also gives an algorithm with time and sample complexity $O(\frac{1}{\eps^{12}})$).
Moreover, the sample complexity of parameter learning scales exponentially with the number of components: for a mixture of $k$ univariate Gaussians, the above paper also gives a sample complexity lower bound of $\Omega(\frac{1}{\eps^{6k-2}})$.
Hence the sample complexity of parameter learning quickly becomes prohibitive, even for a mixture of two Gaussians and reasonable choices of $\eps$.

At the other end of the spectrum, improper learning has much smaller time and sample complexity.
Recent work shows that it is possible to estimate the density of a univariate GMM with $k$ components using only $\Otilde(\frac{k}{\eps^2})$ samples and time \cite{ADLS15}, which is tight up to logarithmic factors.
However, the output hypothesis produced by the corresponding algorithm is only a piecewise polynomial and not a GMM.
This is a disadvantage because GMMs are often desirable as a concise representation of the samples and for interpretability reasons.

Hence an attractive intermediate goal is proper learning: similar to parameter learning, we still produce a GMM as output.
On the other hand, we must only satisfy the weaker $L_1$-approximation guarantee between our hypothesis and the unknown GMM.
While somewhat weaker than parameter learning, proper learning still offers many desirable features: for instance, the representation as a GMM requires only $3k - 1$ parameters, which is significantly smaller than the at least $6 k (1 + 2 \log\frac{1}{\eps})$ many parameters produced by the piecewise polynomial density estimate of \cite{ADLS15} (note that the number of parameters in the piecewise polynomial also grows as the accuracy of the density estimate increases).
Moreover, the representation as a GMM allows us to provide simple closed-form expressions for quantities such as the mean and the moments of the learnt distribution, which are then easy to manipulate and understand.
In contrast, no such closed-form expressions exist when given a general density estimate as returned by \cite{ADLS15}.
Furthermore, producing a GMM as output hypothesis can be seen as a regularization step because the number of peaks in the density is bounded by $k$, and the density is guaranteed to be smooth.
This is usually an advantage over improper hypotheses such as piecewise polynomials that can have many more peaks or discontinuities, which makes the hypothesis harder to interpret and process.
Finally, even the most general parameter learning algorithms require assumptions on the GMMs such as identifiability, while our proper learning algorithm works for \emph{any} GMM.

Ideally, proper learning could combine the interpretability and conciseness of parameter learning with the small time and sample complexity of density estimation.
Indeed, recent work has shown that one can properly learn a mixture of $k$ univariate Gaussians from only $\Otilde(\frac{k}{\eps^2})$ samples \cite{DK14,AJOS14}, which is tight up to logarithmic factors.
However, the time complexity of proper learning is not yet well understood.
This is in contrast to parameter learning and density estimation, where we have strong lower bounds and nearly-optimal algorithms, respectively.
For the case of two mixture components, the algorithm of \cite{DK14} runs in time $\Otilde(\frac{1}{\eps^5})$.
However, the time complexity of this approach becomes very large for general $k$ and scales as $O(\frac{1}{\eps^{3k-1}})$ \cite{DK14, AJOS14}.
Note that this time complexity is much larger than the $\Otilde(\frac{k}{\eps^2})$ required for density estimation and resembles the exponential dependence between $\frac{1}{\eps}$ and $k$ in the $\Omega(\frac{1}{\eps^{6k-2}})$ lower bound for parameter learning.
Hence the true time complexity of properly learning a GMM is an important open question.
In particular, it is not known whether the exponential dependence between $\frac{1}{\eps}$ and $k$ is necessary.

In our work, we answer this question and show that such an exponential dependence between $\frac{1}{\eps}$ and $k$ can be avoided.
We give an algorithm with the same (nearly optimal) sample complexity as previous work, but our algorithm also runs in time which is \emph{nearly-optimal}, i.e. nearly-linear in the number of samples, for any fixed $k$.
It is worth noting that proper learning of $k$-GMMs in time $\text{poly}(k, \frac{1}{\eps})$ has been raised as an open problem \cite{Moitra14,Bigdata15}, and we make progress on this question.
Moreover, our learning algorithm is \emph{agnostic}, which means that the algorithm tolerates arbitrary amounts of worst-case noise in the distribution generating our samples.
This is an important robustness guarantee, which we now explain further.

\subsection{Robust learning guarantees}
\label{sec:introrobust}
All known algorithms for properly learning or parameter learning\footnote{Note that parameter learning is not well-defined if the samples do not come from a GMM.
Instead, existing parameter learning algorithms are not robust in the following sense: if the unknown distribution is not a GMM, the algorithms are not guaranteed to produce a set of parameters such that the corresponding GMM is close to the unknown density.} a GMM offer rigorous guarantees only when there is at most a very small amount of noise in the distribution generating our samples.
Typically, these algorithms work in a ``non-agnostic'' setting: they assume that samples come from a distribution that is exactly a GMM with probability density function (pdf) $f$, and produce a hypothesis pdf $\hhat$ such that for some given $\eps > 0$
\[
  \norm{f - \hhat}_1 \; \leq \; \eps \; .
\]
But in practice, we cannot typically expect that we draw samples from a distribution that truly corresponds to a GMM.
While many natural phenomena are well approximated by a GMM, such an approximation is rarely exact.
Instead, it is useful to think of such phenomena as GMMs corrupted by some amount of noise.
Hence it is important to design algorithms that still provide guarantees when the true unknown distribution is far from any mixture of $k$ Gaussians.

\paragraph{Agnostic learning} Therefore, we focus on the problem of \emph{agnostic} learning \cite{KSS94}, where our samples can come from \emph{any} distribution, not necessarily a GMM.
Let $f$ be the pdf of this unknown distribution and let $\setM_k$ be the set of pdfs corresponding to a mixture of $k$ Gaussians.
Then we define $\OPT_k$ to be the following quantity:
\[
  \OPT_k \; \ed \; \min_{h \in \setM_k} \norm{f - h}_1 \; ,
\]
which is the error achieved by the best approximation of $f$ with a $k$-GMM.
Note that this is a deterministic quantity, which can also be seen as the error incurred when projecting $f$ onto set $\setM_k$.
Using this definition, an agnostic learning algorithm produces a GMM with density $\hhat$ such that
\[
  \norm{f - \hhat}_1 \; \leq \; C \cdot \OPT_k + \eps
\]
for some given $\eps > 0$ and a universal constant\footnote{Strictly speaking, agnostic learning requires this constant $C$ to be 1. However, such a tight guarantee is impossible for some learning problems such as density estimation. Hence we allow any constant $C$ in an agnostic learning guarantee, which is sometimes also called \emph{semi-agnostic} learning.} $C$ that does not depend on $\eps$, $k$, or the unknown pdf $f$.

Clearly, agnostic learning guarantees are more desirable because they also apply when the distribution producing our samples does not match our model exactly (note also that agnostic learning is strictly more general than non-agnostic learning).
Moreover, the agnostic learning guarantee is ``stable'': when our model is close to the true distribution $f$, the error of the best approximation, i.e.,  $\OPT_k$, is small.
Hence an agnostic learning algorithm still produces a good hypothesis.

On the other hand, agnostic learning algorithms are harder to design because we cannot make any assumptions on the distribution producing our samples.
To the best of our knowledge, our algorithm is the first agnostic algorithm for properly learning a mixture of $k$ univariate Gaussians.
Note that agnostic learning has recently been raised as an open question in the setting of learning high-dimensional GMMs \cite{Vempala12}, and our agnostic univariate algorithm can be seen as progress on this problem.
Moreover, our algorithm achieves such an agnostic guarantee without any increase in time or sample complexity compared to the non-agnostic case.

\subsection{Our contributions}
We now outline the main contributions of this paper.
Similar to related work on proper learning \cite{DK14}, we restrict our attention to univariate GMMs.
Many algorithms for high-dimensional GMMs work via reductions to the univariate case, so it is important to understand this case in greater detail \cite{KMV10,MV10,HP15}.

First, we state our main theorem.
See Section \ref{sec:notation} for a formal definition of the notation.
The quantity $\OPT_k$ is the same as introduced above in Section \ref{sec:introrobust}.
\begin{theorem}
\label{thm:intro}
Let $f$ be the pdf of an arbitrary unknown distribution, let $k$ be a positive integer, and let $\eps > 0$.
Then there is an algorithm that draws $\Otilde(\frac{k}{\eps^2})$ samples from the unknown distribution and produces a mixture of $k$ Gaussians such that the corresponding pdf $\hhat$ satisfies
\[
  \norm{f - \hhat}_1 \; \leq \; 42 \cdot \OPT_k + \eps \; .
\]
Moreover, the algorithm runs in time
\[
  \left(k \cdot \log\frac{1}{\eps}\right)^{O(k^4)} \, + \, \Otilde\left(\frac{k}{\eps^2}\right) \; .
\]
\end{theorem}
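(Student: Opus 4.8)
The plan is to decouple the statistical content of proper learning from its computational content via a two-phase algorithm. In the first phase I draw $\Otilde(k/\eps^2)$ samples and run an \emph{agnostic density estimator} to produce an explicit hypothesis $\tilde f$ (a piecewise polynomial) with $\norm{f - \tilde f}_1 \le C_1 \cdot \OPT_k + \eps/10$. In the second phase I discard the samples entirely and solve the purely deterministic problem of fitting a mixture of $k$ Gaussians to the known function $\tilde f$: I find a $k$-GMM $\hhat$ whose $L_1$-distance to $\tilde f$ is within an additive $\eps/10$ of the best possible $k$-GMM fit to $\tilde f$. The theorem then follows from two applications of the triangle inequality: the best $k$-GMM fit to $\tilde f$ has error at most $\norm{\tilde f - f}_1 + \OPT_k$ (compare against the $k$-GMM achieving $\OPT_k$), so $\norm{f - \hhat}_1 \le 2\norm{f - \tilde f}_1 + \OPT_k + \eps/10 \le (2C_1 + 1)\OPT_k + \eps$, and a density-estimation constant $C_1$ of modest size suffices to make $2C_1 + 1 \le 42$.

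For the first phase I can use an off-the-shelf agnostic learner for the class of piecewise polynomials with $O(k)$ pieces and degree $O(\log\frac{1}{\eps})$; such algorithms have sample and time complexity $\Otilde(k/\eps^2)$ and a constant-factor agnostic guarantee against their own hypothesis class. The key point is that every $k$-GMM is $\eps$-close in $L_1$ to such a piecewise polynomial --- restrict each component to the $O(\sqrt{\log\frac{1}{\eps}})$-standard-deviation window around its mean, truncate the Taylor series there to degree $O(\log\frac{1}{\eps})$, and bound the tails --- so the optimal piecewise-polynomial fit to $f$ is no worse than $\OPT_k + \eps$, and the agnostic guarantee transfers to give the claimed bound on $\norm{f - \tilde f}_1$.

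The second phase is the technical heart, and it is where I would reduce to polynomial systems. First I confine the parameter search to a bounded region: inspecting $\tilde f$ lets me assume all means lie in a bounded interval and all standard deviations in a range of width $\poly(\frac{1}{\eps})$, and I enumerate the $(\log\frac{1}{\eps})^{O(k)}$ possible orders of magnitude of the $k$ standard deviations, rescaling so that within each guess the parameters lie in a fixed box (components of negligible weight or degenerate scale are dropped into the error budget). Next I replace each Gaussian density by the windowed degree-$D$ polynomial approximation above, $D = O(\log\frac{1}{\eps})$; introducing an auxiliary variable for each $1/\sigma_i$ makes its coefficients polynomial in the parameters, and its $L_1$-error is absorbed. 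The candidate $M$ is then a piecewise polynomial whose breakpoints are the $O(k)$ fixed breakpoints of $\tilde f$ together with the $2k$ parameter-dependent endpoints $\mu_i \pm \Theta(\sqrt{\log\frac{1}{\eps}})\,\sigma_i$, and on each piece $\tilde f - M$ is a single polynomial of degree $O(\log\frac{1}{\eps})$. Guessing the relative order of all breakpoints, and introducing the roots of $\tilde f - M$ on each piece as additional variables constrained to make $\tilde f - M$ alternate sign between consecutive roots, the condition $\norm{\tilde f - M}_1 \le t$ becomes a system of polynomial (in)equalities: nonnegativity and normalization of the weights, the ordering and sign conditions, and "the sum of the signed integrals of $\tilde f - M$ over the sub-intervals is at most $t$" (each such integral is a polynomial in the parameters, the $1/\sigma_i$, and the root variables). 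Binary search on $t$ to precision $\eps/10$, deciding feasibility of each system with Renegar's algorithm and reading off a feasible point, produces $\hhat$; bookkeeping the number of variables, constraints, degrees, and combinatorial guesses against the cost of Renegar's algorithm yields the running time $(k\log\frac{1}{\eps})^{O(k^4)}$, with all $\eps$-dependence logarithmic.

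The main obstacle is exactly this second phase: the Gaussian density is transcendental, the $L_1$-objective involves absolute values, and the sign changes of $\tilde f - M$ occur at parameter-dependent locations, and all three features resist a polynomial encoding. Overcoming them requires (i) the substitution $1/\sigma_i$ plus the windowed polynomial approximation to linearize the exponentials, (ii) introducing and constraining the sign-change locations so that $\int \abs{\tilde f - M}$ splits into a signed sum of polynomial integrals, and (iii) enough combinatorial guessing --- variance magnitudes and breakpoint orderings --- that every remaining system has only $\poly(k)$ real variables, which is what keeps Renegar's running time independent of $\eps$. A secondary difficulty is quantitative: each step introduces additive errors (to be kept below $\eps$) and, through the density-estimation constant, a multiplicative loss, and these must be budgeted carefully so that the final guarantee is $42\cdot\OPT_k + \eps$ and not merely $O(\OPT_k) + \eps$.
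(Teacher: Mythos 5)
Your overall architecture (agnostic piecewise-polynomial density estimation, then a purely deterministic fit of a $k$-GMM via Taylor-windowed polynomial surrogates, breakpoint-order guessing, Renegar, and binary search on the error threshold) matches the paper's, but there are two genuine gaps in the second phase. The first is your encoding of $\norm{\tilde f - M}_1 \le t$: you propose to introduce the roots of $\tilde f - M$ on each piece as variables and force sign alternation between consecutive roots. Each piece of $\tilde f - M$ is a polynomial of degree $O(\log\frac{1}{\eps})$ and there are $O(k)$ pieces, so a faithful encoding needs $O(k\log\frac{1}{\eps})$ root variables; Renegar's running time is exponential in (a product involving) the number of quantified variables, so this yields a bound of the form $(\log\frac{1}{\eps})^{\Omega(k^2\log\frac{1}{\eps})}$, i.e.\ super-polynomial in $\frac{1}{\eps}$, not the claimed $(k\log\frac{1}{\eps})^{O(k^4)}$. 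If instead you introduce only $O(k)$ root variables (reasoning that two near-GMMs have $O(k)$ ``essential'' crossings), the sign-constancy constraints are simply not satisfiable in general --- the Taylor-truncated surrogate can oscillate around $\tilde f$ many times with tiny amplitude --- so the system can be infeasible even at near-optimal parameters, and the encoding of the $L_1$-norm is wrong. The paper avoids this dilemma by never encoding the $L_1$-norm exactly: it replaces it by the $\Ak$-norm with $K = 4k$, realized as a universal quantifier over $2K$ interval endpoints (still only $O(k)$ variables), and proves separately (Lemma~\ref{lem:Akapprox}) that the gap between the two norms is at most $8\cdot\OPT_k + O(\eps)$ for the functions arising here. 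Some such relaxation (or a different idea) is needed; without it your runtime claim fails.

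The second gap is the reduction to a bounded parameter box. You assert that ``inspecting $\tilde f$'' lets you assume all standard deviations lie in a range of width $\poly(\frac{1}{\eps})$ and that $(\log\frac{1}{\eps})^{O(k)}$ magnitude guesses suffice, but nothing justifies this: the breakpoint intervals of $\tilde f$ can have lengths varying by factors unrelated to $\eps$ (e.g.\ doubly exponential in $\frac{1}{\eps}$ or worse), and a near-optimal mixture may contain a component whose scale matches an arbitrarily short interval. This is exactly the technical heart of the paper's Section~\ref{sec:generalalg}: one must prove that any component that is not ``admissible'' relative to some interval of $\pdens$ (precision bounded by $\poly(k,\frac1\eps)/|J|$, substantial overlap with $J$) cannot help the fit --- this is the No Interaction Lemma (Lemma~\ref{lem:cluster-struct}) --- and then reparametrize each component \emph{relative to its associated interval}, enumerating the $k^{O(k)}$ allocations of components to intervals, so that the rescaled parameters and the accuracy parameter handed to Renegar are bounded per scale (a single global precision does not suffice). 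Your proposal names the right ingredients in spirit (rescaling, dropping degenerate components into the error budget) but the quantitative claim it rests on is false as stated, and repairing it requires an argument of this kind. Relatedly, once the $\Ak$ relaxation and this surgery on inadmissible components are included, your constant accounting $2C_1 + 1 \le 42$ no longer goes through as written; the final constant must absorb those additional $O(\OPT_k)$ losses as in the paper's analysis.
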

We remark that we neither optimized the exponent $O(k^4)$, nor the constant in front of $\OPT_k$.
Instead, we see our result as a proof of concept that it is possible to agnostically and properly learn a mixture of Gaussians in time that is essentially fixed-parameter optimal.
As mentioned above, closely related questions about efficient and agnostic learning of GMMs have recently been posed as open problems, and we make progress on these questions.
In particular, our main theorem implies the following contributions:
\paragraph{Running time} The time complexity of our algorithm is significantly better than previous work on proper learning of GMMs.
For the special case of 2 mixture components studied in \cite{DK14} and \cite{HP15}, our running time simplifies to $\Otilde(\frac{1}{\eps^2})$.
This is a significant improvement over the $\Otilde(\frac{1}{\eps^5})$ bound in \cite{DK14}.
Moreover, our time complexity matches the best possible time complexity for density estimation of $2$-GMMs up to logarithmic factors.
This also implies that our time complexity is \emph{optimal} up to log-factors.

For proper learning of $k$ mixture components, prior work achieved a time complexity of $\Otilde(\frac{1}{\eps^{3k-1}})$ \cite{DK14,AJOS14}.
Compared to this result, our algorithm achieves an \emph{exponential} improvement in the dependence between $\frac{1}{\eps}$ and $k$: our running time contains only a $(\log\frac{1}{\eps})$ term raised to the $\poly(k)$-th power, not a $(\frac{1}{\eps})^k$.
In particular, the $\Otilde(\frac{k}{\eps^2})$ term in our running time dominates for any fixed $k$.
Hence the time complexity of our algorithm is \emph{nearly optimal for any fixed $k$}.

\paragraph{Agnostic learning} Our algorithm is the first proper learning algorithm for GMMs that is \emph{agnostic}.
Previous algorithms relied on specific properties of the normal distribution such as moments, while our techniques are more robust.
Practical algorithms should offer agnostic guarantees, and we hope that our approach is a step in this direction.
Moreover, it is worth noting that agnostic learning, i.e., learning under noise, is often significantly harder than non-agnostic learning.
One such example is learning parity with noise, which is conjectured to be computationally hard.
Hence it is an important question to understand which learning problems are tractable in the agnostic setting.
While the agnostic guarantee achieved by our algorithm is certainly not optimal, our algorithm still shows that it is possible to learn a mixture of Gaussians agnostically with only a very mild dependence on $\frac{1}{\eps}$.

\paragraph{From improper to proper learning} Our techniques offer a general scheme for converting \emph{improper} learning algorithms to \emph{proper} algorithms.
In particular, our approach applies to any parametric family of distributions that are well approximated by a piecewise polynomial in which the parameters appear polynomially and the breakpoints depend polynomially (or rationally) on the parameters.
As a result, we can convert purely approximation-theoretic results into proper learning algorithms for other classes of distributions, such as mixtures of Laplace or exponential distributions.
Conceptually, we show how to approach proper learning as a purely deterministic optimization problem once a good density estimate is available.
Hence our approach differs from essentially all previous proper learning algorithms, which use probabilistic arguments in order to learn a mixture of Gaussians.

\subsection{Techniques}
At its core, our algorithm fits a mixture of Gaussians to a density estimate.
In order to obtain an $\eps$-accurate and agnostic density estimate, we invoke recent work that has a time and sample complexity of $\Otilde(\frac{k}{\eps^2})$ \cite{ADLS15}.
The density estimate produced by their algorithm has the form of a piecewise polynomial with $O(k)$ pieces, each of which has degree $O(\log\frac{1}{\eps})$.
It is important to note that our algorithm does not draw any furthers samples after obtaining this density estimate --- the process of fitting a mixture of Gaussians is entirely deterministic.

Once we have obtained a good density estimate, the task of proper learning reduces to fitting a mixture of $k$ Gaussians to the density estimate.
We achieve this via a further reduction from fitting a GMM to solving a carefully designed system of polynomial inequalities.
We then solve the resulting system with Renegar's algorithm \cite{Renegar92a,Renegar92b}.
This reduction to a system of polynomial inequalities is our main technical contribution and relies on the following techniques.

\paragraph{Shape-restricted polynomials}
Ideally, one could directly fit a mixture of Gaussian pdfs to the density estimate.
However, this is a challenging task because the Gaussian pdf $\frac{1}{\sigma \sqrt{2\pi}} e^{-\frac{(x-\mu)^2}{2}}$ is not convex in the parameters $\mu$ and $\sigma$.
Thus fitting a mixture of Gaussians is a non-convex problem.

Instead of fitting mixtures of Gaussians directly, we instead use the notion of a \emph{shape restricted} polynomial.
We say that a polynomial is shape restricted if its coefficients are in a given semialgebraic set, i.e., a set defined by a finite number of polynomial equalities and inequalities.
It is well-known in approximation theory that a single Gaussian can be approximated by a piecewise polynomial consisting of three pieces with degree at most $O(\log\frac{1}{\eps})$ \cite{Timan63}.
So instead of fitting a mixture of $k$ Gaussian directly, we instead fit a mixture of $k$ shape-restricted piecewise polynomials.
By encoding that the shape-restricted polynomials must have the shape of Gaussian pdfs, we ensure that the mixture of shape-restricted piecewise polynomials found by the system of polynomial inequalities is close to a true mixture of $k$-Gaussians.
After we have solved the system of polynomial inequalities, it is easy to convert the shape-restricted polynomials back to a proper GMM.

\paragraph{\Ak-distance}
The system of polynomial inequalities we use for finding a good mixture of piecewise polynomials must encode that the mixture should be close to the density estimate.
In our final guarantee for proper learning, we are interested in an approximation guarantee in the $L_1$-norm.
However, directly encoding the $L_1$-norm in the system of polynomial inequalities is challenging because it requires knowledge of the intersections between the density estimate and the mixture of piecewise polynomials in order to compute the integral of their difference accurately.
Instead of directly minimizing the $L_1$-norm, we instead minimize the closely related \Ak-norm from VC (Vapnik–Chervonenkis) theory \cite{DL01}.
For functions with at most $K-1$ sign changes, the $\Ak$-norm exactly matches the $L_1$-norm.
Since two mixtures of $k$ Gaussians have at most $O(k)$ intersections, we have a good bound on the order of the $\Ak$-norm we use to replace the $L_1$-norm.
In contrast to the $L_1$-norm, we can encode the \Ak-norm without increasing the size of our system of polynomial inequalities significantly --- directly using the $L_1$-norm would lead to an exponential dependence on $\log\frac{1}{\eps}$ in our system of polynomial inequalities.

\paragraph{Adaptively rescaling the density estimate}
In order to use Renegar's algorithm for solving our system of polynomial inequalities, we require a bound on the accuracy necessary to find a good set of mixture components.
While Renegar's algorithm has a good dependence on the accuracy parameter, our goal is to give an algorithm for proper learning without \emph{any} assumptions on the GMM.
Therefore, we must be able to produce good GMMs even if the parameters of the unknown GMM are, e.g., doubly-exponential in $\frac{1}{\eps}$ or even larger.
Note that this issue arises in spite of the fact that our algorithm works in the real-RAM model: since different mixture parameters can have widely variying scales, specifying a single accuracy parameter for Renegar's algorithm is not sufficient.

We overcome this technical challenge by adaptively rescaling the parametrization used in our system of polynomial inequalities based on the lengths of the intervals $I_1, \ldots, I_s$ that define the piecewise polynomial density estimate $\pdens$.
Since $\pdens$ can only be large on intervals $I_i$ of small length, the best Gaussian fit to $\pdens$ can only have large parameters near such intervals. 
Hence, this serves as a simple way of identifying where we require more accuracy when computing the mixture parameters.

\paragraph{Putting things together}
Combining the ideas outlined above, we can fit a mixture of $k$ Gaussians with a carefully designed system of polynomial inequalities.
A crucial aspect of the system of polynomial inequalities is that the number of variables is $O(k)$, that the number of inequalities is $k^{O(k)}$, and the degree of the polynomials is bounded by $O(\log\frac{1}{\eps})$.
These bounds on the size of the system of polynomial inequalities then lead to the running time stated in Theorem \ref{thm:intro}.
In particular, the size of the system of polynomial inequalities is almost independent of the number of samples, and hence the running time required to solve the system scales only \emph{poly-logarithmically} with $\frac{1}{\eps}$.

\subsection{Related work}
Due to space constraints, it is impossible to summarize the entire body of work on learning GMMs here.
Therefore, we limit our attention to results with provable guarantees corresponding to the notions of learning outlined in Subsection \ref{sec:notions}.
Note that this is only one part of the picture: for instance, the well-known Expectation-Maximization (EM) algorithm is still the subject of current research (see \cite{BWY14} and references therein).

For parameter learning, the seminal work of Dasgupta \cite{Dasgupta99} started a long line of research in the theoretical computer science community, e.g., \cite{AK01,VW04,AM05,KSV08,BV08,KMV10}.
We refer the reader to \cite{MV10} for a discussion of these and related results.
The papers \cite{MV10} and \cite{BS10} were the first to give polynomial time algorithms (polynomial in $\eps$ and the dimension of the mixture) with provably minimal assumptions for $k$-GMMs.
More recently, Hardt and Price gave tight bounds for learning the parameters of a mixture of $2$ univariate Gaussians \cite{HP15}: $\Theta(\frac{1}{\eps^{12}})$ samples are necessary and sufficient,  and the time complexity is linear in the number of samples.
Moreover, Hardt and Price give a strong lower bound of $\Omega(\frac{1}{\eps^{6k-2}})$ for the sample complexity of parameter learning a $k$-GMM.
While our proper learning algorithm offers a weaker guarantee than these parameter learning approaches, our time complexity does not have an exponential dependence between $\frac{1}{\eps}$ and $k$.
Moreover, proper learning retains many of the attractive features of parameter learning (see Subsection \ref{sec:notions}).

Interestingly, parameter learning becomes more tractable as the number of dimensions increases.
A recent line of work investigates this phenomenon under a variety of assumptions (e.g., non-degeneracy or smoothed analysis) \cite{HK13,BCMV14,ABGRV14,GHK15}.
However, all of these algorithms require a lower bound on the dimension $d$ such as $d \geq \Omega(k)$ or $d \geq \Omega(k^2)$.
Since we focus on the one-dimensional case, our results are not directly comparable.
Moreover, to the best of our knowledge, none of the parameter learning algorithms (in any dimension) provide proper learning guarantees in the agnostic setting.

The first work to consider proper learning of $k$-GMMs without separation assumptions on the components was \cite{FSO06}.
Their algorithm takes $\poly (d, 1 / \epsilon, L)$ samples and returns a mixture whose KL-divergence to the unknown mixture is at most $\epsilon$.
Unfortunately, their algorithm has a pseudo-polynomial dependence on $L$, which is a bound on the means and the variances of the underlying components.
Note that such an assumption is not necessary a priori, and our algorithm works without any such requirements.
Moreover, their sample complexity is exponential in the number of components $k$.

The work closest to ours are the papers \cite{DK14} and \cite{AJOS14}, who also consider the problem of properly learning a $k$-GMM.
Their algorithms are based on constructing a set of candidate GMMs that are then compared via an improved version of the Scheff\'{e}-estimate.
While this approach leads to a nearly-optimal sample complexity of $\Otilde(\frac{k}{\eps^2})$, their algorithm constructs a large number of candidate hypothesis.
This leads to a time complexity of $O(\frac{1}{\eps^{3k-1}})$.
As pointed out in Subsection \ref{sec:notions}, our algorithm significantly improves the dependence between $\frac{1}{\eps}$ and $k$.
Moreover, none of their algorithms are agnostic.

Another related paper on learning GMMs is \cite{BSZ15}.
Their approach reduces the learning problem to finding a sparse solution to a non-negative linear system.
Conceptually, this approach is somewhat similar to ours in that they also fit a mixture of Gaussians to a set of density estimates.
However, their algorithm does not give a proper learning guarantee: instead of $k$ mixture components, the GMM returned by their algorithm contains $O(\frac{k}{\eps^3})$ components.
Note that this number of components is significantly larger than the $k$ components returned by our algorithm.
Moreover, their number of components increases as the accuracy paramter $\eps$ improves.
In the univariate case, the time and sample complexity of their algorithm is $O(\frac{k}{\eps^6})$.
Note that their sample complexity is not optimal and roughly $\frac{1}{\eps^4}$ worse than our approach.
For any fixed $k$, our running time is also better by roughly $\frac{1}{\eps^4}$.
Furthermore, the authors do not give an agnostic learning guarantee for their algorithm.

For density estimation, there is a recent line of work on improperly learning structured distributions \cite{CDSS13,CDSS14,ADLS15}.
While the most recent paper from this line achieves a nearly-optimal time and sample complexity for density estimation of $k$-GMMs, the hypothesis produced by their algorithm is a piecewise polynomial.
As mentioned in Subsection \ref{sec:notions}, GMMs have several advantages as output hypothesis.

\subsection{Outline of our paper}
In Section \ref{sec:notation}, we introduce basic notation and important known results that we utilize in our algorithm.
Section \ref{sec:wbh} describes our learning algorithm for the special case of \emph{well-behaved} density estimates.
This assumption allows us to introduce two of our main tools (shape-restricted polynomials and the \Ak-distance as a proxy for $L_1$) without the technical details of adaptively reparametrizing the shape-restricted polynomials.
Section \ref{sec:generalalg} then removes this assumption and gives an algorithm that works for agnostically learning \emph{any} mixture of Gaussians.
In Section \ref{sec:general}, we show how our techniques can be extended to properly learn further classes of distributions.


\section{Preliminaries}
\label{sec:notation}
Before we construct our learning algorithm for GMMs, we introduce basic notation and the necessary tools from density estimation, systems of polynomial inequalities, and approximation theory.

\subsection{Basic notation and definitions}
For a positive integer $k$, we write $[k]$ for the set $\{1, \ldots, k\}$.
Let $I = [\alpha, \beta]$ be an interval.
Then we denote the length of $I$ with $|I| = \beta - \alpha$.
For a measurable function $f : \R \to \R$, the $L_1$-norm of $f$ is $\norm{f}_1 = \int f(x) \diff x$.
All functions in this paper are measurable.

Since we work with systems of polynomial inequalities, it will be convenient for us to parametrize the normal distribution with the \emph{precision}, i.e., one over the standard deviation, instead of the variance.
Thus, throughout the paper we let 
\[
\normalpdf_{\mu, \tau} (x) \; \ed \; \frac{\tau}{\sqrt{2 \pi}} \, e^{- \tau^2 (x - \mu)^2 / 2}
\]
denote the pdf of a normal distribution with mean $\mu$ and precision $\tau$. 
A $k$-GMM is a distribution with pdf of the form $\sum_{i = 1}^k w_i \cdot \normalpdf_{\mu_i, \tau_i} (x)$, where we call the $w_i$ \emph{mixing weights} and require that the $w_i$ satisfy $w_i \geq 0$ and $\sum_{i = 1}^k w_i = 1$.
Thus a $k$-GMM is parametrized by $3k$ parameters; namely, the mixing weights, means, and precisions of each component.\footnote{Note that there are only $3k - 1$ degrees of freedom since the mixing weights must sum to 1.}
We let $\paramset_k = \mathcal{S}_k \times \R^k \times \R_{+}^k$ be the set of parameters, where $\mathcal{S}_k$ is the simplex in $k$ dimensions.
For each $\theta \in \paramset_k$, we identify it canonically with $\theta = (w, \mu, \tau)$ where $w, \mu,$ and $\tau$ are each vectors of length $k$, and we let
\[
\mixpdf_\theta (x) = \sum_{i = 1}^k w_i \cdot \normalpdf_{\mu_i, \tau_i} (x)
\]
be the pdf of the $k$-GMM with parameters $\theta$.

\subsection{Important tools}
We now turn our attention to results from prior work.

\subsubsection{Density estimation with piecewise polynomials}
Our algorithm uses the following result about density estimation of $k$-GMMs as a subroutine.
\begin{fact}[\cite{ADLS15}]
\label{thm:ADLS}
Let $k \geq 1$, $\epsilon > 0$ and $\delta > 0$.
There is an algorithm $\textsc{Estimate-Density}(k, \epsilon, \delta)$ that satisfies the following properties:
the algorithm
\begin{itemize}
\item takes $\Otilde((k + \log(1 / \delta)) / \epsilon^2)$ samples from the unknown distribution with pdf $f$,
\item runs in time $\Otilde((k + \log 1 / \delta) / \epsilon^2)$, and
\item returns $\pdens$, an $O(k)$-piecewise polynomial of degree $O(\log (1 / \epsilon))$ such that
\[
  \| f - \pdens \|_1 \; \leq \; 4 \cdot \OPT_k + \epsilon
\]
with probability at least $1 - \delta$, where
\[
  \OPT_k \; = \; \min_{\theta \in \paramset_k} \norm{f - \mixpdf_\theta}_1 \; .
\]
\end{itemize}
\end{fact}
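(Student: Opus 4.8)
The plan is to reconstruct the argument of \cite{ADLS15} in three stages: reduce $k$-GMM density estimation to a purely deterministic piecewise-polynomial \emph{fitting} problem, control the sample complexity through the $\mathcal{A}_K$-distance and VC theory, and then solve the fitting problem in nearly-linear time. I only indicate the structure; the estimates are routine.

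\emph{Reduction to piecewise polynomials.} First I would invoke the classical fact \cite{Timan63} that a single Gaussian pdf $\normalpdf_{\mu,\tau}$ is $\tfrac{\epsilon}{8k}$-close in $L_1$ to a piecewise polynomial with $O(1)$ pieces and degree $d=O(\log\tfrac{1}{\epsilon})$: outside an $O(\tau^{-1}\sqrt{\log(k/\epsilon)})$ window around $\mu$ the Gaussian carries negligible mass and the approximant is set to $0$, and on that window it is a truncated Taylor expansion of $e^{-x^2/2}$. Summing over components, every $h\in\setM_k$ is $\tfrac{\epsilon}{8}$-close in $L_1$ to the class $\setP$ of $O(k)$-piecewise degree-$d$ polynomials, so $\OPT_\setP\ed\min_{p\in\setP}\norm{f-p}_1\le\OPT_k+\tfrac{\epsilon}{8}$. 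It therefore suffices to output some $\pdens\in\setP$ with $\norm{f-\pdens}_1\le 3\,\OPT_\setP+\tfrac{\epsilon}{2}$, which then falls inside the claimed $4\,\OPT_k+\epsilon$.

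\emph{Samples and correctness.} For a function $h$, let $\norm{h}_{\mathcal{A}_K}$ be the maximum of $\sum_{j=1}^{K}\bigl|\int_{I_j}h\bigr|$ over collections of $K$ disjoint intervals $I_1,\dots,I_K$ \cite{DL01}. Then $\norm{h}_{\mathcal{A}_K}\le\norm{h}_1$ always, with equality whenever $h$ has at most $K-1$ sign changes, and a Vapnik--Chervonenkis bound for the set system of unions of $\le K$ intervals (of VC dimension $O(K)$) gives $\norm{f-\fhat}_{\mathcal{A}_K}\le\tfrac{\epsilon}{16}$ with probability $1-\delta$ from $n=\Otilde((K+\log\tfrac{1}{\delta})/\epsilon^2)$ samples, where $\fhat$ is the empirical distribution. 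I would take $K=O(kd)=\Otilde(k)$, large enough that the difference of any two members of $\setP$ --- itself an $O(kd)$-piecewise polynomial, hence with fewer than $K$ sign changes --- has $\norm{\cdot}_{\mathcal{A}_K}$ equal to its $L_1$-norm; this is precisely what makes the sample bound $\Otilde((k+\log\tfrac{1}{\delta})/\epsilon^2)$ suffice. The algorithm then returns $\pdens\in\setP$ that $\tfrac{\epsilon}{16}$-approximately minimizes $\norm{\fhat-\cdot}_{\mathcal{A}_K}$ over $\setP$. Writing $q^\ast$ for the $L_1$-projection of $f$ onto $\setP$, three uses of the triangle inequality for $\norm{\cdot}_{\mathcal{A}_K}$ --- via $\norm{\cdot}_{\mathcal{A}_K}\le\norm{\cdot}_1$, the (near-)optimality of $\pdens$, and $\norm{q^\ast-\pdens}_{\mathcal{A}_K}=\norm{q^\ast-\pdens}_1$ --- give $\norm{q^\ast-\pdens}_1\le 2\,\OPT_\setP+O(\epsilon)$, whence $\norm{f-\pdens}_1\le\norm{f-q^\ast}_1+\norm{q^\ast-\pdens}_1\le 3\,\OPT_\setP+O(\epsilon)$, as needed.

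\emph{Nearly-linear running time --- the main obstacle.} What remains, and is the technical heart of \cite{ADLS15}, is computing an $\tfrac{\epsilon}{16}$-approximate minimizer of $\norm{\fhat-p}_{\mathcal{A}_K}$ over $p\in\setP$ in time $\Otilde((k+\log\tfrac{1}{\delta})/\epsilon^2)$, i.e.\ nearly linear in the sample size rather than merely polynomial. The CDF of $\fhat$ is a step function with $n$ jumps, so I would restrict the $O(k)$ breakpoints of $p$ to $O(n)$ well-chosen candidate locations and run a dynamic program over them, in which, on a fixed segment, the best degree-$d$ polynomial with respect to the part of the $\mathcal{A}_K$-objective confined to that segment is the solution of a linear program in $d+1=O(\log\tfrac{1}{\epsilon})$ variables. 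The delicate points are that the $\mathcal{A}_K$-objective does not decompose cleanly across breakpoints --- the $K$ witnessing intervals may straddle them, which is handled by carrying in the dynamic-program state how many witnessing intervals have already been spent --- and that the number of states, transitions, and linear-program solves must be held to $\Otilde(n)$ via the discretization and by exploiting that only $O(k)$ pieces of degree $O(\log\tfrac{1}{\epsilon})$ are ever needed. Everything else is the approximation-theory and VC bookkeeping above.
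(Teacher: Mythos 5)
This statement is imported by the paper as a black-box \emph{Fact} from \cite{ADLS15}; the paper contains no proof of it, so what you have written is a reconstruction of that prior work rather than something comparable to an internal argument. Your first two stages do follow the route of \cite{ADLS15} (and its precursors \cite{CDSS13,CDSS14}): approximate each Gaussian component by a three-piece polynomial of degree $O(\log\frac{1}{\eps})$ so that the class $\setP$ of $O(k)$-piece, degree-$O(\log\frac{1}{\eps})$ piecewise polynomials is an $O(\eps)$-cover of $\setM_k$ in $L_1$, then use the VC inequality for the $\Ak$-distance with $K = O(k\log\frac{1}{\eps})$ together with the usual chain of triangle inequalities (comparing the output to the $L_1$-projection $q^\ast$ of $f$ onto $\setP$, and using that $q^\ast - \pdens$ has fewer than $K$ sign changes) to turn an approximate $\Ak$-minimizer against the empirical distribution into a guarantee of the form $c\cdot\OPT_k + O(\eps)$. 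The constant bookkeeping you sketch is consistent with the quoted $4\cdot\OPT_k + \eps$.

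The genuine gap is exactly where you locate the "technical heart": the nearly-linear running time $\Otilde((k+\log\frac{1}{\delta})/\eps^2)$, which is the whole point of \cite{ADLS15} over earlier work, and your sketch does not deliver it. A dynamic program over $\Otilde(n)$ candidate breakpoints whose state tracks the number of $\Ak$ witnessing intervals already spent has $\Omega(n^2)$ transitions (each new piece is delimited by a \emph{pair} of candidate breakpoints), with a polynomial-fitting subproblem per transition; this is essentially the earlier \cite{CDSS14}-style DP and runs in time polynomial in $n$, not $\Otilde(n)$, and your remark that the state/transition count "must be held to $\Otilde(n)$ via the discretization" is an assertion, not an argument. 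In addition, the per-segment subproblem is not a single linear program in $d+1$ variables: the witnessing subintervals inside the segment are themselves part of the inner maximization, so the objective is a maximum over a large family of interval collections and requires either a separation-oracle formulation or careful exploitation of the empirical CDF structure. The actual algorithm of \cite{ADLS15} abandons the DP entirely and instead uses an iterative greedy merging scheme: it maintains a partition into candidate pieces, repeatedly merges adjacent pieces with the smallest fitting error so that only $O(k)$ pieces survive each round, and makes only $\Otilde(n)$ calls to an efficient $\Ak$-projection routine on contiguous blocks of samples. As written, your proposal establishes the sample complexity and the $4\cdot\OPT_k+\eps$ guarantee with a polynomial-time algorithm, which is strictly weaker than the Fact being proved; to close the gap you would need to either reproduce the merging argument or supply a genuinely near-linear replacement for the DP.
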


\subsubsection{Systems of polynomial inequalities}
In order to fit a $k$-GMM to the density estimate, we solve a carefully constructed system of polynomial inequalities.
Formally, a system of polynomial inequalities is an expression of the form
\[S = (Q_1 x^{(1)} \in \R^{n_1}) \ldots (Q_v x^{(v)} \in \R^{n_v}) P(y, x^{(1)}, \ldots, x^{(v)})\]
where
\begin{itemize}
\item
the $y = (y_1 \ldots, y_\ell)$ are free variables,
\item
for all $i \in [v]$, the quantifier $Q_i$ is either $\exists$ or $\forall$,
\item $P(y, x^{(1)}, \ldots, x^{(v)})$ is a quantifier-free Boolean formula with $m$ predicates of the form
\[g_i (y, x^{(1)}, \ldots, x^{(v)}) \; \Delta_i \; 0\]
where each $g_i$ is a real polynomial of degree $d$, and where the relations $\Delta_i$ are of the form $\Delta_i \in \{< , \geq, =, \neq, \leq, <\}$.
We call such predicates \emph{polynomial predicates}.
\end{itemize}

We say that $y \in \R^\ell$ is a \emph{$\lambda$-approximate solution} for this system of polynomial inequalities if there exists a $y' \in \R^\ell$ such that $y'$ satisfies the system and $\norm{y - y'}_2 \leq \lambda$.
We use the following result by Renegar as a black-box:

\begin{fact}[\cite{Renegar92a,Renegar92b}]
\label{thm:Renegar}
Let $0 < \lambda < \eta$ and let $S$ be a system of polynomial inequalities as defined above.
Then there is an algorithm $\textsc{Solve-Poly-System}(S, \lambda, \eta)$ that finds a $\lambda$-approximate solution if there exists a solution $y$ with $\norm{y}_2 \leq \eta$.
If no such solution exists, the algorithm returns ``NO-SOLUTION''.
In any case, the algorithm runs in time
\[
  (m d)^{2^{O(v)}  \ell \prod_k n_k} \log \log \left( 3 + \frac{\eta}{\lambda} \right) \; .
\]
\end{fact}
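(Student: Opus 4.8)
The plan is to treat Fact~\ref{thm:Renegar} as a consequence of the quantifier-elimination machinery over real closed fields, processing the quantifier blocks $Q_v x^{(v)}, \ldots, Q_1 x^{(1)}$ from the innermost outward and, at the very end, extracting an approximate description of a witness for the free variables $y$. First I would isolate the unquantified core subroutine: given $m$ polynomials of degree at most $d$ in $N$ variables, decide whether the semialgebraic set they define is nonempty and, if so, output at least one point in each connected component. The standard route is the \emph{critical-point method}: after a symbolic infinitesimal perturbation that makes the relevant real variety a smooth, bounded, complete intersection in general position, the sample points can be taken to be the critical points of a generic linear projection (equivalently, of the squared distance to a generic point) restricted to the variety. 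These critical points are exactly the solutions of an auxiliary polynomial system of $N$ equations of degree $O(d)$, so by a B\'ezout-type bound their number is $(md)^{O(N)}$, and one can enumerate them symbolically, each described by a univariate representation (a rational parametrization by the roots of one polynomial of degree $(md)^{O(N)}$) in time $(md)^{O(N)}$.

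Next I would handle a single quantifier block by the parametrized version of the core subroutine. Eliminating the variables $x^{(i)} \in \R^{n_i}$ from a formula whose remaining variables $(y, x^{(1)}, \ldots, x^{(i-1)})$ play the role of parameters amounts to computing, uniformly in the parameters, the sign conditions of the input polynomials at the parametrized sample points; this produces a new quantifier-free formula in the remaining variables whose number of polynomials and degrees have grown, but in a controlled way: the degree is multiplied by a factor depending on $n_i$ and the number of polynomials is raised to a power depending on $n_i$. Carrying out the $v$ eliminations in sequence, these degree and cardinality parameters compose $v$ times, which is precisely where the doubly-exponential-in-$v$ factor $2^{O(v)}$ enters the exponent, multiplied by $\ell \prod_k n_k$ to account for the free variables and each eliminated block. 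After all blocks are removed we are left with a quantifier-free formula in $y$ alone; running the unquantified core subroutine on it, intersected with the ball $\norm{y}_2 \le \eta$ (itself one additional polynomial inequality), decides whether a bounded solution exists and, if so, returns a univariate representation of one such point $y'$.

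Finally, to turn the exact algebraic description of $y'$ into a $\lambda$-approximate rational solution $y$, I would use the fact that the coordinates of $y'$ are algebraic numbers whose degree and height are bounded by $(md)^{2^{O(v)} \ell \prod_k n_k}$; a root-separation (Mahler-type) bound then shows that $\Theta(\log\log(\eta/\lambda))$ steps of interval bisection or Newton-style refinement on the defining univariate polynomial suffice to pin each coordinate down to within $\lambda/\sqrt{\ell}$, which accounts for the $\log\log(3 + \eta/\lambda)$ factor in the stated running time. The main obstacle throughout is the quantitative bookkeeping under quantifier alternation: one must show that the infinitesimal perturbations, the genericity assumptions, and the parametrized critical-point systems can all be maintained \emph{uniformly} across successive eliminations without the degrees or the number of polynomials blowing up faster than the claimed $2^{O(v)}$ tower, and that the final witness description still has height small enough for the root-separation bound to yield the stated $\log\log$ dependence. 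For these estimates I would rely on Renegar's original analysis \cite{Renegar92a,Renegar92b} rather than re-deriving them.
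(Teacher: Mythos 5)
This statement is not proved in the paper at all: it is imported verbatim as a black-box citation to Renegar's work (\cite{Renegar92a,Renegar92b}), so there is no in-paper argument to compare your proposal against. Judged on its own terms, your sketch captures the right high-level architecture of such results --- eliminate the quantifier blocks one at a time with controlled growth in the number and degree of polynomials (which is where the $2^{O(v)}\,\ell \prod_k n_k$ exponent comes from), then decide feasibility of the resulting quantifier-free formula intersected with the ball $\norm{y}_2 \le \eta$ and refine an exact algebraic witness into a $\lambda$-approximate one --- and it is entirely appropriate that you defer the quantitative bookkeeping to Renegar, since the paper itself does exactly that.

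Two caveats are worth flagging. First, the machinery you describe (infinitesimal perturbation to a smooth bounded complete intersection plus the critical-point method) is the Basu--Pollack--Roy style route; Renegar's own algorithm is organized around u-resultant-based parametrized sign determination and multivariate resultants. Both yield bounds of the stated shape, but if the goal is to justify this exact statement as cited, one should either follow Renegar's construction or verify that the alternative machinery reproduces the same block-structured exponent $2^{O(v)} \ell \prod_k n_k$ (a sum $\sum_k n_k$ in the exponent, which some expositions give, would not match the stated product form). Second, your accounting for the $\log\log(3 + \eta/\lambda)$ factor is slightly off as written: plain interval bisection from scale $\eta$ down to accuracy $\lambda$ costs $\Theta(\log(\eta/\lambda))$ iterations, not $\Theta(\log\log(\eta/\lambda))$; the doubly logarithmic dependence comes from a quadratically convergent (Newton-type) refinement phase, and in the real-RAM setting one cannot literally invoke height/Mahler bounds on the input coefficients, so the separation estimates have to be taken from Renegar's analysis rather than from integer-height arguments. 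Neither issue undermines the overall plan, but both would need care if you intended the sketch to be more than a pointer to the original papers.
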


\subsubsection{Shape-restricted polynomials} 
Instead of fitting Gaussian pdfs to our density estimate directly, we work with piecewise polynomials as a proxy.
Hence we need a good approximation of the Gaussian pdf with a piecewise polynomial.
In order to achieve this, we use three pieces: two flat pieces that are constant 0 for the tails of the Gaussian, and a center piece that is given by the Taylor approximation.

Let let $T_d (x)$ be the degree-$d$ Taylor series approximation to $\normalpdf$ around zero. It is straightforward to show:
\begin{lemma}
\label{lem:taylor}
Let $\eps, K > 0$ and let $T_d (x)$ denote the degree-$d$ Taylor expansion of the Gaussian pdf \normalpdf{} around $0$. 
For $d = 2K \log (1 / \epsilon)$, we have
\[
\int_{2 \sqrt{\log 1 / \epsilon}}^{2 \sqrt{\log 1 / \epsilon}} |\normalpdf(x) - T_d (x)| \diff x \; \leq \; O\left( \epsilon^K \sqrt{\log (1 / \epsilon)} \right) \;.
\]
\end{lemma}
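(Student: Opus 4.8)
(For the statement to make sense the integral should of course be read over the symmetric interval $[-2\sqrt{\log 1/\eps},\,2\sqrt{\log 1/\eps}]$; I assume this below.) The plan is to reduce the claim to an elementary pointwise estimate on the tail of the exponential series and then integrate. Write $t = \log(1/\eps)$, $R = 2\sqrt{t}$, and let $m = \lfloor d/2 \rfloor$, so $m \ge Kt - 1$. Since $e^{-x^2/2} = \sum_{j\ge 0} \frac{(-x^2/2)^j}{j!}$ has only even powers of $x$, the degree-$d$ Taylor polynomial of $\normalpdf$ around $0$ is $T_d(x) = \frac{1}{\sqrt{2\pi}} \sum_{j=0}^{m} \frac{(-x^2/2)^j}{j!}$. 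Hence $|\normalpdf(x) - T_d(x)| = \frac{1}{\sqrt{2\pi}}\bigl|\, e^{-y} - \sum_{j=0}^m \frac{(-y)^j}{j!}\,\bigr|$ with $y = x^2/2 \ge 0$, i.e.\ exactly the error of the degree-$m$ Taylor polynomial of $g(y) = e^{-y}$. As every derivative of $g$ is $\pm e^{-y}$ and $e^{-\xi} \le 1$ on $[0,y]$, Taylor's theorem with the Lagrange remainder yields the \emph{unconditional} pointwise bound
\[
  |\normalpdf(x) - T_d(x)| \;\le\; \frac{1}{\sqrt{2\pi}} \cdot \frac{(x^2/2)^{m+1}}{(m+1)!}\,.
\]
(Equivalently, once $m \ge x^2/2$ the series is alternating with decreasing magnitudes from index $m$ on, giving the same first-omitted-term estimate.)

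Next I would integrate this bound over $[-R,R]$. The integral $\int_{-R}^{R} (x^2/2)^{m+1}\diff x = \frac{2 R^{2m+3}}{2^{m+1}(2m+3)}$ is elementary; substituting $R = 2\sqrt{t}$ shows that the left-hand side of the lemma is at most a constant times
\[
  \frac{(2t)^{m}\, t^{3/2}}{(m+1)!\,(2m+3)}\,.
\]
The factor $\frac{1}{2m+3} = \Theta\!\big(\tfrac{1}{Kt}\big)$ absorbs one power of $t$, leaving exactly $\sqrt{t} = \sqrt{\log(1/\eps)}$. It remains to control $\frac{(2t)^m}{(m+1)!}$: by the Stirling-type bound $(m+1)! \ge (m/e)^m$ and $m \approx Kt$ this is at most $(2te/m)^m = (2e/K)^m = \eps^{\,K\log(K/2e)}$.

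The one genuine subtlety, and the step I would flag, is that $\eps^{K\log(K/2e)} \le \eps^{K}$ holds only once $K$ exceeds an absolute constant (it suffices that $\log(K/2e)\ge 1$, i.e.\ $K \ge 2e^2$); for small $K$ the truncation index $m \approx Kt$ does not dominate the largest argument $x^2/2 = 2t$ on the interval, so the truncation error need not decay at all. Thus the lemma should be understood with the (implicitly intended) hypothesis that $K$ is at least a fixed constant — the only regime in which it is invoked later. Granting this, multiplying the three displays together gives $\int_{-R}^{R} |\normalpdf(x) - T_d(x)|\diff x = O\!\big(\eps^{K}\sqrt{\log(1/\eps)}\big)$, and everything remaining is routine bookkeeping with the constants.
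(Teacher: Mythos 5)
Your proof is correct, and it is essentially the intended argument: the paper offers no proof of this lemma (it is dismissed as "straightforward to show"), and the natural route is exactly the one you take --- reduce to the truncation error of the exponential series in $y = x^2/2$, bound it pointwise by the first omitted term $\frac{y^{m+1}}{(m+1)!}$ (Lagrange remainder or the alternating-series bound), integrate over $[-2\sqrt{\log(1/\epsilon)},\,2\sqrt{\log(1/\epsilon)}]$, and finish with a Stirling estimate. Your computations along the way are accurate.

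The caveat you flag is genuine and, in fact, the only substantive point: as literally stated for all $K>0$ the lemma is false. For example with $K=1$ the truncation index $m \approx \log(1/\epsilon)$ lies below the largest argument $x^2/2 = 2\log(1/\epsilon)$, and the integral form of the remainder shows the error near the endpoints is of order $(2e)^{\log(1/\epsilon)} = \epsilon^{-\log(2e)}$, which diverges rather than decaying; and even for $2 < K < 2e^2$ the first-omitted-term bound (which is tight up to constants in this regime) only yields exponent $K\log(K/2e) < K$. So the correct reading is that the bound holds once $K$ exceeds an absolute constant (your $K \geq 2e^2$ suffices), or more precisely with exponent $K\log(K/2e)$ in place of $K$. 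This does not affect anything downstream in the paper: the lemma is only invoked in the definition of the shape-restricted polynomials $\polystandardpdf_\epsilon$, where $K$ is chosen as a sufficiently large constant to drive the error below $\epsilon/4$, and such a $K$ exists by your argument.
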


\begin{definition}[Shape-restricted polynomials]
Let $K$ be such that
\[
  \int_{-2 \sqrt{\log 1 / \epsilon}}^{2 \sqrt{\log 1 / \epsilon}} |\normalpdf(x) - T_{2 K \log(1 / \eps)}(x)| \diff x \; < \; \frac{\epsilon}{4} \; .
\]
From Lemma \ref{lem:taylor} we know that such a $K$ always exists.
For any $\eps > 0$, let $\polystandardpdf_{\epsilon} (x)$ denote the piecewise polynomial function defined as follows:
\[
  \polystandardpdf_{\epsilon}(x) \; = \; \begin{cases} T_{2 K \log(1 / \eps)}(x)
  \quad & \text{if } x \in [-2 \sqrt{\log (1 / \epsilon)}, \, 2 \sqrt{\log (1 / \epsilon)}] \\ 0 & \text{otherwise} \end{cases} \; .
\]

For any set of parameters $\theta \in \paramset_k$, let 
\[
  \polypdf_{\epsilon, \theta} (x) \; = \; \sum_{i = 1}^k w_i \cdot \tau_i \cdot \polystandardpdf_\epsilon (\tau_i (x - \mu_i)) \; .
\]
\end{definition}

It is important to note that $\polypdf_{\epsilon, \theta} (x)$ is a polynomial \emph{both} as a function of $\theta$ and as a function of $x$.
This allows us to fit such shape-restricted polynomials with a system of polynomial inequalities.
Moreover, our shape-restricted polynomials are good approximations to GMMs.
By construction, we get the following result:
\begin{lemma}
\label{lem:goodapprox}
Let $\theta \in \paramset_k$. Then $\norm{\mixpdf_\theta - \polypdf_{\epsilon, \theta}}_1 \leq \epsilon$.
\end{lemma}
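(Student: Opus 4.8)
The plan is to reduce the claim to a single Gaussian component and then to the standard Gaussian. Writing $\normalpdf_{\mu_i,\tau_i}(x) = \tau_i\,\normalpdf(\tau_i(x-\mu_i))$, we have
\[
\mixpdf_\theta(x) - \polypdf_{\epsilon,\theta}(x) \;=\; \sum_{i=1}^k w_i\,\tau_i\left(\normalpdf(\tau_i(x-\mu_i)) - \polystandardpdf_\epsilon(\tau_i(x-\mu_i))\right),
\]
so by the triangle inequality
\[
\norm{\mixpdf_\theta - \polypdf_{\epsilon,\theta}}_1 \;\le\; \sum_{i=1}^k w_i \int_\R \tau_i\,\left|\normalpdf(\tau_i(x-\mu_i)) - \polystandardpdf_\epsilon(\tau_i(x-\mu_i))\right| \diff x .
\]
For each term I would substitute $u = \tau_i(x-\mu_i)$, so that $\tau_i\diff x = \diff u$ and the $i$-th integral becomes exactly $\int_\R |\normalpdf(u) - \polystandardpdf_\epsilon(u)|\diff u$, with no dependence on $\mu_i$ or $\tau_i$. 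Since $\sum_i w_i = 1$, this yields $\norm{\mixpdf_\theta - \polypdf_{\epsilon,\theta}}_1 \le \norm{\normalpdf - \polystandardpdf_\epsilon}_1$, and it remains to bound the latter by $\epsilon$.

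For this last bound I would split $\R$ at $\pm 2\sqrt{\log(1/\epsilon)}$. On the central interval $[-2\sqrt{\log(1/\epsilon)},\,2\sqrt{\log(1/\epsilon)}]$ we have $\polystandardpdf_\epsilon = T_{2K\log(1/\epsilon)}$, so this part contributes $\int_{-2\sqrt{\log(1/\epsilon)}}^{2\sqrt{\log(1/\epsilon)}} |\normalpdf(u) - T_{2K\log(1/\epsilon)}(u)|\diff u < \epsilon/4$ directly from the defining property of $K$ in the definition of shape-restricted polynomials. Outside that interval $\polystandardpdf_\epsilon \equiv 0$, so the contribution equals the Gaussian tail mass $\int_{|u| > 2\sqrt{\log(1/\epsilon)}} \normalpdf(u)\diff u$; a standard sub-Gaussian tail bound makes this at most $2\,e^{-\frac12\left(2\sqrt{\log(1/\epsilon)}\right)^2} = 2\epsilon^2$, which is below $3\epsilon/4$ for all sufficiently small $\epsilon$ (in particular throughout the regime of interest). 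Adding the two pieces gives $\norm{\normalpdf - \polystandardpdf_\epsilon}_1 < \epsilon$, which finishes the argument.

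There is no serious obstacle here; the lemma is essentially bookkeeping layered on top of the definition. The one point that requires care is the change of variables: the factor $\tau_i$ multiplying $\polystandardpdf_\epsilon(\tau_i(\cdot - \mu_i))$ in the definition of $\polypdf_{\epsilon,\theta}$ is precisely the Jacobian that makes each component's $L_1$ error invariant under the rescaling $x\mapsto\tau_i(x-\mu_i)$, so that every term collapses to the single quantity $\norm{\normalpdf - \polystandardpdf_\epsilon}_1$; without that prefactor the reduction would break. The only other thing to watch is the allocation of the $\epsilon$ budget between the Taylor-approximation error ($\epsilon/4$) and the tail mass, which the choice of truncation point $2\sqrt{\log(1/\epsilon)}$ keeps comfortably small.
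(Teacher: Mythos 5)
Your proposal is correct and follows essentially the same route as the paper: a triangle inequality over the $k$ components, the change of variables $u = \tau_i(x-\mu_i)$ using the $\tau_i$ prefactor as the Jacobian, and then the bound $\norm{\normalpdf - \polystandardpdf_\epsilon}_1 \leq \epsilon$ coming from the definition of $\polystandardpdf_\epsilon$. The only difference is that you spell out the last step (Taylor error $\epsilon/4$ on the central interval plus the Gaussian tail mass $2\epsilon^2$ outside $[-2\sqrt{\log(1/\epsilon)},\,2\sqrt{\log(1/\epsilon)}]$, valid for $\epsilon$ small), which the paper leaves implicit.
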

\begin{proof}
We have
\begin{align*}
\| \mixpdf_\theta - \polypdf_{\epsilon, \theta} \|_1  &= \int | \mixpdf_\theta (x) - \polypdf_{\epsilon, \theta} (x) | \diff x \\
&\stackrel{(a)}{\leq} \sum_{i = 1}^k w_i \int | \tau_i \cdot \normalpdf(\tau_i (x - \mu_i)) - \tau_i \cdot \polystandardpdf_{\epsilon} (\tau_i (x - \mu_i)) | \diff x \\
&\stackrel{(b)}{\leq} \sum_{i = 1}^k w_i \cdot \| \normalpdf - \polystandardpdf_\eps \|_1 \\
&\stackrel{(c)}{\leq} \sum_{i = 1}^k w_i \cdot \eps\\
& \leq \epsilon \; .
\end{align*}
Here, (a) follows from the triangle inequality, (b) from a change of variables, and (c) from the definition of $\polystandardpdf_\eps$.
\end{proof}

\subsubsection{\texorpdfstring{$\Ak$}{AK}-norm and intersections of \texorpdfstring{$k$}{k}-GMMs}
In our system of polynomial inequalities, we must encode the constraint that the shape-restricted polynomials are a good fit to the density estimate.
For this, the following notion of distance between two densities will become useful.
\begin{definition}[$\Ak$-norm] Let $\familyI_K$ denote the family of all sets of $K$ disjoint intervals $\setI = \{ I_1, \ldots, I_K \}$.
For any measurable function $f: \R \to \R$, we define the \emph{$\Ak$-norm} of $f$ to be
\[\| f \|_{\Ak} \ed \sup_{\setI \in \familyI_K} \sum_{I \in \setI} \left| \int_I f(x) \diff x \right| \; .\]
\end{definition}
For functions with few zero-crossings, the \Ak-norm is close to the $L_1$-norm.
More formally, we have the following properties, which are easy to check:
\begin{lemma}
\label{fact:akproperties}
Let $f : \R \to \R$ be a real function.
Then for any $K \geq 1$, we have
\[
  \norm{f}_\Ak \; \leq \; \norm{f}_1 \; .
\]
Moreover, if $f$ is continuous and there are at most $K - 1$ distinct values $x$ for which $f(x) = 0$, then
\[
  \norm{f}_\Ak \; = \; \norm{f}_1 \; .
\]
\end{lemma}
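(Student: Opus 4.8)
The plan is to prove the two claims separately: the inequality $\norm{f}_\Ak \le \norm{f}_1$ holds with no hypotheses and follows from the triangle inequality, while the equality under the zero‑crossing hypothesis follows by exhibiting an explicit optimal family of intervals.

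For the inequality, I would fix an arbitrary family $\setI = \{I_1, \dots, I_K\} \in \familyI_K$ and bound
\[
  \sum_{I \in \setI} \left| \int_I f(x) \diff x \right| \;\le\; \sum_{I \in \setI} \int_I |f(x)| \diff x \;\le\; \int_\R |f(x)| \diff x \;=\; \norm{f}_1 ,
\]
where the first step is the triangle inequality applied to each integral and the second uses that the $I_j$ are pairwise disjoint (so $\bigcup_j I_j \subseteq \R$) together with $|f| \ge 0$. Taking the supremum over $\setI \in \familyI_K$ gives $\norm{f}_\Ak \le \norm{f}_1$; note this requires neither continuity nor the bound on zeros.

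For the equality, suppose $f$ is continuous and let $z_1 < \dots < z_m$ be the points where $f$ vanishes, so $m \le K - 1$. These points cut $\R$ into the $m+1$ intervals $J_0 = (-\infty, z_1)$, $J_1 = (z_1, z_2)$, \dots, $J_m = (z_m, \infty)$. The key observation is that $f$ has constant sign on each $J_i$: it is continuous and nonvanishing there, so by the intermediate value theorem it cannot take both positive and negative values. Hence $\left| \int_{J_i} f \right| = \int_{J_i} |f|$ for each $i$, and since $\{z_1, \dots, z_m\}$ has measure zero,
\[
  \sum_{i=0}^{m} \left| \int_{J_i} f(x) \diff x \right| \;=\; \sum_{i=0}^{m} \int_{J_i} |f(x)| \diff x \;=\; \int_\R |f(x)| \diff x \;=\; \norm{f}_1 .
\]
Since $m + 1 \le K$, the family $\{J_0, \dots, J_m\}$ can be enlarged to a member of $\familyI_K$ (for instance by subdividing one $J_i$ into several subintervals, on each of which $f$ still has constant sign, so the sum above is unchanged). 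Therefore $\norm{f}_\Ak \ge \norm{f}_1$, which combined with the first part yields $\norm{f}_\Ak = \norm{f}_1$.

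The only mild technical points are the unbounded intervals $J_0, J_m$ — harmless since $f$ is integrable, or one may truncate to large bounded intervals and pass to the limit — and the bookkeeping to produce exactly $K$ intervals rather than $m+1$. Neither is a real obstacle; the substance of the argument is the sign‑constancy of $f$ between consecutive zeros via the intermediate value theorem.
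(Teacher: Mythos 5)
Your proof is correct, and it is exactly the standard argument the paper has in mind: the paper states this lemma without proof (``easy to check''), and your two steps --- the triangle inequality plus disjointness for $\norm{f}_\Ak \leq \norm{f}_1$, and sign-constancy of $f$ between consecutive zeros (via the intermediate value theorem) to exhibit a family of at most $K$ intervals achieving $\norm{f}_1$ --- are the intended verification. Your handling of the minor technicalities (unbounded end intervals, padding or subdividing to get exactly $K$ intervals) is also fine.
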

The second property makes the $\Ak$-norm useful for us because linear combinations of Gaussians have few zeros.
\begin{fact}[\cite{KMV10} Proposition 7]
\label{fact:gmmintersection}
Let $f$ be a linear combination of $k$ Gaussian pdfs with variances $\sigma_1, \ldots, \sigma_k$ so that $\sigma_i \neq \sigma_j$ for all $i \neq j$.
Then there are at most $2(k - 1)$ distinct values $x$ such that $f(x) = 0$. 
\end{fact}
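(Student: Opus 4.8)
\emph{Proof idea.}
The plan is to induct on $k$, peeling off the smallest-variance Gaussian at each step and using that a Gaussian density is a P\'olya frequency function (so that convolution with it is variation-diminishing). Write $f = \sum_{i=1}^k \alpha_i\,\gamma_{v_i}(\cdot - \mu_i)$, where $\gamma_v$ denotes the mean-zero Gaussian density of variance $v$; we may assume every $\alpha_i \neq 0$ (else there are fewer than $k$ terms) and, after reordering, $v_1 < v_2 < \cdots < v_k$. The crucial manipulation is the Gaussian semigroup identity $\gamma_v = \gamma_{v_1} * \gamma_{v - v_1}$, valid for $v > v_1$: for each $i \geq 2$ it gives $\alpha_i\,\gamma_{v_i}(\cdot - \mu_i) = \gamma_{v_1} * \big(\alpha_i\,\gamma_{v_i - v_1}(\cdot - \mu_i)\big)$, while $\alpha_1\,\gamma_{v_1}(\cdot - \mu_1) = \gamma_{v_1} * (\alpha_1 \delta_{\mu_1})$. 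Hence $f = \gamma_{v_1} * \nu$ with $\nu = \alpha_1 \delta_{\mu_1} + \phi$, where $\phi \ed \sum_{i=2}^k \alpha_i\,\gamma_{v_i - v_1}(\cdot - \mu_i)$ is itself a nonzero linear combination of $k - 1$ Gaussian densities with pairwise distinct variances. (Because the slowest-decaying summand dominates as $|x| \to \infty$, both $f$ and $\phi$ have a definite sign near $\pm\infty$, hence only finitely many zeros, so every count below is finite.)

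The next step is the strong variation-diminishing property of the totally positive kernel $\gamma_{v_1}$: the number of zeros of $\gamma_{v_1} * \nu$, counted with multiplicity, is at most $S^-(\nu)$, the number of sign changes of the signed measure $\nu$. Since $\nu$ differs from the continuous function $\phi$ only by the single atom $\alpha_1 \delta_{\mu_1}$, and inserting one atom changes the sign-change count by at most two, $S^-(\nu) \leq S^-(\phi) + 2$. For the continuous function $\phi$, the number of sign changes is at most its number of distinct zeros (a zero lies between any two consecutive sign changes, and these zeros are distinct). By the induction hypothesis $\phi$ has at most $2(k-2)$ distinct zeros, so altogether $f$ has at most $S^-(\nu) \leq S^-(\phi) + 2 \leq 2(k-2) + 2 = 2(k-1)$ zeros. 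The base case $k = 1$ is immediate, since a nonzero multiple of a Gaussian is nowhere zero.

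The step I expect to be the main obstacle is obtaining the variation-diminishing property in the \emph{strong} form above, i.e.\ that Gaussian convolution bounds the number of zeros \emph{with multiplicity} of the output by the number of \emph{sign changes} of the input (rather than merely the number of sign changes of the output) --- this is exactly what keeps the ``$+2$ per Gaussian'' accounting honest even when $f$ has tangential zeros, and it is the content of the classical total-positivity theory for P\'olya frequency functions (Schoenberg; Karlin, \emph{Total Positivity}). One also has to be slightly careful about the definition of $S^-$ for signed measures carrying atoms. It is worth noting that the obvious alternative does \emph{not} work: writing $f = \sum_i \alpha_i e^{-b_i x^2 + \lambda_i x}$, dividing by one exponential, and applying Rolle's theorem repeatedly gives only an exponential $2^k$-type bound, because differentiating $e^{(\mathrm{quadratic})}$ multiplies its polynomial prefactor by a genuine linear factor, so the prefactor degrees grow with every differentiation.
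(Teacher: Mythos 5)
The paper itself gives no proof of this statement: it is imported as a black-box Fact from \cite{KMV10} (Proposition 7), so there is no in-paper argument to compare against. Your proposal is correct and is essentially the standard argument behind that proposition: writing $f = \gamma_{v_1} \ast \left( \alpha_1 \delta_{\mu_1} + \phi \right)$ via the Gaussian semigroup, noting the atom adds at most two sign changes, and invoking the variation-diminishing property of Gaussian (totally positive) convolution, where---as you correctly flag---one needs the strong Schoenberg/Karlin form bounding zeros of the output \emph{with multiplicity} by sign changes of the input measure in order to handle tangential zeros and obtain a bound on distinct zeros rather than mere crossings. The only caveat, shared with the cited statement itself, is the implicit nondegeneracy assumption that the combination is not identically zero (equivalently, your reduction to all $\alpha_i \neq 0$), without which the claim is vacuous.
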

These facts give the following corollary.
\begin{corollary}
\label{cor:gmmak}
Let $\theta_1, \theta_2 \in \paramset_k$ and let $K \geq 4k$.
Then
\[
  \norm{\mixpdf_{\theta_1} - \mixpdf_{\theta_2}}_\Ak \; = \; \norm{\mixpdf_{\theta_1} - \mixpdf_{\theta_2}}_1 \; .
\]
\end{corollary}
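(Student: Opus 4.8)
The plan is to reduce the corollary to Lemma~\ref{fact:akproperties} via a count of the zero-crossings of $h \ed \mixpdf_{\theta_1} - \mixpdf_{\theta_2}$. First observe that $h$ is continuous, being a finite linear combination of Gaussian pdfs, and that it is in fact a linear combination of at most $2k$ Gaussian pdfs: the $k$ components of $\mixpdf_{\theta_1}$ with coefficients $w_i^{(1)}$ together with the $k$ components of $\mixpdf_{\theta_2}$ with coefficients $-w_i^{(2)}$. If $h \equiv 0$ then $\norm{h}_\Ak = \norm{h}_1 = 0$ and there is nothing to prove, so assume $h \not\equiv 0$.

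Next I would treat the generic case in which the $2k$ precisions $\tau_1^{(1)}, \ldots, \tau_k^{(1)}, \tau_1^{(2)}, \ldots, \tau_k^{(2)}$ (equivalently, the $2k$ variances) are pairwise distinct. Then Fact~\ref{fact:gmmintersection}, applied with its ``$k$'' set to $2k$, shows that $h$ has at most $2(2k-1) = 4k-2$ distinct zeros. Since $K \geq 4k$ we have $4k - 2 \leq K - 1$, so the hypotheses of the second part of Lemma~\ref{fact:akproperties} are satisfied and $\norm{h}_\Ak = \norm{h}_1$, which is the claim in this case.

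To remove the genericity assumption I would pass to a limit. Given arbitrary $\theta_1, \theta_2 \in \paramset_k$, perturb only the precisions to obtain sequences $\theta_1^{(n)} \to \theta_1$ and $\theta_2^{(n)} \to \theta_2$ in $\paramset_k$ such that for each $n$ all $2k$ precisions appearing in $\theta_1^{(n)}$ and $\theta_2^{(n)}$ are pairwise distinct and positive; arbitrarily small such perturbations clearly exist. Set $h_n \ed \mixpdf_{\theta_1^{(n)}} - \mixpdf_{\theta_2^{(n)}}$. Since the Gaussian family is continuous in $L_1$ with respect to its parameters, $\norm{h_n - h}_1 \to 0$. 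Combining $\norm{\,\cdot\,}_\Ak \leq \norm{\,\cdot\,}_1$ from Lemma~\ref{fact:akproperties} with the triangle inequality for the $\Ak$-seminorm gives $\bigl|\norm{h_n}_\Ak - \norm{h}_\Ak\bigr| \leq \norm{h_n - h}_1 \to 0$, and similarly $\norm{h_n}_1 \to \norm{h}_1$. By the generic case, $\norm{h_n}_\Ak = \norm{h_n}_1$ for every $n$, so letting $n \to \infty$ yields $\norm{h}_\Ak = \norm{h}_1$.

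The only point requiring real care is precisely the possibility that $\theta_1$ and $\theta_2$ share a precision (or repeat one), since Fact~\ref{fact:gmmintersection} is stated only for distinct variances; this is exactly what the perturbation step handles. The supporting continuity facts — that changing the precision of a Gaussian pdf changes it by $o(1)$ in $L_1$, hence $\norm{\mixpdf_{\theta'} - \mixpdf_\theta}_1 \to 0$ when $\theta' \to \theta$ with weights bounded by $1$ — follow from Scheff\'{e}'s lemma or a direct estimate and are routine. No genuinely hard step arises beyond plugging the bound $4k-2$ into Lemma~\ref{fact:akproperties}.
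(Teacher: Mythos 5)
Your proof is correct and follows essentially the same route as the paper's: perturb the parameters so that all $2k$ variances become distinct, apply Fact~\ref{fact:gmmintersection} (with $2k$ components, giving at most $4k-2 \leq K-1$ zeros) together with Lemma~\ref{fact:akproperties}, and then pass to the limit using $L_1$-continuity of the mixtures in their parameters. You spell out the limiting step (via $\abs{\norm{h_n}_\Ak - \norm{h}_\Ak} \leq \norm{h_n - h}_1$) in more detail than the paper does, but the argument is the same.
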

\begin{proof}
For any $\gamma > 0$, let $\theta_1^\gamma, \theta_2^\gamma$ be so that $\| \theta_i^\gamma - \theta_i \|_\infty \leq \gamma$ for $i \in \{1, 2\}$,
and so that the variances of all the components in $\theta_1^\gamma, \theta_2^\gamma$ are all distinct.
Lemma \ref{fact:akproperties} and Fact \ref{fact:gmmintersection} together imply that $\|M_{\theta_1^\gamma} - M_{\theta_2^\gamma} \|_1 = \|M_{\theta_1^\gamma} - M_{\theta_2^\gamma} \|_\Ak$.
Letting $\gamma \to 0$ the LHS tends to $\norm{\mixpdf_{\theta_1} - \mixpdf_{\theta_2}}_\Ak$, and the RHS tends to $\norm{\mixpdf_{\theta_1} - \mixpdf_{\theta_2}}_1$.
So we get that $\norm{\mixpdf_{\theta_1} - \mixpdf_{\theta_2}}_\Ak \; = \; \norm{\mixpdf_{\theta_1} - \mixpdf_{\theta_2}}_1$, as claimed.
\end{proof}


\section{Proper learning in the well-behaved case}
\label{sec:wbh}
In this section, we focus on properly learning a mixture of $k$ Gaussians under the assumption that we have a ``well-behaved'' density estimate.
We study this case first in order to illustrate our use of shape-restricted polynomials and the \Ak-norm.
Intuitively, our notion of ``well-behavedness'' requires that there is a good GMM fit to the density estimate such that the mixture components and the overall mixture distribution live at roughly the same scale.
Algorithmically, this allows us to solve our system of polynomial inequalities with sufficient accuracy.
In Section \ref{sec:generalalg}, we remove this assumption and show that our algorithm works for \emph{all} univariate mixtures of Gaussians and requires no special assumptions on the density estimation algorithm.

\subsection{Overview of the Algorithm}
The first step of our algorithm is to learn a good piecewise-polynomial approximation $\pdens$ for the unknown density $f$.
We achieve this by invoking recent work on density estimation \cite{ADLS15}.
Once we have obtained a good density estimate, it suffices to solve the following optimization problem:
\[
\min_{\theta \in \paramset_k} \| \pdens - \mixpdf_{\theta} \|_1 \; .
\]
Instead of directly fitting a mixture of Gaussians, we use a mixture of shape-restricted piecewise polynomials as a proxy and solve
\begin{align*}
\min_{\theta \in \paramset_k} \norm{ \pdens - \polypdf_{\eps, \theta}}_1 \; .
\end{align*}
Now all parts of the optimization problem are piecewise polynomials.
However, we will see that we cannot directly work with the $L_1$-norm without increasing the size of the corresponding system of polynomial inequalities substantially.
Hence we work with the $\Ak$-norm instead and solve
\begin{align*}
\min_{\theta \in \paramset_k} \norm{ \pdens - \polypdf_{\eps, \theta}}_\Ak \; .
\end{align*}
We approach this problem by converting it to a system of polynomial inequalities with 
\begin{enumerate}
\item $O(k)$ free variables: one per component weight, mean, and precision, 
\item Two levels of quantification: one for the intervals of the \Ak-norm, and one for the breakpoints of the shape-restricted polynomial.
Each level quantifies over $O(k)$ variables.
\item A Boolean expression on polynomials with $k^{O(k)}$ many constraints.
\end{enumerate}
Finally, we use Renegar's algorithm to approximately solve our system in time $(k \log 1 / \epsilon)^{O(k^4)}$.
Because we only have to consider the well-behaved case, we know that finding a polynomially good approximation to the parameters will yield a sufficiently close approximation to the true underlying distribution.

\subsection{Density estimation, rescaling, and well-behavedness}
\paragraph{Density estimation}
As the first step of our algorithm, we obtain an agnostic estimate of the unknown probability density $f$.
For this, we run the density estimation subroutine $\textsc{Estimate-Density}(k, \epsilon, \delta)$ from Fact \ref{thm:ADLS}.
Let $\pdens'$ be the resulting $O(k)$-piecewise polynomial.
In the following, we condition on the event that
\[
  \| f - \pdens' \|_1 \; \leq \; 4 \cdot \OPT_k + \epsilon \; .
\]
which occurs with probability $1 - \delta$.

\paragraph{Rescaling}
Since we can solve systems of polynomial inequalities only with bounded precision,  we have to post-process the density estimate.
For example, it could be the case that some mixture components have extremely large mean parameters $\mu_i$, in which case accurately approximating these parameters could take an arbitrary amount of time.
Therefore, we shift and rescale $\pdens'$ so that its non-zero part is in $[-1, 1]$ (note that $\pdens$ can only have finite support because it consists of a bounded number of pieces).

Let $\pdens$ be the scaled and shifted piecewise polynomial.
Since the $L_1$-norm is invariant under shifting and scaling, it suffices to solve the following problem
\[
\min_{\theta \in \paramset_k} \| \pdens - \mixpdf_{\theta} \|_1 \; .
\]
Once we have solved this problem and found a corresponding  $\theta$ with
\[
  \norm{\pdens - \mixpdf_{\theta'}}_1 \; \leq \; C
\]
for some $C \geq 0$, we can undo the transformation applied to the density estimate and get a $\theta' \in \paramset_{k}$ such that
\[
  \norm{\pdens' - \mixpdf_{\theta'}}_1 \; \leq \; C \; .
\]

\paragraph{Well-behavedness}
While rescaling the density estimate $\pdens'$ to the interval $[-1, 1]$ controls the size of the mean parameters $\mu_i$, the precision parameters $\tau_i$ can still be arbitrarily large.
Note that for a mixture component with very large precision, we also have to approximate the corresponding $\mu_i$ very accurately.
For clarity of presentation, we ignore this issue in this section and assume that the density estimate is \emph{well-behaved}.
This assumption allows us to control the accuracy in Renegar's algorithm appropriately.
We revisit this point in Section \ref{sec:generalalg} and show how to overcome this limitation.
Formally, we introduce the following assumption:

\begin{definition}[Well-behaved density estimate]
\label{def:wellbehaved}
Let $\pdens'$ be a density estimate and let $\pdens$ be the rescaled version that is supported on the interval $[-1, 1]$ only.
Then we say $\pdens$ is $\gamma$-well-behaved if there is a set of GMM parameters $\theta \in \paramset_k$ such that
\[
  \norm{\pdens - \mixpdf_{\theta}}_1 \; = \; \min_{\theta^* \in \paramset_k} \norm{\pdens - \mixpdf_{\theta^*}}_1
\]
and $\tau_i \leq \gamma$ for all $i \in [k]$.
\end{definition}

The well-behaved case is interesting in its own right because components with very high precision parameter, i.e., very spiky Gaussians, can often be learnt by clustering the samples.\footnote{However, very spiky Gaussians can still be very close, which makes this approach challenging in some cases -- see Section \ref{sec:generalalg} for details.}
Moreover, the well-behaved case illustrates our use of shape-restricted polynomials and the $\Ak$-distance without additional technical difficulties.

\subsection{The \texorpdfstring{\Ak}{AK}-norm as a proxy for the \texorpdfstring{$L_1$}{L1}-norm}
Computing the $L_1$-distance between the density estimate $\pdens$ and our shape-restricted polynomial approximation $\polypdf_{\eps, \theta}$ \emph{exactly} requires knowledge of the zeros of the piecewise polynomial $\pdens - \polypdf_{\epsilon, \theta}$.
In a system of polynomial inequalities, these zeros can be encoded by introducing auxiliary variables.
However, note that we cannot simply introduce one variable per zero-crossing without affecting the running time significantly: since the polynomials have degree $O(\log 1/\eps)$, this would lead to $O(k \log 1 / \eps)$ variables, and hence the running time of Renegar's algorithm would depend exponentially on $O(\log 1/\eps)$.
Such an exponential dependence on $\log (1/\eps)$ means that the running time of solving the system of polynomial inequalities becomes super-polynomial in $\frac{1}{\eps}$, while our goal was to avoid any polynomial dependence on $\frac{1}{\eps}$ when solving the system of polynomial inequalities.

Instead, we use the $\Ak$-norm as an \emph{approximation} of the $L_1$-norm.
Since both $\polypdf_{\epsilon, \theta}$ and $\pdens$ are close to mixtures of $k$ Gaussians, their difference only has $O(k)$ zero crossings that contribute significantly to the $L_1$-norm.
More formally, we should have $\| \pdens - \polypdf_{\epsilon, \theta}\|_{1} \approx \| \pdens - \polypdf_{\epsilon, \theta} \|_{\Ak}$.
And indeed:
\begin{lemma}
\label{lem:Akapprox}
Let $\eps > 0$, $k \geq 2$, $\theta \in \paramset_k$, and $K = 4k$.
Then we have 
\[
  0 \; \leq \; \| \pdens - \polypdf_{\epsilon, \theta} \|_{1} - \| \pdens - \polypdf_{\epsilon, \theta} \|_\Ak \; \leq \; 8 \cdot \OPT_k + O(\epsilon) \; .\]
\end{lemma}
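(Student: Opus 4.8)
The lower bound $\norm{\pdens - \polypdf_{\epsilon,\theta}}_1 \ge \norm{\pdens - \polypdf_{\epsilon,\theta}}_\Ak$ is immediate from the first part of Lemma~\ref{fact:akproperties}, which holds for every measurable function. So the work is entirely in the upper bound, and the plan is to compare the error function $g \ed \pdens - \polypdf_{\epsilon,\theta}$ against an auxiliary function that (a) has few sign changes and (b) is $L_1$-close to $g$. The natural candidate is a \emph{difference of two $k$-GMMs}: let $\theta_1 \in \paramset_k$ achieve $\OPT' \ed \min_{\theta^* \in \paramset_k} \norm{\pdens - \mixpdf_{\theta^*}}_1$, and set $h \ed \mixpdf_{\theta_1} - \mixpdf_\theta$. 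The point of this choice is that, since $K = 4k$, Corollary~\ref{cor:gmmak} gives $\norm{h}_\Ak = \norm{h}_1$; i.e., for $h$ the $\Ak$-norm and the $L_1$-norm already coincide.

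Given such an $h$, I would run two triangle inequalities in parallel. On the one hand $\norm{g}_1 \le \norm{h}_1 + \norm{g - h}_1$. On the other hand, using the triangle inequality for the $\Ak$-norm (which follows directly from its definition) together with $\norm{\cdot}_\Ak \le \norm{\cdot}_1$ from Lemma~\ref{fact:akproperties}, we get $\norm{g}_\Ak \ge \norm{h}_\Ak - \norm{h - g}_\Ak \ge \norm{h}_\Ak - \norm{g - h}_1$. Subtracting the second from the first and using $\norm{h}_1 = \norm{h}_\Ak$ yields
\[
  \norm{\pdens - \polypdf_{\epsilon,\theta}}_1 - \norm{\pdens - \polypdf_{\epsilon,\theta}}_\Ak \;\le\; 2\,\norm{g - h}_1 \;.
\]
A pleasant feature of this step is that it requires continuity only of $h$ (to invoke the equality case of Lemma~\ref{fact:akproperties}), not of $g$, which matters because $\pdens$ and $\polypdf_{\epsilon,\theta}$ may both be discontinuous.

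It then remains to bound $\norm{g - h}_1$. Since $g - h = (\pdens - \mixpdf_{\theta_1}) + (\mixpdf_\theta - \polypdf_{\epsilon,\theta})$, the triangle inequality together with Lemma~\ref{lem:goodapprox} gives $\norm{g - h}_1 \le \OPT' + \epsilon$. Finally I would bound $\OPT'$ in terms of $\OPT_k$: taking the best $k$-GMM fit $\mixpdf_{\theta^\star}$ to $f$ and using that the rescaling of $\pdens'$ to $[-1,1]$ preserves the $L_1$-norm (and hence $\OPT_k$), we obtain $\OPT' \le \norm{\pdens - f}_1 + \norm{f - \mixpdf_{\theta^\star}}_1 \le (4\,\OPT_k + \epsilon) + \OPT_k$ by Fact~\ref{thm:ADLS}. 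Chaining these estimates gives a bound of the form $c \cdot \OPT_k + O(\epsilon)$; careful bookkeeping of the constants recovers the stated value (the naive chaining above gives $c = 10$, but since no constant in the paper is optimized this is immaterial).

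The only genuinely nontrivial ingredient is the choice of the comparison function $h$: it must be $L_1$-close to $g$, which forces its first component to be a near-optimal GMM fit to $\pdens$, and at the same time it must have at most $K - 1$ sign changes so that Lemma~\ref{fact:akproperties} upgrades $\norm{h}_\Ak$ to $\norm{h}_1$. This is exactly where the bound on the number of zero-crossings of a linear combination of Gaussians (Fact~\ref{fact:gmmintersection}, packaged as Corollary~\ref{cor:gmmak}) enters, and it is also why the constant $K = 4k$ appears. Everything else is triangle inequalities plus the approximation guarantees already established in Section~\ref{sec:notation}, so I do not anticipate any further technical difficulty.
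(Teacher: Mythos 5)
Your argument is essentially the paper's proof: both introduce a near-optimal $k$-GMM fit $\mixpdf_{\theta'}$ to $\pdens$, use Corollary~\ref{cor:gmmak} (with $K=4k$) to equate the $\Ak$- and $L_1$-norms of a difference of two $k$-GMMs, and close the gap via triangle inequalities in both norms together with Lemma~\ref{lem:goodapprox} and the guarantee of Fact~\ref{thm:ADLS}; your symmetric ``compare both norms to the auxiliary $h$'' packaging is just a reorganization of the paper's single chain of inequalities. The constant $10$ you obtain is the honest outcome of this chaining --- the paper arrives at $8$ only by asserting $\norm{\pdens-\mixpdf_{\theta'}}_1\le 4\cdot\OPT_k+\eps$, whereas the triangle inequality through $f$ gives $5\cdot\OPT_k+\eps$ --- and since no constants are optimized anywhere in the paper, this discrepancy is immaterial.
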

\begin{proof}
Recall Lemma \ref{fact:akproperties}: for any function $f$, we have $\| f \|_\Ak \leq \| f \|_{1}$.
Thus, we know that $ \| \pdens - \polypdf_{\epsilon, \theta} \|_\Ak \leq \| \pdens - \polypdf_{\epsilon, \theta} \|_{1}$.
Hence, it suffices to show that $\| \pdens - \polypdf_{\epsilon, \theta} \|_{1} \leq  8 \cdot \OPT_k + O(\epsilon) + \| \pdens - \polypdf_{\epsilon, \theta} \|_\Ak$.

We have conditioned on the event that the density estimation algorithm succeeds.
So from Fact \ref{thm:ADLS}, we know that there is some mixture of $k$ Gaussians $M_{\theta'}$ so that $\| \pdens - \mixpdf_{\theta'} \|_{1} \leq 4 \cdot \OPT_k + \epsilon$.
By repeated applications of the triangle inequality and Corollary \ref{cor:gmmak}, we get
\begin{align*}
\| \pdens - \polypdf_{\epsilon, \theta} \|_1 \; & \leq \; \| \pdens - \mixpdf_{\theta'} \|_1 + \| \mixpdf_{\theta'} - \mixpdf_\theta \|_1 + \| \polypdf_{\epsilon, \theta} - \mixpdf_\theta \|_{1} \; \\
					  & \leq \; 4 \cdot \OPT + \eps +  \| \mixpdf_{\theta'} - \mixpdf_{\theta} \|_{\Ak} + \eps \\
					  & \leq \; 4 \cdot \OPT + 2 \eps + \| \mixpdf_{\theta'} - \pdens \|_\Ak + \| \pdens - \polypdf_{\epsilon, \theta} \|_{\Ak} + \| \polypdf_{\epsilon, \theta} - \mixpdf_\theta \|_\Ak \\
					  & \leq \; 4 \cdot \OPT + 2 \eps + \| \mixpdf_{\theta'} - \pdens \|_1 + \| \pdens - \polypdf_{\epsilon, \theta} \|_{\Ak} + \| \polypdf_{\epsilon, \theta} - \mixpdf_\theta \|_1 \\
					  & \leq \; 8 \cdot \OPT + 4 \eps + \| \pdens - \polypdf_{\epsilon, \theta} \|_\Ak \;, \\
\end{align*}
as claimed.
\end{proof}

Using this connection between the $\Ak$-norm and the $L_1$-norm, we can focus our attention on the following problem:
\[
\min_{\theta \in \paramset_k} \| \pdens - \polypdf_{\epsilon, \theta} \|_{\Ak} \; .
\]
As mentioned above, this problem is simpler from a computational perspective because we only have to introduce $O(k)$ variables into the system of polynomial inequalities, regardless of the value of $\eps$.

When encoding the above minimization problem in a system of polynomial inequalities, we convert it to a sequence of feasibility problems.
In particular, we solve $O (\log (1 / \epsilon))$ feasibility problems of the form
\begin{equation}
\label{eq:wellbehavedgoal}
\text{Find } \; \theta \in \paramset_k \;\text{ s.t. } \| \pdens - \polypdf_{\epsilon, \theta} \|_\Ak < \nu \; .
\end{equation}
Next, we show how to encode such an \Ak-constraint in a system of polynomial inequalities.

\subsection{A general system of polynomial inequalities for encoding closeness in \texorpdfstring{\Ak}{AK}-norm}
\label{subsec:polyprogram}
In this section, we give a general construction for the \Ak-distance between any fixed piecewise polynomial (in particular, the density estimate) and any piecewise polynomial we optimize over (in particular, our shape-restricted polynomials which we wish to fit to the density estimate).
The only restriction we require is that we already have variables for the breakpoints of the polynomial we optimize over.
As long as these breakpoints depend only polynomially or rationally on the parameters of the shape-restricted polynomial, this is easy to achieve.
Presenting our construction of the \Ak-constraints in this generality makes it easy to adapt our techniques to the general algorithm (without the well-behavedness assumption, see Section \ref{sec:generalalg}) and to new classes of distributions (see Section \ref{sec:general}).

The setup in this section will be as follows.
Let $p$ be a given, fixed piecewise polynomial supported on $[-1, 1]$ with breakpoints  $c_1, \ldots, c_r$. 
Let $\setP$ be a set of piecewise polynomials so that for all $\theta \in S \subseteq \R^{u}$ for some fixed, known $S$, there is a $\polypdf_\theta (x) \in \setP$ with breakpoints $d_1 (\theta), \ldots, d_s (\theta)$ 
such that
\begin{itemize}
\item
$S$ is a semi-algebraic set.\footnote{Recall a semi-algebraic set is a set where membership in the set can be described by polynomial inequalities.}
Moreover, assume membership in $S$ can be stated as a Boolean formula over $R$ polynomial predicates, each of degree at most $D_1$, for some $R, D_1$.

\item
For all $1 \leq i \leq s$, there is a polynomial $h_i$ so that $h_i(d_i(\theta), \theta) = 0$, and moreover, for all $\theta$, we have that $d_i (\theta)$ is the unique real number $y$ satisfying $h_i (y, \theta) = 0$. That is, the breakpoints of $P_\theta$ can be encoded as polynomial equality in the $\theta$'s.
Let $D_2$ be the maximum degree of any $h_i$.

\item
The function $(x, \theta) \mapsto P_\theta (x)$ is a polynomial in $x$ and $\theta$ as long as $x$ is not at a breakpoint of $P_\theta$.
Let $D_3$ be the maximum degree of this polynomial.
\end{itemize}
Let $D = \max (D_1, D_2, D_3)$.

Our goal then is to encode the following problem as a system of polynomial inequalities:
\begin{equation}
\label{eq:polyprogramgoal}
\text{Find } \; \theta \in S \;\text{ s.t. } \| p - \polypdf_{\theta} \|_\Ak < \nu \; .
\end{equation}
In Section \ref{sec:polyprogramGMM}, we show that this is indeed a generalization of the problem in Equation (\ref{eq:wellbehavedgoal}), for suitable choices of $S$ and $\setP$.

In the following, let $\pdiff \ed p - \polypdf_{\theta}$.
Note that $\pdiff$ is a piecewise polynomial with breakpoints contained in $\{c_1, \ldots c_r, d_1 (\theta), \ldots, d_s (\theta) \}$.
In order to encode the $\Ak$-constraint, we use the fact that a system of polynomial inequalities can contain for-all quantifiers.
Hence it suffices to encode the \Ak-constraint for a single set of $K$ intervals.
We provide a construction for a single $\Ak$-constraint in Section \ref{subsubsec:Akfixed}.
In Section \ref{sec:wbhpolyprogram}, we introduce two further constraints that guarantee validity of the parameters $\theta$ and combine these constraints with the $\Ak$-constraint to produce the full system of polynomial inequalities.

\subsubsection{Encoding Closeness for a Fixed Set of Intervals}
\label{subsubsec:Akfixed}
Let $[a_1, b_1], \ldots, [a_K, b_K]$ be $K$ disjoint intervals.
In this section we show how to encode the following constraint:
\[
  \sum_{i=1}^K \left| \int_{a_i}^{b_i} \pdiff(x) \diff x \right| \; \leq  \; \nu \; .
\]
Note that a given interval $[a_i, b_i]$ might contain several pieces of \pdiff.
In order to encode the integral over $[a_i, b_i]$ correctly, we must therefore know the current order of the breakpoints (which can depend on $\theta$).

However, once the order of the breakpoints of \pdiff{} and the $a_i$ and $b_i$ is fixed, the integral over $[a_i, b_i]$ becomes the integral over a fixed set of sub-intervals.
Since the integral over a single polynomial piece is still a polynomial, we can then encode this integral over $[a_i, b_i]$ piece-by-piece.

More formally, let $\Phi$ be the set of permutations of the variables 
\[
  \{a_1, \ldots, a_K, \, b_1, \ldots, b_K, \, c_1, \ldots, c_r, \, d_1 (\theta), \ldots, d_s (\theta) \}
\]
such that (i) the $a_i$ appear in order, (ii) the $b_i$ appear in order, (iii) $a_i$ appears before $b_i$, and (iv) the $c_i$ appear in order.
Let $t = 2 K + r + s$.
For any $\phi = (\phi_1, \ldots, \phi_t) \in \Phi$, let
\[
  \text{ordered}_{p, \setP} (\phi) \ed \bigwedge_{i=1}^{t - 1} (\phi_i \leq \phi_{i + 1}) \; .
\]
Note that for any fixed $\phi$, this is an unquantified Boolean formula with polynomial constraints in the unknown variables.
The order constraints encode whether the current set of variables corresponds to ordered variables under the permutation represented by $\phi$.
An important property of an ordered $\phi$ is the following: in each interval $[\phi_i, \phi_{i+1}]$, the piecewise polynomial $\pdiff$ has exactly one piece.
This allows us to integrate over $\pdiff$ in our system of polynomial inequalities.

Next, we need to encode whether a fixed interval between $\phi_i$ and $\phi_{i+1}$ is contained in one of the $\Ak$-intervals, i.e., whether we have to integrate $\pdiff$ over the interval $[\phi_i, \phi_{i+1}]$ when we compute the $\Ak$-norm of \pdiff.
We use the following expression:
\[
  \text{is-active}_{p, \setP}(\phi, i) \ed \begin{cases}1 \quad & \begin{aligned}\text{if there is a $j$ such that $a_j$ appears as or before $\phi_i$ in $\phi$} \\ \text{and $b_j$ appears as or after $\phi_{i+1}$}\end{aligned} \\ 0 \quad & \text{otherwise} \end{cases} \; .
\]
Note that for fixed $\phi$ and $i$, this expression is either $0$ or $1$ (and hence trivially a polynomial).

With the constructs introduced above, we can now integrate $\pdiff$ over an interval $[\phi_i, \phi_{i+1}]$.
It remains to bound the absolute value of the integral for each individual piece.
For this, we introduce a set of $t$ new variables $\xi_1, \ldots, \xi_t$ which will correspond to the absolute value of the integral in the corresponding piece.
\begin{align*}
  \text{\Ak-bounded-interval}_{p, \setP} (\phi, \theta, \xi, i) \ed &\left(\left( -\xi_i \leq \int_{\phi_i}^{\phi_{i + 1}} \pdiff(x) \diff x \right) \wedge \left( \int_{\phi_i}^{\phi_{i + 1}} \pdiff(x) \diff x \leq \xi_i \right)\right) \\
    & \vee (\text{is-active}_{p, \setP} (\phi, i) = 0) \;.
\end{align*}
Note that the above is a valid polynomial constraint because $\pdiff$ depends only on $\theta$ and $x$ for fixed breakpoint order $\phi$ and fixed interval $[\phi_i, \phi_{i+1}]$.
Moreover, recall that by assumption, $\polypdf_{\eps,\theta}(x)$ depends polynomially on both $\theta$ and $x$, and therefore the same holds for $\pdiff$.

We extend the \Ak-check for a single interval to the entire range of \pdiff{} as follows:
\[
  \text{\Ak-bounded-fixed-permutation}_{p, \setP} (\phi, \theta, \xi) \ed \bigwedge_{i = 1}^{t - 1} \text{\Ak-bounded-interval}_{p, \setP} (\phi, \theta, \xi, i) \; .
\]
We now have all the tools to encode the $\Ak$-constraint for a fixed set of intervals:
\begin{align*}
  \text{\Ak-bounded}_{p, \setP} (\theta, \nu, a, b, c, d, \xi) \ed & \left(\sum_{i=1}^{t - 1} \xi_i \leq \nu \right) \; \wedge \; \left(\bigwedge_{i=1}^{t - 1} \left( \xi_i \geq 0 \right) \right) \\
    & \wedge \left( \bigvee_{\phi \in \Phi} \text{ordered}_{p, \setP}(\phi) \wedge \text{\Ak-bounded-fixed-permutation}_{p, \setP} (\phi, \theta, \xi) \right)  \; .
\end{align*}

By construction, the above constraint now satisfies the following:
\begin{lemma}
There exists a vector $\xi \in \R^t$ such that \text{$\Ak$-bounded}$_{p,\setP}(\theta, \nu, a, b, c, d, \xi)$ is true if and only if 
\[
  \sum_{i=1}^K \left| \int_{a_i}^{b_i} \pdiff(x) \diff x \right| \; \leq  \; \nu \; .
\]
Moreover, $\Ak$-bounded$_{p,\setP}$ has less than $6 t^{t+1}$ polynomial constraints.
\end{lemma}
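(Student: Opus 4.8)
The plan is to prove the biconditional in both directions and then bound the number of polynomial predicates by a direct tally over the sub-formulas. Fix $\theta$, $\nu$, and the real values $a$, $b$, $c$, $d = (d_1(\theta),\ldots,d_s(\theta))$, and assume the intervals are listed so that $a_1 < b_1 \le a_2 < b_2 \le \cdots$ and $c_1 \le \cdots \le c_r$. The structural fact I would record first is: among the permutations $\phi \in \Phi$, exactly those that list the $t = 2K + r + s$ numbers $\{a_i, b_i, c_i, d_i(\theta)\}$ in non-decreasing order satisfy $\text{ordered}_{p,\setP}(\phi)$ for these values; such a ``sorting permutation'' lies in $\Phi$ because the sorted order respects conditions (i)--(iv), and two sorting permutations differ only by transposing equal entries, hence induce the same list of slots $[\phi_i, \phi_{i+1}]$ with the same $\pdiff$-integrals. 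For a sorting permutation every breakpoint of $\pdiff = p - \polypdf_\theta$ lies in $\{c_1, \ldots, c_r, d_1(\theta), \ldots, d_s(\theta)\}$ and so coincides with some $\phi_i$, which is exactly why $\pdiff$ restricts to a single polynomial on each $[\phi_i, \phi_{i+1}]$, each $\int_{\phi_i}^{\phi_{i+1}} \pdiff(x) \diff x$ is an honest polynomial in $\theta$, and the whole expression is a genuine system of polynomial inequalities. Moreover $\text{is-active}_{p,\setP}(\phi, i) = 1$ precisely when $[\phi_i, \phi_{i+1}] \subseteq [a_j, b_j]$ for some $j$, and for each fixed $j$ the active slots inside $[a_j, b_j]$ partition $[a_j, b_j]$.

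For the ``$\Leftarrow$'' direction, assume $\sum_{i=1}^{K} \bigl| \int_{a_i}^{b_i} \pdiff(x) \diff x \bigr| \le \nu$. Choose a sorting permutation $\phi^*$, so $\text{ordered}_{p,\setP}(\phi^*)$ holds, and set $\xi_i = \bigl| \int_{\phi^*_i}^{\phi^*_{i+1}} \pdiff(x) \diff x \bigr|$ when $\text{is-active}_{p,\setP}(\phi^*, i) = 1$ and $\xi_i = 0$ otherwise. Then every $\xi_i \ge 0$; on an active slot the two-sided bound $-\xi_i \le \int_{\phi^*_i}^{\phi^*_{i+1}} \pdiff(x) \diff x \le \xi_i$ holds because $\xi_i$ is that absolute value, and on an inactive slot the disjunct $\text{is-active}_{p,\setP}(\phi^*, i) = 0$ is satisfied, so $\text{\Ak-bounded-fixed-permutation}_{p,\setP}(\phi^*, \theta, \xi)$ holds. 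The remaining conjunct $\sum_{i=1}^{t-1} \xi_i \le \nu$ I would obtain by grouping the active slots by the interval $[a_j, b_j]$ that contains them and invoking the partition fact; the one delicate point is that the absolute value must be taken per $\Ak$-interval (i.e.\ after summing the signed piece integrals of a given $[a_j, b_j]$), which is exactly what makes the inner sum of magnitudes collapse to $\bigl| \int_{a_j}^{b_j} \pdiff \bigr|$ rather than merely dominate it, yielding $\sum_{i=1}^{t-1} \xi_i = \sum_{j=1}^{K} \bigl| \int_{a_j}^{b_j} \pdiff \bigr| \le \nu$ and hence a witness for $\text{\Ak-bounded}_{p,\setP}$.

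For the ``$\Rightarrow$'' direction, suppose some $\xi$ makes $\text{\Ak-bounded}_{p,\setP}(\theta, \nu, a, b, c, d, \xi)$ true. Then the disjunction is met by some $\phi \in \Phi$ with $\text{ordered}_{p,\setP}(\phi)$ true, which by the structural fact is a sorting permutation and so induces the partition above. The $\text{\Ak-bounded-interval}$ clauses force $\xi_i \ge \bigl| \int_{\phi_i}^{\phi_{i+1}} \pdiff \bigr|$ on every active slot, and $\xi_i \ge 0$ everywhere, so $\nu \ge \sum_{i=1}^{t-1} \xi_i \ge \sum_{j=1}^{K} \sum_{i \,:\, [\phi_i, \phi_{i+1}] \subseteq [a_j, b_j]} \bigl| \int_{\phi_i}^{\phi_{i+1}} \pdiff \bigr| \ge \sum_{j=1}^{K} \bigl| \int_{a_j}^{b_j} \pdiff \bigr|$, the last inequality being the triangle inequality along the partition of $[a_j, b_j]$. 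This is the claimed inequality.

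For the predicate count I would tally sub-formula by sub-formula: $\text{ordered}_{p,\setP}(\phi)$ has $t - 1$ predicates; since $\text{is-active}_{p,\setP}(\phi, i)$ is a constant in $\{0, 1\}$ for fixed $\phi$ and $i$, each $\text{\Ak-bounded-interval}_{p,\setP}(\phi, \theta, \xi, i)$ contributes at most $2$ predicates and $\text{\Ak-bounded-fixed-permutation}_{p,\setP}(\phi, \theta, \xi)$ at most $2(t - 1)$; the top-level $\text{\Ak-bounded}_{p,\setP}$ adds one predicate for $\sum_i \xi_i \le \nu$, another $t - 1$ for $\bigwedge_i (\xi_i \ge 0)$, and a disjunction over $\phi \in \Phi$ of clauses with at most $3(t - 1)$ predicates each. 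Using $|\Phi| \le t! \le t^t$ (the relative orders of the $a$'s, $b$'s, $c$'s are pinned down, so $|\Phi|$ is at most the number of permutations of $t$ symbols), the total is at most $1 + (t - 1) + 3(t - 1)\, t! < 4\, t^{t + 1} < 6\, t^{t + 1}$. I expect the main obstacle to be the correctness part, and within it the bookkeeping that ties the per-piece variables $\xi_i$ to the single integral over an $\Ak$-interval that several breakpoints of $\pdiff$ split (absolute value per interval, not per piece), together with verifying that $\text{ordered}$ and the disjunction over $\Phi$ jointly isolate exactly the sorting permutations so that the induced decomposition of the $[a_j, b_j]$ is the intended one; the count, by contrast, is routine once one notes that $\text{is-active}$ evaluates to a constant and hence costs no predicates.
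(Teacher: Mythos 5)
Your ``$\Rightarrow$'' direction (satisfiable implies the inequality, via the observation that $\text{ordered}$ isolates sorting permutations, that the active slots contained in $[a_j,b_j]$ partition it, and the triangle inequality) is correct, and your predicate count is fine; the paper itself only asserts the lemma ``by construction,'' so you are supplying detail it omits. The genuine gap is in your ``$\Leftarrow$'' direction, at exactly the point you flag as delicate. With your witness $\xi_i = \bigl|\int_{\phi^*_i}^{\phi^*_{i+1}} \pdiff(x)\diff x\bigr|$ on active slots, the top-level constraint must control $\sum_i \xi_i = \sum_{j}\sum_{i\colon [\phi^*_i,\phi^*_{i+1}]\subseteq[a_j,b_j]} \bigl|\int_{\phi^*_i}^{\phi^*_{i+1}}\pdiff(x)\diff x\bigr|$, and this does \emph{not} collapse to $\sum_j \bigl|\int_{a_j}^{b_j}\pdiff(x)\diff x\bigr|$: the $\Ak$-bounded-interval constraints impose the two-sided bound, hence the absolute value, \emph{per slot}, not per $\Ak$-interval, and no smaller choice of $\xi_i$ is available because every active slot must satisfy $\xi_i \ge \bigl|\int_{\phi_i}^{\phi_{i+1}}\pdiff(x)\diff x\bigr|$. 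When the signed slot integrals inside one $[a_j,b_j]$ have mixed signs, the per-slot sum strictly exceeds $\bigl|\int_{a_j}^{b_j}\pdiff(x)\diff x\bigr|$. Concretely, if $\pdiff(x)=x$ on $[-1,1]$ with a breakpoint at $0$, $K=1$ and $[a_1,b_1]=[-1,1]$, then $\bigl|\int_{a_1}^{b_1}\pdiff(x)\diff x\bigr|=0$, yet any admissible $\xi$ has $\xi_1+\xi_2\ge 1$, so the formula is unsatisfiable for $\nu<1$ even though the right-hand side of the biconditional holds. So the asserted equality $\sum_i\xi_i=\sum_j\bigl|\int_{a_j}^{b_j}\pdiff(x)\diff x\bigr|$ is false, and no other choice of $\xi$ repairs it: as written, $\exists\xi$ making $\Ak$-bounded true is equivalent to the \emph{refined} condition $\sum_{\text{active slots}}\bigl|\int_{\text{slot}}\pdiff(x)\diff x\bigr|\le\nu$.

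Consequently, the literal biconditional cannot be proved by bookkeeping of this kind; only the ``satisfiable $\Rightarrow$ inequality'' half goes through. To close the gap you must either weaken/reinterpret the statement (replace the right-hand side by the refined per-slot sum, or note that satisfiability is implied by the stronger hypothesis $\|\pdiff\|_1\le\nu$), or modify the encoding so that the auxiliary variables accumulate the \emph{signed} slot integrals of each $\Ak$-interval before the absolute value is taken --- e.g.\ one variable $\xi_j$ per $j\in[K]$ with constraints $-\xi_j \le \sum_{i\colon[\phi_i,\phi_{i+1}]\subseteq[a_j,b_j]}\int_{\phi_i}^{\phi_{i+1}}\pdiff(x)\diff x \le \xi_j$, which is still a polynomial predicate for each fixed $\phi$. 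It is worth noting that the slack is harmless for the rest of the paper, since the quantity actually encoded is sandwiched between $\|\pdiff\|_{\Ak}$ and $\|\pdiff\|_1$ and the downstream arguments only use those two bounds; but as a proof of the stated ``if and only if,'' your $\Leftarrow$ step would fail.
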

The bound on the number of polynomial constraints follows simply from counting the number of polynomial constraints in the construction described above.

\subsubsection{Complete system of polynomial inequalities}
\label{sec:wbhpolyprogram}

In addition to the \Ak-constraint introduced in the previous subsection, our system of polynomial inequalities contains the following constraints:
\begin{description}
\item[Valid parameters] First, we encode that the mixture parameters we optimize over are valid, i.e., we let 
\[\text{valid-parameters}_S (\theta) \; \ed \; \theta \in S \;. \]
Recall this can be expressed as a Boolean formula over $R$ polynomial predicates of degree at most $D$.
\item[Correct breakpoints] We require that the $d_i$ are indeed the breakpoints of the shape-restricted polynomial $\polypdf_\theta$.
By the assumption, this can be encoded by the following constraint:
\[ 
 \text{correct-breakpoints}_\setP (\theta, d) \; \ed \; \bigwedge_{i = 1}^s (h_i (d_i (\theta), \theta) = 0)\; .
\]
\end{description}

\paragraph{The full system of polynomial inequalities} 
We now combine the constraints introduced above and introduce our entire system of polynomial inequalities:
\begin{align*}
\label{eqn:polyeq}
S_{K, p, \setP, S} (\nu) = \forall a_1, & \ldots a_K,  b_1, \ldots, b_K \,: \\
                                & \exists d_1, \ldots, d_{s}, \xi_1 \ldots \xi_{t} \,: \\
                                & \; \; \text{valid-parameters}_S(\theta) \; \wedge \; \text{correct-breakpoints}_\setP (\theta, d) \; \wedge\; \text{\Ak-bounded}_{p, \setP} (\theta, \nu, a, b, c, d, \xi) \; .
\end{align*}

This system of polynomial inequalities has
\begin{itemize}
\item two levels of quantification, with $2K$ and $s + t$ variables, respectively,
\item $u$ free variables,
\item $R + s + 4 t^{t + 1}$ polynomial constraints,
\item and maximum degree $D$ in the polynomial constraints.
\end{itemize}
Let $\gamma$ be a bound on the free variables, i.e., $\norm{\theta}_2 \leq \gamma$, and let $\lambda$ be a precision parameter.
Then Renegar's algorithm (see Fact \ref{thm:Renegar}) finds a $\lambda$-approximate solution $\theta$ for this system of polynomial inequalities satisfying $\norm{\theta}_2 \leq \gamma$, if one exists, in time
\[
  \left((R + s + 6t^{t + 1}) D\right)^{O(K (s + t) u)} \log \log (3 + \frac{\gamma}{\lambda}) \; .
\]

\subsection{Instantiating the system of polynomial inequalities for GMMs}
\label{sec:polyprogramGMM}
We now show how to use the system of polynomial inequalities developed in the previous subsection for our initial goal: that is, encoding closeness between a well-behaved density estimate and a set of shape-restricted polynomials (see Equation \ref{eq:wellbehavedgoal}).
Our fixed piecewise polynomial ($p$ in the subsection above) will be $\pdens$. 
The set of piecewise polynomials we optimize over (the set $\setP$ in the previous subsection) will be the set $\setP_\epsilon$ of all shape-restricted polynomials $\polypdf_{\epsilon, \theta}$.
Our $S$ (the domain of $\theta$) will be $\paramset_k' \subseteq \paramset_k$, which we define below.
For each $\theta \in S$, we associate it with $\polypdf_{\epsilon, \theta}$.
Moreover:
\begin{itemize}
\item
Define
\[\paramset_{k, \gamma} = \left\{\theta \; \bigg| \, \left(\sum_{i = 1}^k w_i = 1\right) \wedge \big(\forall \, i \in [k] \, : \, (w_i \geq 0) \wedge (\gamma \geq \tau_i > 0) \wedge (-1 \leq \mu_i \leq 1)\big) \right\} \; ,\]
that is, the set of parameters which have bounded means and variances.
$S$ is indeed semi-algebraic, and membership in $S$ can be encoded using $2k + 1$ polynomial predicates, each with degree $D_1 = 1$.
\item
For any fixed parameter $\theta \in \paramset_k$, the shape-restricted polynomial $\polypdf_\theta$ has $s = 2k$ breakpoints by definition, and the breakpoints $d_1 (\theta), \ldots, d_{2k} (\theta)$ of $\polypdf_{\epsilon, \theta}$ occur at 
\[
d_{2i - 1} (\theta) = \frac{1}{\tau_i} \left( \mu_i - 2 \tau_i \log (1 / \epsilon) \right) \; , ~~~ d_{2i} (\theta) = \frac{1}{\tau_i} \left( \mu_i + 2 \tau_i \log (1 / \epsilon) \right) \; , \text{ for all $1 \leq i \leq k$}  \; .
\]
Thus, for all parameters $\theta$, the breakpoints $d_1 (\theta), \ldots, d_{2k} (\theta)$ are the unique numbers so that so that
\[
\tau_i \cdot d_{2i - 1} (\theta) - \left( \mu_i - 2 \tau_i \log (1 / \epsilon) \right) = 0 \; , ~~~ \tau_i \cdot d_{2i} (\theta) - \left( \mu_i + 2 \tau_i \log (1 / \epsilon) \right) = 0 \; , \text{ for all $1 \leq i \leq k$}  \; ,
\]

and thus each of the $d_1 (\theta), \ldots, d_{2k} (\theta)$ can be encoded as a polynomial equality of degree $D_2 = 2$.
\item
Finally, it is straightforward to verify that the map $(x, \theta) \to P_{\epsilon, \theta} (x)$ is a polynomial of degree $D_3 = O(\log 1 / \epsilon)$ in $(x, \theta)$, at any point where $x$ is not at a breakpoint of $P_\theta$.
\end{itemize}
\noindent From the previous subsection, we know that the system of polynomial inequalities $S_{K, \pdens, \setP_\epsilon, \paramset_{k, \gamma}} (\nu)$ has two levels of quantification, each with $O(k)$ variables, it has $k^{O(k)}$ polynomial constraints, and has maximum degree $O(\log 1/ \epsilon)$ in the polynomial constraints. 
Hence, we have shown:
\begin{corollary}
For any fixed $\epsilon$, the system of polynomial inequalities $S_{K, \pdens, \setP_\epsilon, \paramset_{k, \gamma}} (\nu)$ encodes Equation (\ref{eq:wellbehavedgoal}).
Moreover, for all $\gamma, \lambda \geq 0$, Renegar's algorithm $\textsc{Solve-Poly-System} (S_{K, \pdens, \setP_\epsilon, \paramset_{k, \gamma}} (\nu), \lambda, \gamma)$ runs in time $(k \log (1 / \epsilon))^{O(k^4)} \log \log (3 + \frac{\gamma}{\lambda})$.
\end{corollary}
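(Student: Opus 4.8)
The plan is to obtain this statement as an immediate \emph{instantiation} of the general construction developed in Section~\ref{subsec:polyprogram}: we take the fixed piecewise polynomial $p$ to be $\pdens$, the family $\setP$ to be $\setP_\epsilon = \{\polypdf_{\epsilon,\theta}\}$, and the domain $S$ to be $\paramset_{k,\gamma}$. There are two things to establish. First, that the system $S_{K,\pdens,\setP_\epsilon,\paramset_{k,\gamma}}(\nu)$ is a legal instance of the template of Section~\ref{subsec:polyprogram} and therefore encodes Equation~(\ref{eq:wellbehavedgoal}) (with $K = 4k$). Second, that the parameter counts collapse to give the claimed running time for Renegar's algorithm.

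For the first part, I would verify the three structural hypotheses of Section~\ref{subsec:polyprogram} for the triple $(\pdens,\setP_\epsilon,\paramset_{k,\gamma})$ --- these are exactly the three displayed bullet points preceding the statement. Namely: (i) $\paramset_{k,\gamma}$ is semi-algebraic, with membership expressible as a Boolean combination of $R = 2k+1$ polynomial predicates of degree $D_1 = 1$; (ii) the $s = 2k$ breakpoints $d_1(\theta),\dots,d_{2k}(\theta)$ of $\polypdf_{\epsilon,\theta}$ are each the unique real root of an explicit degree-$D_2 = 2$ polynomial in $(y,\theta)$; and (iii) $(x,\theta)\mapsto\polypdf_{\epsilon,\theta}(x)$ is a polynomial of degree $D_3 = O(\log(1/\epsilon))$ away from the breakpoints, since each Taylor piece $\polystandardpdf_\epsilon$ has degree $2K\log(1/\epsilon)$ and is precomposed with the affine map $x\mapsto\tau_i(x-\mu_i)$; hence $D = \max(D_1,D_2,D_3) = O(\log(1/\epsilon))$. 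Granting these, the correctness argument of Section~\ref{subsec:polyprogram} applies verbatim: the lemma of Section~\ref{subsubsec:Akfixed} shows the $\Ak$-bounded block is satisfiable (over the $\xi$ variables) exactly when $\sum_{i=1}^{K}\bigl|\int_{a_i}^{b_i}\pdiff(x)\diff x\bigr| \le \nu$, the \emph{correct-breakpoints} and \emph{valid-parameters} blocks pin down $d$ and restrict $\theta$ to $\paramset_{k,\gamma}$, and the outer $\forall a_1,\dots,b_K$ quantifier ranges over precisely the sets of $K$ intervals appearing in the supremum defining $\|\cdot\|_\Ak$. So the system is feasible iff there is a $\theta\in\paramset_{k,\gamma}$ with $\|\pdens - \polypdf_{\epsilon,\theta}\|_\Ak < \nu$, which is precisely the feasibility problem of Equation~(\ref{eq:wellbehavedgoal}) (with domain $S = \paramset_{k,\gamma}$).

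For the second part, I would substitute the parameter values into the size bounds recorded at the end of Section~\ref{sec:wbhpolyprogram} and then into Fact~\ref{thm:Renegar}. Here $\pdens$ has $O(k)$ pieces, so $r = O(k)$; together with $s = 2k$, $K = 4k$, and $u = 3k$ free variables, each quantifier block has $O(k)$ variables, $t = 2K + r + s = O(k)$, the number of polynomial constraints is $R + s + O(t^{t+1}) = k^{O(k)}$, and the maximum degree is $D = O(\log(1/\epsilon))$. Plugging $m = k^{O(k)}$, $d = O(\log(1/\epsilon))$, two quantifier levels ($v = 2$), $\ell = O(k)$ free variables, and $\prod_j n_j = O(k)\cdot O(k) = O(k^2)$ into the bound $(md)^{2^{O(v)}\ell\prod_j n_j}\log\log(3+\eta/\lambda)$ yields exponent $O(k^3)$ over base $k^{O(k)}\log(1/\epsilon)$, i.e.\ $\bigl(k^{O(k)}\log(1/\epsilon)\bigr)^{O(k^3)}\log\log(3+\gamma/\lambda)$, which simplifies to $(k\log(1/\epsilon))^{O(k^4)}\log\log(3+\gamma/\lambda)$ using $(k^{O(k)})^{O(k^3)} = k^{O(k^4)}$ and, assuming w.l.o.g.\ $\epsilon < 1/e$, absorbing the $(\log(1/\epsilon))^{O(k^3)}$ factor.

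I do not expect any real obstacle here: once Section~\ref{subsec:polyprogram} is in hand the statement is bookkeeping. The only points that need a moment of care are (a) confirming that $t = 2K+r+s$ is genuinely $O(k)$, so that the constraint count $t^{t+1}$ is $k^{O(k)}$ with \emph{no} dependence on $\epsilon$ (an $\epsilon$-dependence here would ruin the running time), and (b) checking that the exponent $2^{O(v)}\ell\prod_j n_j$ in Renegar's bound is $O(k^3)$, so that raising the $k^{O(k)}$ base to this power gives $k^{O(k^4)}$ and the final bound can legitimately be written in the stated form.
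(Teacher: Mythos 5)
Your proposal is correct and follows essentially the same route as the paper: the paper likewise obtains the corollary by instantiating the general construction of Section~\ref{subsec:polyprogram} with $p = \pdens$, $\setP = \setP_\epsilon$, $S = \paramset_{k,\gamma}$, verifying the same three structural bullet points ($R = 2k+1$, $D_1 = 1$; $s = 2k$, $D_2 = 2$; $D_3 = O(\log(1/\epsilon))$), and then plugging the resulting sizes (two quantifier blocks of $O(k)$ variables, $k^{O(k)}$ constraints, degree $O(\log(1/\epsilon))$) into Fact~\ref{thm:Renegar}. Your exponent bookkeeping ($2^{O(v)}\ell\prod_j n_j = O(k^3)$, base $k^{O(k)}\log(1/\epsilon)$, giving $(k\log(1/\epsilon))^{O(k^4)}\log\log(3+\gamma/\lambda)$) matches the paper's, and your explicit note that the feasibility is over $\theta\in\paramset_{k,\gamma}$ is consistent with how the paper uses the corollary.
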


\subsection{Overall learning algorithm}
We now combine our tools developed so far and give an agnostic learning algorithm for the case of well-behaved density estimates (see Algorithm \ref{alg:wbhalg}).
\begin{algorithm}[htb]
\begin{algorithmic}[1]
\Function{Learn-Well-Behaved-GMM}{$k, \epsilon, \delta, \gamma$}
\LineComment{Density estimation. Only this step draws samples.}
\State $\pdens' \gets \textsc{Estimate-Density}(k, \epsilon, \delta)$ \label{line:wbhgetdensity}
\vspace{.3cm}
\LineComment{Rescaling}
\State Let $\pdens$ be a rescaled and shifted version of $\pdens'$ such that the support of $\pdens$ is $[-1, 1]$.
\State Let $\alpha$ and $\beta$ be such that $\pdens(x) = \pdens'\left(\frac{2(x - \alpha)}{\beta - \alpha} - 1 \right)$
\vspace{.3cm}
\LineComment{Fitting shape-restricted polynomials}
\State $K \gets 4k$
\State $\nu \gets \epsilon$
\State $\theta \gets \textsc{Solve-Poly-System}(S_{K, \pdens, \setP_\epsilon, \paramset_{k, \gamma}} (\nu), C \frac{1}{\gamma} \left( \frac{\epsilon}{k} \right)^2, 3k\gamma)$
\While{$\theta$ is ``NO-SOLUTION''}
	\State $\nu \gets 2 \cdot \nu$
  \State $\theta \gets \textsc{Solve-Poly-System}(S_{K, \pdens, \setP_\epsilon, \paramset_{k, \gamma}} (\nu), C \frac{1}{\gamma} \left( \frac{\epsilon}{k} \right)^2, 3k\gamma)$
\EndWhile
\vspace{.3cm}
\LineComment{Fix the parameters}
\For{$i = 1, \ldots, k$}
	\State if $\tau_i \leq 0$, set $w_i \gets 0$ and set $\tau_i$ to be arbitrary but positive.
\EndFor
\State Let $W = \sum_{i = 1}^k w_i$
\For{$ i = 1, \ldots, k$}
	\State $w_i \gets w_i / W$
\EndFor
\vspace{.3cm}
\LineComment{Undo the scaling}
\State $w_i' \gets w_i$ \label{line:wbhscaleback1}
\State $\mu_i' \gets \frac{(\mu_i + 1)(\beta - \alpha)}{2} + \alpha$
\State $\tau_i' \gets \frac{\tau_i}{\beta - \alpha}$ \label{line:wbhscaleback3}
\State \textbf{return} $\theta'$
\EndFunction
\end{algorithmic}
\caption{Algorithm for learning a mixture of Gaussians in the well-behaved case.}
\label{alg:wbhalg}
\end{algorithm}

\subsection{Analysis}
Before we prove correctness of \textsc{Learn-Well-Behaved-GMM}, we introduce two auxiliary lemmas.

An important consequence of the well-behavedness assumption (see Definition \ref{def:wellbehaved}) are the following robustness properties.
\begin{lemma}[Parameter stability]
\label{lem:robust}
Fix $2 \geq \epsilon > 0$. Let the parameters $\theta, \theta' \in \paramset_k$ be such that (i) $\tau_i, \tau'_i  \leq \gamma$ for all $i \in [k]$ and (ii) $\| \theta - \theta' \|_2 \leq C \frac{1}{\gamma} \left( \frac{\epsilon}{k} \right)^2$, for some universal constant $C$. Then 
\[
  \| \mixpdf_{\theta} - \mixpdf_{\theta'}\|_1 \; \leq \; \epsilon \; .
\]
\end{lemma}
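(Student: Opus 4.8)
The plan is to bound $\| \mixpdf_{\theta} - \mixpdf_{\theta'} \|_1$ componentwise and then control each component's contribution by the corresponding coordinates of $\theta - \theta'$. First I would apply the triangle inequality to get $\| \mixpdf_{\theta} - \mixpdf_{\theta'} \|_1 \leq \sum_{i=1}^k \| w_i \normalpdf_{\mu_i, \tau_i} - w_i' \normalpdf_{\mu_i', \tau_i'} \|_1$, and inside each summand peel off one parameter at a time: pass from $(w_i, \mu_i, \tau_i)$ to $(w_i', \mu_i, \tau_i)$ to $(w_i', \mu_i', \tau_i)$ to $(w_i', \mu_i', \tau_i')$, paying one triangle-inequality step at each stage. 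Since every Gaussian pdf has unit $L_1$-norm, the weight step costs exactly $|w_i - w_i'|$. The remaining two steps reduce to the $L_1$-sensitivity of a single Gaussian with respect to its mean and its precision.

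For the mean, the substitution $u = \tau(x - \mu)$ gives $\| \partial_\mu \normalpdf_{\mu, \tau} \|_1 = \int \frac{\tau^3 |x - \mu|}{\sqrt{2\pi}} e^{-\tau^2 (x-\mu)^2/2} \diff x = \tau \sqrt{2/\pi}$, so integrating along the segment from $\mu_i$ to $\mu_i'$ yields $\| \normalpdf_{\mu_i, \tau_i} - \normalpdf_{\mu_i', \tau_i} \|_1 \leq \sqrt{2/\pi}\, \tau_i |\mu_i - \mu_i'| \leq \gamma |\mu_i - \mu_i'|$, where I used the well-behavedness bound $\tau_i \leq \gamma$. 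For the precision, the same substitution shows that $\| \normalpdf_{\mu, \tau} - \normalpdf_{\mu, \tau'} \|_1$ depends only on the ratio $\tau'/\tau$, and $\| \partial_\tau \normalpdf_{\mu, \tau} \|_1 = \frac{1}{\tau} \cdot \frac{1}{\sqrt{2\pi}} \int |1 - u^2| e^{-u^2/2} \diff u = \Theta(1/\tau)$; integrating over $[\tau_i, \tau_i']$ gives $\| \normalpdf_{\mu_i', \tau_i} - \normalpdf_{\mu_i', \tau_i'} \|_1 = O\big(|\ln(\tau_i'/\tau_i)|\big)$. This is where I expect the main obstacle: the $1/\tau$ blow-up means an \emph{absolute} perturbation of a very small precision can change a Gaussian by $\Omega(1)$ in $L_1$, so to convert the logarithm into $O(\gamma)\,|\tau_i - \tau_i'|$ I also need a lower bound on the precisions. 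The bare assumption $\tau_i, \tau_i' \leq \gamma$ does not by itself give this, so this step is the crux — in the well-behaved setting one argues that the precisions relevant to a good fit are $\gtrsim 1/\gamma$ (smaller-precision components can be treated as negligible), and then $|\ln(\tau_i'/\tau_i)| \leq |\tau_i - \tau_i'| / \min(\tau_i, \tau_i') \leq \gamma |\tau_i - \tau_i'|$.

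Combining the three steps, $\| \mixpdf_{\theta} - \mixpdf_{\theta'} \|_1 \leq \sum_{i=1}^k \big( |w_i - w_i'| + O(\gamma) |\mu_i - \mu_i'| + O(\gamma) |\tau_i - \tau_i'| \big) \leq O(\gamma)\, \| \theta - \theta' \|_1 \leq O(\gamma \sqrt{k})\, \| \theta - \theta' \|_2$, using that $\theta$ has $3k$ coordinates. Plugging in the hypothesis $\| \theta - \theta' \|_2 \leq C \frac{1}{\gamma} (\epsilon/k)^2$ cancels the factor $\gamma$ and leaves a bound of order $C \sqrt{k}\, \epsilon^2 / k^2 = C \epsilon^2 / k^{3/2} \leq C \epsilon$ (since $\epsilon \leq 2$ and $k \geq 1$), which is at most $\epsilon$ once the universal constant $C$ is chosen small enough. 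All the pieces except the precision estimate are one-line sensitivity computations; the $1/\tau$ term is the crux and is precisely why the statement carries the $\frac{1}{\gamma}$ factor in the perturbation radius and is confined to the well-behaved case.
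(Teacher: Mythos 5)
Your overall route is the same as the paper's: a componentwise triangle inequality, perturbing one parameter at a time, and reducing each step to the sensitivity of a single Gaussian via the change of variables $u = \tau(x-\mu)$. Your weight and mean steps agree with the paper's (the paper reduces the mean step to Lemma \ref{lem:perturbstandard}, noting $\tau|\mu_1-\mu_2| \leq \gamma \cdot C\gamma^{-1}(\epsilon/k)^2$; your explicit bound $\|\partial_\mu \normalpdf_{\mu,\tau}\|_1 = \tau\sqrt{2/\pi}$ gives the same thing with constants). The difference is the precision step, and here your diagnosis is correct but your repair is not. You rightly observe that $\|\normalpdf_{\mu,\tau_1}-\normalpdf_{\mu,\tau_2}\|_1$ depends only on the ratio $\tau_2/\tau_1$, so an \emph{absolute} bound $|\tau_1-\tau_2| \leq C\gamma^{-1}(\epsilon/k)^2$ controls nothing when the precisions themselves are tiny; the paper's own proof simply asserts that this step, like the mean step, ``follows from a change of variables and Lemma \ref{lem:perturbstandard},'' which silently requires a lower bound on the precisions that the lemma's hypotheses do not state. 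So the subtlety you isolate is genuine and is present (unaddressed) in the paper as well.

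The gap in your write-up is that you close this step by invoking ``in the well-behaved setting the precisions relevant to a good fit are $\gtrsim 1/\gamma$.'' Nothing supplies this: the lemma assumes only the upper bounds $\tau_i, \tau_i' \leq \gamma$, Definition \ref{def:wellbehaved} likewise gives only an upper bound, and the constraint set $\paramset_{k,\gamma}$ used in the system of polynomial inequalities enforces only $0 < \tau_i \leq \gamma$ — so the parameters to which the lemma is applied may have arbitrarily small precisions, and a lower bound of order $1/\gamma$ is not even the natural scale one would expect (a lower bound, if available, would come from the $[-1,1]$ support scale or a minimum-weight argument, not from $\gamma$). As written, your argument therefore proves the lemma only under an additional unstated hypothesis; without it, the chain $|\ln(\tau_2/\tau_1)| \leq |\tau_1-\tau_2|/\min(\tau_1,\tau_2) \leq \gamma|\tau_1-\tau_2|$ breaks at the second inequality. (A smaller instance of the same issue appears in the weight step, for you and for the paper: $|w_i-w_i'| \leq C\gamma^{-1}(\epsilon/k)^2$ is only $\leq \epsilon/(3k)$ when $\gamma \gtrsim \epsilon/k$, which is implicitly assumed.) So: same approach as the paper, correct and more explicit sensitivity computations, but the precision step is not closed from the stated hypotheses — you have in effect identified a gap in the lemma as stated rather than filled it.
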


Before we prove this lemma, we first need a calculation which quantifies the robustness of the standard normal pdf to small perturbations.
\begin{lemma}
\label{lem:perturbstandard}
For all $2 \geq \epsilon > 0$, there is a $\delta_1 = \delta_1 (\epsilon) = \frac{\epsilon}{20 \sqrt{\log (1 / \epsilon)}} \geq O(\epsilon^2)$ so that for all $\delta \leq \delta_1$, we have $\| \normalpdf (x) - \normalpdf (x + \delta) \|_1 \leq O(\epsilon)$.
\end{lemma}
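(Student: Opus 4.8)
The plan is to bound the $L_1$ difference via the fundamental theorem of calculus, which reduces it to a single one-dimensional integral of $|\normalpdf'|$ that we can evaluate in closed form. By the symmetry $x\mapsto -x$ we may assume $\delta\ge 0$, and since $\norm{\normalpdf(\cdot)-\normalpdf(\cdot+\delta)}_1\le \norm{\normalpdf}_1+\norm{\normalpdf(\cdot+\delta)}_1=2$ always holds, it suffices to treat $\epsilon$ below any fixed absolute constant (for larger $\epsilon$ the asserted bound $O(\epsilon)$ is immediate).

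First I would write, for each fixed $x$, $\normalpdf(x+\delta)-\normalpdf(x)=\int_0^\delta \normalpdf'(x+t)\,\diff t$, and hence
\[
  \norm{\normalpdf(\cdot)-\normalpdf(\cdot+\delta)}_1 \;=\; \int_\R \left|\int_0^\delta \normalpdf'(x+t)\,\diff t\right|\diff x \;\le\; \int_0^\delta \int_\R |\normalpdf'(x+t)|\,\diff x\,\diff t \;=\; \delta\,\norm{\normalpdf'}_1 ,
\]
where the exchange of the order of integration is justified by Tonelli's theorem (the integrand is nonnegative and measurable) and the last equality is translation invariance of $\norm{\cdot}_1$. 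Next I would evaluate $\norm{\normalpdf'}_1$ exactly: since $\normalpdf'(x)=-x\,\normalpdf(x)$ is positive on $(-\infty,0)$ and negative on $(0,\infty)$, we get $\norm{\normalpdf'}_1=2\normalpdf(0)=\frac{2}{\sqrt{2\pi}}=\sqrt{2/\pi}<1$. Combining the two, $\norm{\normalpdf(\cdot)-\normalpdf(\cdot+\delta)}_1\le \delta$ for every $\delta\ge 0$.

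It then remains only to substitute $\delta\le \delta_1=\frac{\epsilon}{20\sqrt{\log(1/\epsilon)}}$: this gives $\norm{\normalpdf(\cdot)-\normalpdf(\cdot+\delta)}_1\le \frac{\epsilon}{20\sqrt{\log(1/\epsilon)}}\le \epsilon=O(\epsilon)$ (using $\log(1/\epsilon)\ge\Omega(1)$ in the regime under consideration), which is the claimed bound; and the auxiliary estimate $\delta_1\ge\Omega(\epsilon^2)$ follows because $\epsilon\sqrt{\log(1/\epsilon)}=O(1)$ on $(0,1)$, so $\delta_1=\frac{\epsilon^2}{20\,\epsilon\sqrt{\log(1/\epsilon)}}=\Omega(\epsilon^2)$. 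There is essentially no obstacle here; the only points requiring a modicum of care are the Tonelli interchange (immediate) and the degenerate range of $\epsilon$ near the upper endpoint, which is disposed of by the trivial bound $\norm{\cdot}_1\le 2$. (A more hands-on proof --- splitting into the bulk $[-2\sqrt{\log(1/\epsilon)},\,2\sqrt{\log(1/\epsilon)}]$, where $|\normalpdf(x)-\normalpdf(x+\delta)|\le \delta\sup_y|\normalpdf'(y)|=O(\delta)$ over an interval of length $O(\sqrt{\log(1/\epsilon)})$, plus the exponentially small tail contributions --- also works and explains why the $\sqrt{\log(1/\epsilon)}$ appears in $\delta_1$, but the computation above is cleaner and self-contained.)
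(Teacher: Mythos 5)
Your proof is correct, but it takes a different (and in fact sharper) route than the paper. The paper's argument is local: it restricts attention to a central interval $I$ of length $O(\sqrt{\log(1/\epsilon)})$ carrying all but $\epsilon$ of the mass of both densities, bounds the pointwise difference on $I$ by $\delta \cdot \max_x \big| \tfrac{\diff}{\diff x} e^{-x^2/2} \big| \leq \delta$ via Taylor's theorem, and then pays the factor $|I| = O(\sqrt{\log(1/\epsilon)})$ when integrating, plus an $\epsilon$ from the tails; this is exactly why the $\sqrt{\log(1/\epsilon)}$ appears in the paper's choice of $\delta_1$. You instead bound the shift globally, writing $\normalpdf(x+\delta)-\normalpdf(x)=\int_0^\delta \normalpdf'(x+t)\,\diff t$ and applying Tonelli to get $\| \normalpdf(\cdot) - \normalpdf(\cdot+\delta) \|_1 \leq \delta \,\| \normalpdf' \|_1 = \delta\sqrt{2/\pi} \leq \delta$, after which the substitution $\delta \leq \delta_1$ is immediate. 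Your route avoids the interval/tail bookkeeping entirely, shows the $\sqrt{\log(1/\epsilon)}$ factor in $\delta_1$ is not actually needed (any $\delta = O(\epsilon)$ would do), and handles the edge cases (large $\epsilon$, the $\Omega(\epsilon^2)$ lower bound on $\delta_1$) cleanly via the trivial bound $\|\cdot\|_1 \leq 2$ and boundedness of $\epsilon\sqrt{\log(1/\epsilon)}$ on $(0,1)$; the paper's more hands-on version, which you correctly identify in your closing remark, merely explains the specific constant chosen for $\delta_1$. Since the lemma is only used with the stated $\delta_1$, both arguments serve equally well downstream.
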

\begin{proof}
Note that if $\epsilon > 2$ this claim holds trivially for all choices of $\delta$ since the $L_1$-distance between two distributions can only ever be $2$.
Thus assume that $\epsilon \leq 2$.
Let $I$ be an interval centered at $0$ so that both $\normalpdf (x)$ and $\normalpdf (x + \delta)$ assign $1 - \frac{\eps}{2}$ weight on this interval.
By standard properties of Gaussians, we know that $|I| \leq 10 \sqrt{\log (1 / \epsilon)}$.
We thus have
\[\| \normalpdf (x) - \normalpdf (x + \delta) \|_1 \leq \int_I \left| \normalpdf (x) - \normalpdf(x + \delta) \right| dx + \epsilon \; .\]
By Taylor's theorem we have that for all $x$,
\[\left| e^{-(x + \delta)^2 /2} - e^{-x^2 / 2} \right| \leq C \cdot \delta\]
for some universal constant $C = \max_{x \in \R} \frac{\diff}{\diff x} (e^{-\frac{x^2}{2}}) \leq 1$.
Since we choose $\delta_1 \leq \frac{\epsilon}{20 \sqrt{\log (1 / \epsilon)}}$, we must have that
\[\| \normalpdf (x) - \normalpdf (x + \delta) \|_1 \leq O(\epsilon) \; , \]
as claimed.
\end{proof}

\begin{proof}[Proof of Lemma \ref{lem:robust}]
Notice the $\ell_2$ guarantee of Renegar's algorithm (see Fact \ref{thm:Renegar}) also trivially implies an $\ell_\infty$ guarantee on the error in the parameters $\theta$; that is, for all $i$, we will have that the weights, means, and variances of the two components differ by at most $C \frac{1}{\gamma} \left( \frac{\epsilon}{k} \right)^2$.
By repeated applications of the triangle inequality to the quantity in the lemma, it suffices to show the three following claims:
\begin{itemize}
\item
For any $\mu, \tau$,
\[\| w_1 \normalpdf_{\mu, \tau}(x) - w_2 \normalpdf_{\mu, \tau}(x) \|_1 \leq \frac{\epsilon}{3k}\]
if $|w_1 - w_2| \leq C \frac{1}{\gamma} \left( \frac{\epsilon}{k} \right)^2$.
\item
For any $\tau \leq \gamma$,
\[\|\normalpdf_{\mu_1, \tau}(x) -\normalpdf_{\mu_2, \tau}(x) \|_1 \leq \frac{\epsilon}{3k}\]
if $|\mu_1 - \mu_2| \leq  C \frac{1}{\gamma} \left( \frac{\epsilon}{k} \right)^2$.
\item
For any $\mu$,
\[\|\normalpdf_{\mu, \tau_1}(x) - \normalpdf_{\mu, \tau_2}(x) \|_1 \leq \frac{\epsilon}{3k}\]
if $|\tau_1 - \tau_2| \leq C \frac{1}{\gamma} \left( \frac{\epsilon}{k} \right)^2$.
\end{itemize}
The first inequality is trivial, for $C$ sufficiently small.
The second and third inequalities follow from a change of variables and an application of Lemma  \ref{lem:perturbstandard}.
\end{proof}

Recall that our system of polynomial inequalities only considers mean parameters in $[-1, 1]$.
The following lemma shows that this restriction still allows us to find a good approximation once the density estimate is rescaled to $[-1, 1]$.
\begin{lemma}[Restricted means]
\label{lem:restrictedmeans}
Let $g: \R \to \R$ be a function supported on $[-1, 1]$, i.e., $g(x) = 0$ for $x \notin [-1, 1]$.
Moreover, let $\theta^* \in \paramset_k$.
Then there is a $\theta' \in \paramset_k$ such that $\mu_i' \in [-1, 1]$ for all $i \in [k]$ and
\[
  \norm{g - \mixpdf_{\theta'}}_1 \; \leq \; 5 \cdot \norm{g - \mixpdf_{\theta^*}}_1 \; .
\]
\end{lemma}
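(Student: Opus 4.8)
The plan is to start from a near-optimal GMM fit $\mixpdf_{\theta^*}$ and produce $\theta'$ by \emph{clamping} each out-of-range mean back into $[-1,1]$, leaving all mixing weights and precisions unchanged. Concretely, write $\theta^* = (w, \mu, \tau)$ and partition $[k] = G \cup B$ where $G = \{ i : \mu_i \in [-1,1]\}$ and $B$ is its complement; set $\mu_i' = \mu_i$ for $i \in G$, $\mu_i' = 1$ if $\mu_i > 1$, and $\mu_i' = -1$ if $\mu_i < -1$, and let $\theta' = (w, \mu', \tau)$. Since only the means change and they all land in $[-1,1]$, we immediately get $\theta' \in \paramset_k$ with all means in $[-1,1]$, so the only thing left is the $L_1$ bound.

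Next I would decompose $\mixpdf_{\theta^*} = A_G + A_B$ and $\mixpdf_{\theta'} = A_G + A_B'$, where $A_G = \sum_{i \in G} w_i \normalpdf_{\mu_i,\tau_i}$ is the unchanged part and $A_B, A_B'$ collect the bad components before and after clamping. Clamping preserves the total mass of each component, so $\norm{A_B}_1 = \norm{A_B'}_1 = w_B$ with $w_B := \sum_{i \in B} w_i$, and two applications of the triangle inequality give
\[
  \norm{g - \mixpdf_{\theta'}}_1 \; \le \; \norm{g - \mixpdf_{\theta^*}}_1 + \norm{A_B}_1 + \norm{A_B'}_1 \; = \; \norm{g - \mixpdf_{\theta^*}}_1 + 2 w_B \; .
\]
Thus the whole lemma reduces to showing $w_B \le 2 \norm{g - \mixpdf_{\theta^*}}_1$.

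For that last estimate I would use that the mean of a Gaussian is its median: a component with $\mu_i > 1$ puts at most $\tfrac12$ of its mass on $(-\infty, 1] \supseteq [-1,1]$, and symmetrically for $\mu_i < -1$, so every bad component places mass at least $\tfrac12$ outside $[-1,1]$; hence $\int_{\R \setminus [-1,1]} A_B \ge \tfrac12 w_B$. Since $g$ vanishes outside $[-1,1]$, on that region $|g - \mixpdf_{\theta^*}| = A_G + A_B \ge A_B \ge 0$, so $\norm{g - \mixpdf_{\theta^*}}_1 \ge \int_{\R \setminus [-1,1]} A_B \ge \tfrac12 w_B$, which yields $w_B \le 2\norm{g - \mixpdf_{\theta^*}}_1$ and therefore $\norm{g - \mixpdf_{\theta'}}_1 \le 5 \norm{g - \mixpdf_{\theta^*}}_1$, as claimed.

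I do not expect a genuine obstacle here. The only real design choice is to keep the weights and precisions fixed and merely move the means, rather than deleting the bad components outright (which would force an awkward renormalization of the surviving weights to stay in the simplex); once that is settled, the median property of the Gaussian together with the triangle inequality finishes the argument. It is worth double-checking that the constant comes out to exactly $5$ (the $+2w_B$ term combined with $w_B \le 2\,\norm{g - \mixpdf_{\theta^*}}_1$ contributes the extra $4$), but that is the only bookkeeping involved.
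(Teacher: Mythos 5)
Your proof is correct and follows essentially the same route as the paper: move the out-of-range means into $[-1,1]$ (the paper sets them to $0$ rather than clamping to $\pm 1$, which is immaterial), apply the triangle inequality, and bound the total bad weight by $2\norm{g - \mixpdf_{\theta^*}}_1$ using the fact that each bad component places at least half its mass outside $[-1,1]$, where $g$ vanishes. The bookkeeping yielding the constant $5$ matches the paper exactly.
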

\begin{proof}
Let $A = \{ i \, | \, \mu^*_i \in [-1, 1] \}$ and $B = [k] \setminus A$.
Let $\theta'$ be defined as follows:
\begin{itemize}
\item $w'_i = w^*_i$ for all $i \in [k]$.
\item $\mu'_i = \mu^*_i$ for $i \in A$ and $\mu_i' = 0$ for $i \in B$.
\item $\tau'_i = \tau^*_i$ for all $i \in [k]$.
\end{itemize}
From the triangle inequality, we have
\begin{equation}
\label{eq:restrictedmean}
  \norm{g - \mixpdf_{\theta'}}_1 \; \leq \; \norm{g - \mixpdf_{\theta^*}}_1 + \norm{\mixpdf_{\theta^*} - \mixpdf_{\theta'}}_1 \; .
\end{equation}
Hence it suffices to bound $\norm{\mixpdf_{\theta^*} - \mixpdf_{\theta'}}_1$.

Note that for $i \in B$, the corresponding $i$-th component has at least half of its probability mass outside $[-1, 1]$.
Since $g$ is zero outside $[-1, 1]$, this mass of the $i$-th component must therefore contribute to the error $\norm{g - \mixpdf_{\theta^*}}_1$.
Let $\mathbbm{1}[x \notin [-1, 1]]$ be the indicator function of the set $\R \setminus [-1, 1]$.
Then we get
\[
  \norm{g - \mixpdf_{\theta^*}}_1 \; \geq \; \norm{\mixpdf_{\theta^*} \cdot \mathbbm{1}[x \notin [-1, 1]]}_1 \; \geq \; \frac{1}{2} \norm*{\sum_{i \in B} w^*_i \cdot \normalpdf_{\mu^*_i, \tau^*_i}}_1 \; .
\]
For $i \in A$, the mixture components of $\mixpdf_{\theta^*}$ and $\mixpdf_{\theta'}$ match.
Hence we have
\begin{align*}
  \norm{\mixpdf_{\theta^*} - \mixpdf_{\theta'}}_1 \; &= \; \norm*{\sum_{i \in B} w^*_i \cdot \normalpdf_{\mu^*_i, \tau^*_i} \; - \; \sum_{i \in B} w'_i \cdot \normalpdf_{\mu'_i, \tau'_i}}_1 \\
      &\leq \; \norm*{\sum_{i \in B} w^*_i \cdot \normalpdf_{\mu^*_i, \tau^*_i}}_1 \; + \; \norm*{\sum_{i \in B} w'_i \cdot \normalpdf_{\mu'_i, \tau'_i}}_1 \\
      &= \; 2 \cdot \norm*{\sum_{i \in B} w^*_i \cdot \normalpdf_{\mu^*_i, \tau^*_i}}_1 \\
      &\leq \; 4 \cdot \norm{g - \mixpdf_{\theta^*}}_1 \; .
\end{align*}
Combining this inequality with \eqref{eq:restrictedmean} gives the desired result.
\end{proof}

We now prove our main theorem for the well-behaved case.
\begin{theorem}
Let $\delta, \epsilon, \gamma > 0$, $k \geq 1$, and let $f$ be the pdf of the unknown distribution.
Moreover, assume that the density estimate $\pdens'$ obtained in Line \ref{line:wbhgetdensity} of Algorithm \ref{alg:wbhalg} is $\gamma$-well-behaved.
Then the algorithm $\textsc{Learn-Well-Behaved-GMM}(k, \epsilon, \delta, \gamma)$ returns a set of GMM parameters $\theta'$ such that
\[
  \norm{\mixpdf_{\theta'} - f}_1 \; \leq \; 60 \cdot \OPT_k + \eps 
\]
with probability $1 - \delta$.
Moreover, the algorithm runs in time
\[
  \left(k \cdot \log\frac{1}{\eps}\right)^{O(k^4)} \!\! \cdot \, \log\frac{1}{\eps} \, \cdot \, \log\log\frac{k\gamma}{\eps} \; + \; \Otilde\left(\frac{k}{\eps^2}\right) \; .
\]
\end{theorem}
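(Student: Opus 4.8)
The plan is to condition on the (probability-$1-\delta$) event that \textsc{Estimate-Density} succeeds, so that $\norm{f - \pdens'}_1 \leq 4\OPT_k + \eps$ by Fact~\ref{thm:ADLS}, and then to bound the error introduced by the deterministic fitting phase. Since the $L_1$-norm is invariant under the affine rescaling taking $\pdens'$ to $\pdens$ and under its inverse (Lines~\ref{line:wbhscaleback1}--\ref{line:wbhscaleback3}), I would work throughout with the rescaled estimate $\pdens$, supported on $[-1,1]$, prove $\norm{\pdens - \mixpdf_{\hhat}}_1 = O(\OPT_k + \eps)$ for the output parameters $\hhat$, and add $4\OPT_k + \eps$ by the triangle inequality. (The case $\eps > 2$ is trivial since densities are within $L_1$-distance $2$, so I assume $\eps \leq 2$.)

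\paragraph{Feasibility of the system.}
First I would exhibit a feasible point of $S_{K, \pdens, \setP_\epsilon, \paramset_{k,\gamma}}(\nu)$, of Euclidean norm at most the bound $3k\gamma$ used in the algorithm, already for some $\nu_0 = O(\OPT_k + \eps)$. Starting from a minimizer $\theta^{\star}$ of $\norm{\pdens - \mixpdf_{\theta}}_1$ and comparing against the rescaled best GMM-fit to $f$ gives $\norm{\pdens - \mixpdf_{\theta^{\star}}}_1 \leq 5\OPT_k + \eps$, while the well-behavedness hypothesis guarantees all precisions of $\theta^{\star}$ are at most $\gamma$. Applying Lemma~\ref{lem:restrictedmeans} with $g = \pdens$ produces a $\theta^{\star\star} \in \paramset_{k,\gamma}$ with means in $[-1,1]$ and the same precisions, at the cost of a constant factor; Lemma~\ref{lem:goodapprox} then passes from $\mixpdf_{\theta^{\star\star}}$ to the shape-restricted polynomial $\polypdf_{\epsilon,\theta^{\star\star}}$, and Lemma~\ref{fact:akproperties} ($\norm{\cdot}_\Ak \leq \norm{\cdot}_1$) shows $\norm{\pdens - \polypdf_{\epsilon,\theta^{\star\star}}}_\Ak = O(\OPT_k + \eps)$. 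By Fact~\ref{thm:Renegar}, Renegar's algorithm therefore returns a $\lambda$-approximate solution for every $\nu$ exceeding this value, so the \textbf{while} loop --- which doubles $\nu$ from $\eps$ upward and, using $\OPT_k \leq 2$, runs for $O(\log\frac1\eps)$ iterations --- halts with some $\nu = O(\OPT_k + \eps)$ and outputs $\hhat$ lying within $\lambda = C\frac1\gamma(\eps/k)^2$ of a genuine solution $\theta' \in \paramset_{k,\gamma}$ satisfying $\norm{\pdens - \polypdf_{\epsilon,\theta'}}_\Ak < \nu$.

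\paragraph{From the approximate solution to the output.}
Next I would convert back: Lemma~\ref{lem:Akapprox} upgrades the $\Ak$-bound on $\pdens - \polypdf_{\epsilon,\theta'}$ to $\norm{\pdens - \polypdf_{\epsilon,\theta'}}_1 \leq \nu + 8\OPT_k + O(\eps)$, and Lemma~\ref{lem:goodapprox} then gives $\norm{\pdens - \mixpdf_{\theta'}}_1 = O(\OPT_k + \eps)$, where $\theta'$ is already a bona fide GMM. The remaining, and I expect most delicate, point is to control the \emph{fix-the-parameters} step. A component $i$ is zeroed only if its precision in $\hhat$ is nonpositive, which forces $\tau_i' \leq \lambda$; for such a tiny precision the $i$-th piece of $\polypdf_{\epsilon,\theta'}$ occupies an interval of length $\Theta\!\left(\sqrt{\log(1/\eps)}/\tau_i'\right) \gg 1$ around $\mu_i' \in [-1,1]$, so a $1-o(1)$ fraction of its mass $w_i'$ lies outside $[-1,1]$, where $\pdens \equiv 0$. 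Placing two of the $K = 4k$ intervals of the $\Ak$-norm just outside $[-1,1]$ then shows $\sum_{i \text{ zeroed}} w_i' = O(\nu) = O(\OPT_k + \eps)$, so zeroing these components and renormalizing perturbs the mixture by only $O(\OPT_k + \eps)$ in $L_1$. Finally, Lemma~\ref{lem:robust} (parameter stability), with $\lambda$ chosen small enough that the total parameter error --- including the renormalization of the surviving weights --- stays within its hypothesis, bounds the perturbation from $\theta'$ to the fixed parameters by $\eps$. Chaining these triangle inequalities, undoing the rescaling, and adding $\norm{f - \pdens'}_1 \leq 4\OPT_k + \eps$ gives $\norm{f - \mixpdf_{\hhat}}_1 = O(\OPT_k + \eps)$; a careful (and, as the paper notes, unoptimized) accounting of the constants yields the stated factor $60$.

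\paragraph{Running time.}
Density estimation costs $\Otilde(k/\eps^2)$ (treating $\delta$ as constant), and the rescaling, parameter-fixing, and unscaling steps cost $O(k)$. The \textbf{while} loop performs $O(\log\frac1\eps)$ iterations, each a call to \textsc{Solve-Poly-System} on $S_{K, \pdens, \setP_\epsilon, \paramset_{k,\gamma}}(\nu)$, which by the corollary of Section~\ref{sec:polyprogramGMM} runs in time $(k\log\frac1\eps)^{O(k^4)}\log\log\!\left(3 + \tfrac{3k\gamma}{\lambda}\right)$; with $\lambda = C\frac1\gamma(\eps/k)^2$ the final factor is $O(\log\log\frac{k\gamma}{\eps})$. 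Multiplying the per-iteration cost by the number of iterations and adding the density-estimation cost gives exactly the claimed running time.
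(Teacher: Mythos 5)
Your proposal is correct and follows essentially the same route as the paper: condition on the density-estimation guarantee, establish feasibility of the polynomial system at $\nu = O(\OPT_k + \eps)$ via well-behavedness, Lemma \ref{lem:restrictedmeans}, Lemma \ref{lem:goodapprox} and $\norm{\cdot}_\Ak \leq \norm{\cdot}_1$, then convert the approximate Renegar solution back to an $L_1$ guarantee using Lemma \ref{lem:Akapprox} and parameter stability, and finally undo the rescaling and add the estimation error. Your explicit $\Ak$-interval argument bounding the total weight of the zeroed (nonpositive-precision) components is actually more detailed than the paper, which dismisses that step with "it is not difficult to see"; otherwise the two arguments coincide, including the running-time accounting.
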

\begin{proof}First, we prove the claimed running time.
From Fact \ref{thm:ADLS}, we know that the density estimation step has a time complexity of $\Otilde(\frac{k}{\eps^2})$.
Next, consider the second stage where we fit shape-restricted polynomials to the density estimate.
Note that for $\nu = 3$, the system of polynomial inequalities $S_{\pdens, \setP_\epsilon} (\nu)$ is trivially satisfiable because the $\Ak$-norm is bounded by the $L_1$-norm and the $L_1$-norm between the two (approximate) densities is at most $2 + O(\eps)$.
Hence the while-loop in the algorithm takes at most $O(\log \frac{1}{\eps})$ iterations.
Combining this bound with the size of the system of polynomial inequalities (see Subsection \ref{sec:wbhpolyprogram}) and the time complexity of Renegar's algorithm (see Fact \ref{thm:Renegar}), we get the following running time for solving all systems of polynomial inequalities proposed by our algorithm:
\[
  \left(k \cdot \log\frac{1}{\eps}\right)^{O(k^4)} \!\! \cdot \, \log\log\frac{k\gamma}{\eps} \, \cdot \, \log\frac{1}{\eps} \; .
\]
This proves the stated running time.

Next, we consider the correctness guarantee.
We condition on the event that the density estimation stage succeeds, which occurs with probability $1 - \delta$ (Fact \ref{thm:ADLS}).
Then we have
\[
  \norm{f - \pdens'}_1 \leq 4\cdot \OPT_k + \eps \; .
\]
Moreover, we can assume that the rescaled density estimate $\pdens$ is $\gamma$-well-behaved.
Recalling Definition \ref{def:wellbehaved}, this means that there is a set of GMM parameters $\theta \in \paramset_k$ such that $\tau_i \leq \gamma$ for all $i \in [k]$ and
\begin{align*}
  \norm{\pdens - \mixpdf_{\theta}}_1 \; &= \; \min_{\theta^* \in \paramset_k} \norm{\pdens - \mixpdf_{\theta^*}}_1 \\
      &= \; \min_{\theta^* \in \paramset_k} \norm{\pdens' - \mixpdf_{\theta^*}}_1  \\
      &\leq \; \min_{\theta^* \in \paramset_k} \norm{\pdens' - f}_1 \, + \, \norm{f - \mixpdf_{\theta^*}}_1  \\
      &\leq \; 4 \cdot \OPT_k + \eps \; + \; \min_{\theta^* \in \paramset_k} \norm{f - \mixpdf_{\theta^*}}_1 \\
      &\leq \; 5 \cdot \OPT_k + \eps \; .
\end{align*}
Applying the triangle inequality again, this implies that
\[
  \norm{\pdens - \polypdf_{\eps, \theta}}_1 \; \leq \; \norm{\pdens - \mixpdf_\theta}_1 \, + \, \norm{\mixpdf_\theta - \polypdf_{\eps, \theta}}_1 \; \leq \; 5 \cdot \OPT_k + 2 \eps \; .
\]
This almost implies that $S_{\pdens, \setP_\epsilon} (\nu)$ is feasible for $\nu \geq 5 \cdot \OPT_k + 2 \eps$.
However, there are two remaining steps.
First, recall that the system of polynomial inequalities restricts the means to lie in $[-1, 1]$.
Hence we use Lemma \ref{lem:restrictedmeans}, which implies that there is a $\theta^\dagger \in \paramset_k$ such that $\mu^\dagger_i \in [-1, 1]$ and
\[
  \norm{\pdens - \polypdf_{\eps, \theta^\dagger}}_1 \; \leq \; 25 \cdot \OPT_k + 10 \eps \; .
\]
Moreover, the system of polynomial inequalities works with the $\Ak$-norm instead of the $L_1$-norm.
Using Lemma \ref{lem:Akapprox}, we get that
\[
  \norm{\pdens - \polypdf_{\eps, \theta^\dagger}}_\Ak \leq \; \norm{\pdens - \polypdf_{\eps, \theta^\dagger}}_1 \; .
\]
Therefore, in some iteration when
\[
  \nu \; \leq \; 2\cdot (25 \cdot \OPT_k + 10 \eps) \; = \; 50 \cdot \OPT_k + 20 \eps
\]
the system of polynomial inequalities $S_{\pdens, \setP_\epsilon, \paramset_{k, \gamma}} (\nu)$ become feasible and Renegar's algorithm guarantees that we find parameters $\theta'$ such that $\norm{\theta' - \theta^\dagger}_2 \leq \frac{\eps}{\gamma}$ for some $\theta^\dagger \in \paramset_k$ and
\[
  \norm{\pdens - \mixpdf_{\theta^\dagger}}_\Ak \; \leq \; 50 \cdot \OPT_k + O(\eps) \; .
\]
Note that we used well-behavedness here to ensure that the precisions in $\theta^\dagger$ are bounded by $\gamma$.
Let $\theta$ be the parameters we return.
It is not difficult to see that $\| \theta - \theta^\dagger \|_2 \leq  \frac{2\eps}{\gamma}$.
We convert this back to an $L_1$ guarantee via Lemma \ref{lem:Akapprox}:
\[
  \norm{\pdens - \mixpdf_{\theta^\dagger}}_1 \; \leq \; 56 \cdot \OPT_k + O(\eps) \; .
\]
Next, we use parameter stability (Lemma \ref{lem:robust}) and get
\[
  \norm{\pdens - \mixpdf_{\theta}}_1 \; \leq \; 56 \cdot \OPT_k + O(\eps) \; .
\]
We now relate this back to the unknown density $f$.
Let $\theta'$ be the parameters $\theta$ scaled back to the original density estimate (see Lines \ref{line:wbhscaleback1} to \ref{line:wbhscaleback3} in Algorithm \ref{alg:wbhalg}).
Then we have
\[
  \norm{\pdens' - \mixpdf_{\theta'}}_1 \; \leq \; 56 \cdot \OPT_k + O(\eps) \; .
\]
Using the fact that $\pdens'$ is a good density estimate, we get
\begin{align*}
  \norm{f - \mixpdf_{\theta'}}_1 \; &\leq \; \norm{f - \pdens'}_1 \, + \, \norm{\pdens' - \mixpdf_{\theta'}}_1 \\
    &\leq \; 4 \cdot \OPT_k + \eps \, + \, 56 \cdot \OPT_k + O(\eps) \\
    &\leq \; 60 \cdot \OPT_k + O(\eps) \; .
\end{align*}
As a final step, we choose an internal $\eps'$ in our algorithm so that the $O(\eps')$ in the above guarantee becomes bounded by $\eps$.
This proves the desired approximation guarantee.
\end{proof}


\section{General algorithm}
\label{sec:generalalg}
\subsection{Preliminaries}
As before, we let $\pdens$ be the piecewise polynomial returned by \textsc{Learn-Piecewise-Polynomial} (see Fact \ref{thm:ADLS}).
Let $I_0, \ldots, I_{s + 1}$ be the intervals defined by the breakpoints of $\pdens$.
Recall that $\pdens$ has degree $O(\log 1/\epsilon)$ and has $s + 2 = O(k)$ pieces.
Furthermore, $I_0$ and $I_{s + 1}$ are unbounded in length, and on these intervals $\pdens$ is zero.
By rescaling and translating, we may assume WLOG that $\cup_{i = 1}^s I_i$ is $[-1, 1]$.

Recall that $\setI$ is defined by the set of intervals $\{ I_1, \ldots, I_s \}$. 
We know that $s = O(k)$.
Intuitively, these intervals capture the different scales at which we need to operate.
We formalize this intuition below.

\begin{definition}
For any Gaussian $\normalpdf_{\mu, \tau}$, let $L (\normalpdf_{\mu, \tau})$ be the interval centered at $\mu$ on which $\normalpdf_{\mu, \tau}$ places exactly $W$ of its weight, where $0 < W < 1$ is a universal constant we will determine later.
By properties of Gaussians, there is some absolute constant $\omega > 0$ such that $\normalpdf_{\mu, \tau} (x) \geq \omega \tau$ for all $x \in L (\normalpdf_{\mu, \tau})$.
\end{definition}

\begin{definition}
\label{def:admissible}
Say a Gaussian $\normalpdf_{\mu, \tau}$ is \emph{admissible} if (i) $\normalpdf_{\mu, \tau}$ places at least $1/2$ of its mass in $[-1, 1]$, and (ii) there is a $J \in \setI$ so that $|J \cap L(\normalpdf_{\mu, \tau}) | \geq 1 / (8 s \tau)$ and so that 
\[\tau \leq \frac{1}{|J|} \cdot \phi, \]
where
\[
\phi \; = \; \phi(\epsilon, k) \; \ed \; \frac{32 k }{\omega \epsilon} m (m + 1)^2 \cdot (\sqrt{2} + 1)^m \; ,
\]
where $m$ is the degree of $\pdens$.
We call the interval $J \in \setI$ satisfying this property on which $\normalpdf_{\mu, \tau}$ places most of its mass its \emph{associated interval}.

Fix $\theta \in \paramset_k$. We say the $\ell$-th component is \emph{admissible} if the underlying Gaussian is admissible and moreover $w_\ell \geq \epsilon / k$.
\end{definition}

Notice that since $m = O(\log (1 / \epsilon))$, we have that $\phi (\epsilon, k) = \poly (1 / \epsilon, k)$.

\begin{lemma}[No Interaction Lemma]
\label{lem:cluster-struct}
Fix $\theta \in \paramset_k$.
Let $S_\good (\theta) \subseteq [k]$ be the set of $\ell \in [k]$ whose corresponding mixture component is admissible, and let $S_\bad (\theta)$ be the rest.
Then, we have
\[\left\| \mixpdf_\theta  - \pdens \right\|_1 \geq \left\| \sum_{\ell \in S_\good (\theta)} w_\ell \cdot \normalpdf_{\mu_\ell, \tau_\ell} - \pdens \right\|_1 +  \frac{1}{2} \sum_{\ell \in S_\bad (\theta)} w_\ell \; - 2 \epsilon.\]
\end{lemma}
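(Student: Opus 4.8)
Write $\mixpdf_\theta = M_{\good} + M_{\bad}$ with $M_{\good} = \sum_{\ell \in S_\good(\theta)} w_\ell \normalpdf_{\mu_\ell,\tau_\ell}$ and $M_{\bad} = \sum_{\ell\in S_\bad(\theta)} w_\ell\normalpdf_{\mu_\ell,\tau_\ell} \ge 0$. The plan is to split $\R$ into three regions --- a ``witness set'' $A \subseteq [-1,1]$ built from the cores of the bad components, the remainder $A^c \cap [-1,1]$, and $\R \setminus [-1,1]$ (where $\pdens$ vanishes) --- bound $\|\mixpdf_\theta - \pdens\|_1$ region by region, and check that the bad mass captured inside $A$ together with the bad mass outside $[-1,1]$ essentially accounts for all of $\|M_\bad\|_1$, while on $A$ the single bad component living there already dominates $\pdens$ pointwise so that no cancellation with $M_\good$ can occur.

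Concretely, for each $\ell \in S_\bad(\theta)$ with $w_\ell \ge \epsilon/k$ whose mean lies in, or just outside, $[-1,1]$, let $\tilde L_\ell$ be the core $L(\normalpdf_{\mu_\ell,\tau_\ell}) \cap [-1,1]$ with the parts lying in pieces of $\pdens$ that meet $L(\normalpdf_{\mu_\ell,\tau_\ell})$ only in a sub-interval shorter than $1/(8 s \tau_\ell)$ trimmed off, and set $A = \bigcup_\ell \tilde L_\ell$. A pigeonhole argument over the $s = O(k)$ pieces of $\pdens$ shows the trimmed core still substantially overlaps some piece $J \in \setI$ with $|J \cap L(\normalpdf_{\mu_\ell,\tau_\ell})| \ge 1/(8s\tau_\ell)$; since component $\ell$ is \emph{not} admissible, this forces $\tau_\ell > \phi/|J|$, i.e. $|J| > \phi/\tau_\ell$. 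As $\pdens|_J$ is a polynomial of degree $m$ with $\int_J |\pdens| \le \|\pdens\|_1 = O(1)$, a Markov/Nikolskii-type inequality for polynomials bounds $|\pdens|$ on $\tilde L_\ell$ by $O\!\big(\tfrac{m(m+1)^2(\sqrt2+1)^m}{|J|}\big) \le O\!\big(\tfrac{m(m+1)^2(\sqrt2+1)^m \tau_\ell}{\phi}\big)$, whereas $w_\ell \normalpdf_{\mu_\ell,\tau_\ell} \ge \omega w_\ell \tau_\ell \ge \tfrac{\omega\epsilon}{k} \tau_\ell$ on the core. The definition $\phi = \tfrac{32k}{\omega\epsilon} m(m+1)^2(\sqrt2+1)^m$ is precisely what makes the latter dominate the former by a large factor, so on $A$ we get pointwise $\mixpdf_\theta \ge w_\ell \normalpdf_{\mu_\ell,\tau_\ell} > \pdens$; the same estimate, multiplied by $|\tilde L_\ell| = O(1/\tau_\ell)$ and summed over the $\le k$ relevant components, gives $\int_A \pdens = O(\epsilon)$.

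Given these two facts the arithmetic is routine. On $A$, dropping the absolute value and using $|M_\good - \pdens| \le M_\good + \pdens$,
\[
  \int_A |\mixpdf_\theta - \pdens| \;=\; \int_A (\mixpdf_\theta - \pdens) \;\ge\; \int_A |M_\good - \pdens| + \int_A M_\bad - 2\!\int_A \pdens .
\]
On $A^c \cap [-1,1]$ the triangle inequality gives $\int |\mixpdf_\theta - \pdens| \ge \int |M_\good - \pdens| - \int M_\bad$, and on $\R\setminus[-1,1]$ we have $\pdens \equiv 0$, hence $\int |\mixpdf_\theta| = \int |M_\good - \pdens| + \int M_\bad$. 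Adding the three regions,
\[
  \|\mixpdf_\theta - \pdens\|_1 \;\ge\; \|M_\good - \pdens\|_1 + 2\!\int_A M_\bad + 2\!\int_{\R\setminus[-1,1]}\! M_\bad - \|M_\bad\|_1 - 2\!\int_A\pdens .
\]
It remains to show $\int_A M_\bad + \int_{\R\setminus[-1,1]} M_\bad \ge (1-o(1))\|M_\bad\|_1$: choosing the constant $W$ in the definition of $L(\cdot)$ close enough to $1$, a trimmed core $\tilde L_\ell$ still carries all but a small constant fraction of component $\ell$'s mass when $\mu_\ell$ lies in (or near the boundary of) $[-1,1]$ --- the boundary case works because the inside-mass it misses is exactly made up by $1-p_\ell$, the mass outside $[-1,1]$; bad components whose mean is well outside $[-1,1]$ place all but a small constant fraction of their mass in $\R\setminus[-1,1]$; and the bad components with $w_\ell < \epsilon/k$ have total weight $< \epsilon$ and are absorbed into the slack. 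Combined with $\int_A \pdens = O(\epsilon)$ this yields $\|\mixpdf_\theta - \pdens\|_1 \ge \|M_\good - \pdens\|_1 + \tfrac12 \|M_\bad\|_1 - 2\epsilon$, where the regime $\|M_\bad\|_1 = O(\epsilon)$ (in which the captured-mass bound is too weak) is instead handled by the trivial estimate $\|\mixpdf_\theta - \pdens\|_1 \ge \|M_\good - \pdens\|_1 - \|M_\bad\|_1$.

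The crux is the pointwise domination $\mixpdf_\theta > \pdens$ on $A$: this is where Definition~\ref{def:admissible} is used, the pigeonhole step locating a piece of $\pdens$ that the bad component's core overlaps, non-admissibility certifying that piece is ``long'' ($|J| > \phi/\tau_\ell$) so the degree-$m$ polynomial $\pdens$ cannot be too tall on it, and the polynomial estimate (with the $(\sqrt2+1)^m$ growth factor absorbing cores that spill slightly beyond their piece) converting this into the pointwise bound. The remaining bookkeeping --- trimming the slivers where a core pokes into a short piece with large $\pdens$, and the separate treatment of bad components with mean at or outside $\pm 1$ --- is elementary but is the most tedious part of the argument.
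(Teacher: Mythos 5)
Your overall strategy is a legitimate reorganization of the paper's argument rather than a copy of it: the paper peels off bad components one at a time, showing each removal costs at least $w_\ell/2 - 2\epsilon/k$, whereas you do a single global accounting over a witness set $A$ plus $[-1,1]\setminus A$ plus $\R\setminus[-1,1]$. The technical engine is the same in both proofs --- partitioning a bad component's core by the $1/(8s\tau_\ell)$ overlap threshold, the coefficient bound of Fact~\ref{lem:polybound} applied to each piece of $\pdens$, and the choice of $\phi$ making $w_\ell \normalpdf_{\mu_\ell,\tau_\ell} \ge 8\,\pdens$ on the long pieces --- and your three-region arithmetic is correct as displayed (note you never actually need the pointwise domination $\mixpdf_\theta \ge \pdens$ on $A$; only $\int_A \pdens = O(\epsilon)$ enters). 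One small side remark is off: the $(\sqrt2+1)^m$ factor does not let you extend the bound on $\pdens$ beyond a piece's boundary, nor do you need it to --- parts of a core that spill into another piece are either trimmed or covered by that piece's own bound.

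The genuine gap is your treatment of bad components that fail condition (i) of Definition~\ref{def:admissible}, i.e.\ place less than half their mass in $[-1,1]$. Your key inference ``since component $\ell$ is not admissible, this forces $\tau_\ell > \phi/|J|$'' is the negation of a conjunction and is only valid when (i) holds; a component can have its mean inside $[-1,1]$, have tiny precision (huge variance), fail (i), and satisfy (ii). If you put its core into $A$, the bound $\int_A \pdens = O(\epsilon)$ is unjustified --- its core can cover most of $[-1,1]$ and $\int_A\pdens$ can be $\Theta(1)$, wiping out the inequality via the $-2\int_A\pdens$ term. If instead you exclude it and rely on its mass outside $[-1,1]$, failing (i) only guarantees a captured fraction strictly above $\tfrac12$, whereas your accounting needs at least $\tfrac34$ of each heavy bad component's mass captured to produce the $\tfrac12 w_\ell$ gain. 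Your classification by mean location (``in or just outside'' versus ``well outside'') does not line up with the mass-based condition (i), and the sentence about the inside-mass being ``made up by $1-p_\ell$'' does not supply the missing factor. To be fair, the paper's own proof disposes of exactly this case with the word ``obvious'' (its peeling step gains only mass-outside minus mass-inside there, which is arbitrarily small for a component straddling the boundary), so this is precisely the delicate regime of the lemma; but as written your argument does not close it, and it is the one place where a real additional idea, not bookkeeping, is required.
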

We briefly remark that the constant $\frac{1}{2}$ we obtain here is somewhat arbitrary; by choosing different universal constants above, one can obtain any fraction arbitrarily close to one, at a minimal loss.
\begin{proof}
Fix $\ell \in S_\bad (\theta)$, and denote the corresponding component $\normalpdf_\ell$. 
Recall that it has mean $\mu_\ell$ and precision $\tau_\ell$. Let $L_\ell = L (\normalpdf_\ell)$.

Let $\mixpdf^{- \ell}_\theta (x) = \sum_{i \neq \ell} w_i \normalpdf_{\mu_i, \tau_i} (x)$ be the density of the mixture without the $\ell$-th component.
We will show that
\[
\left\| \mixpdf_\theta  - \pdens \right\|_1 \geq \| M_\theta^{-\ell} - \pdens \|_1 + \frac{1}{2} w_\ell - \frac{2 \epsilon}{k}.
\]
It suffices to prove this inequality because then we may repeat the argument with a different $\ell' \in S_\bad (\theta)$ until we have subtracted out all such $\ell$, and this yields the claim in the lemma.

If $w_\ell \leq \epsilon / k$ then this statement is obvious.
If $\normalpdf_\ell$ places less than half its weight on $[-1, 1]$, then this is also obvious.
Thus we will assume that $w_\ell > \epsilon / k$ and $\normalpdf_\ell$ places at least half its weight on $[-1, 1]$.

Let $\setI_\ell$ be the set of intervals in $\setI$ which intersect $L_\ell$.
We partition the intervals in $\setI_\ell$ into two groups:
\begin{enumerate}
\item
Let $\mathcal{L}_1$ be the set of intervals $J \in \setI_\ell$ so that $|J \cap L_\ell | \leq 1 / (8 s \tau_\ell)$.
\item
Let $\mathcal{L}_2$ be the set of intervals $J \in \setI_\ell$ not in $\mathcal{L}_1$ so that 
\[\tau_\ell >  \frac{1}{|J|} \cdot \phi \;. \]
\end{enumerate}
By the definition of admissibility, this is indeed a partition of $\setI_\ell$.

We have
\begin{align*}
\left\| \mixpdf_\theta - \pdens \right\|_1 &= \left\| \mixpdf^{-\ell}_\theta + \normalpdf_\ell  - \pdens \right\|_1 \\
&= \int_{L_\ell} \left| \mixpdf^{-\ell}_\theta (x) + w_\ell \normalpdf_\ell(x)  - \pdens (x) \right| \diff x + \int_{L_\ell^c} \left| \mixpdf^{-\ell}_\theta (x) + w_\ell \normalpdf_\ell(x)  - \pdens (x) \right| \diff x \\
&\geq \int_{L_\ell} \left| \mixpdf^{-\ell}_\theta (x) + w_\ell \normalpdf_\ell(x)  - \pdens (x) \right| \diff x + \int_{L_\ell^c} \left| \mixpdf^{-\ell}_\theta (x) - \pdens (x) \right| \diff x - w_\ell \int_{L_\ell^c} \normalpdf_\ell(x) \diff x \\
&\geq  \int_{L_\ell} \left| \mixpdf^{-\ell}_\theta (x) + w_\ell \normalpdf_\ell(x)  - \pdens (x) \right| \diff x + \int_{L_\ell^c} \left| \mixpdf^{-\ell}_\theta (x) - \pdens (x) \right| \diff x - (1 - W) w_\ell \; .
\end{align*}
We split the first term on the RHS into two parts, given by our partition:
\begin{align*}
 \int_{L_\ell} \left| \mixpdf^{-\ell}_\theta (x) + w_\ell \normalpdf_\ell(x)  - \pdens (x) \right| \diff x = \sum_{J \in \mathcal{L}_1} \int_{J \cap L_\ell} \left| \mixpdf^{-\ell}_\theta (x) + w_\ell \normalpdf_\ell(x)  - \pdens (x) \right| \diff x \\
+ \sum_{J \in \mathcal{L}_2} \int_{J \cap L_\ell} \left| \mixpdf^{-\ell}_\theta (x) + w_\ell \normalpdf_\ell(x)  - \pdens (x) \right| \diff x \; .
\end{align*}
We lower bound the contribution of each term separately.

\paragraph{(1)} We first bound the first term.
Since for each $J \in \mathcal{L}_1$ we have $|J \cap L_\ell| \leq 1 / (8 s \tau_\ell)$, we know that 
\begin{equation}
\label{eq:case1-Nbound}
\int_{J \cap L_\ell} \normalpdf_\ell(x) \diff x \leq \frac{1}{8 s}
\end{equation}
and so
\begin{align*}
\sum_{J \in \mathcal{L}_1} \int_{J \cap L_\ell} \left| \mixpdf^{-\ell}_\theta (x) + w_\ell \normalpdf_\ell(x)  - \pdens (x) \right| \diff x &\geq \sum_{J \in \mathcal{L}_1} \int_{J \cap L_\ell} \left| \mixpdf^{-\ell}_\theta (x)  - \pdens (x) \right| \diff x - |\mathcal{L}_1| \cdot w_\ell  \cdot \frac{1}{8 s} \\
& \geq \sum_{J \in \mathcal{L}_1} \int_{J \cap L_\ell} \left| \mixpdf^{-\ell}_\theta (x)  - \pdens (x) \right| \diff x - \frac{1}{8} w_\ell 
\end{align*}
since $\setI$ and thus $\mathcal{L}_1$ contains at most $s$ intervals.

\paragraph{(2)} We now consider the second term.
Fix a $J \in \mathcal{L}_2$, and let $p_J$ be the polynomial which is equal to $\pdens$ on $J$.
Since $\int \pdens \leq 1 + \eps \leq 2$ (as otherwise its $L_1$-distance to the unknown density would be more than $\epsilon$) and $\pdens$ is nonnegative, we also know that $\int_J p_J \leq 2$.
We require the following fact (see \cite{ADLS15}):
\begin{fact}
\label{lem:polybound}
Let $p(x) = \sum_{j = 0}^m c_j x^j$ be a degree-$m$ polynomial so that $p \geq 0$ on $[-1, 1]$ and $\int_{-1}^1 p \leq \beta$. Then $\max_i \abs{c_i} \leq \beta \cdot (m + 1)^2 \cdot (\sqrt{2} + 1)^m$.
\end{fact}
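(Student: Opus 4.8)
The plan is to expand $p$ in the Legendre basis, which is well suited to this problem because the Legendre polynomials are orthogonal on $[-1,1]$ and uniformly bounded by $1$ there, so the hypotheses $p \ge 0$ and $\int_{-1}^1 p \le \beta$ translate directly into bounds on the Legendre coefficients of $p$. Converting back to the monomial basis then only costs the $\ell_1$-norm of the coefficient vectors of the individual Legendre polynomials, and this is exactly where the $(\sqrt 2 + 1)^m$ factor will come from.

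In more detail, I would let $L_0, \dots, L_m$ be the Legendre polynomials on $[-1,1]$, normalized so that $L_j(1) = 1$; then $\max_{x \in [-1,1]} |L_j(x)| = 1$ and $\int_{-1}^1 L_i(x) L_j(x)\,\diff x = \tfrac{2}{2j+1}\,\delta_{ij}$. Writing $p = \sum_{j=0}^m a_j L_j$, orthogonality gives $a_j = \tfrac{2j+1}{2}\int_{-1}^1 p(x) L_j(x)\,\diff x$, and since $p \ge 0$ on $[-1,1]$ and $|L_j| \le 1$ there,
\[
  |a_j| \;\le\; \frac{2j+1}{2}\int_{-1}^1 p(x)\,\diff x \;\le\; \frac{2j+1}{2}\,\beta \;\le\; (m+1)\,\beta \; .
\]
Now write $L_j(x) = \sum_k e_{j,k}\, x^k$, so that the coefficient of $x^k$ in $p$ is $c_k = \sum_j a_j\, e_{j,k}$, and hence $|c_k| \le (m+1)\beta \sum_{j=0}^m |e_{j,k}| \le (m+1)\beta \sum_{j=0}^m \| L_j \|_{\ell_1}$, where $\| L_j \|_{\ell_1} \ed \sum_k |e_{j,k}|$. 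The crucial step is the bound $\| L_j \|_{\ell_1} \le (\sqrt 2 + 1)^j$. One clean way to get it: $L_j$ has nonzero coefficients only in degrees of the parity of $j$, and (its leading coefficient being positive) these alternate in sign, which yields the identity $\| L_j \|_{\ell_1} = |L_j(i)|$; a Bernstein--Walsh estimate then bounds $|L_j|$ on the Bernstein ellipse with foci $\pm 1$ passing through $i$ --- whose semi-axes sum to $1+\sqrt 2$ --- by $(1+\sqrt 2)^j \max_{[-1,1]}|L_j| = (1+\sqrt 2)^j$. Therefore $\sum_{j=0}^m \| L_j \|_{\ell_1} \le (m+1)(\sqrt 2 + 1)^m$, and combining,
\[
  |c_k| \;\le\; (m+1)\,\beta \cdot (m+1)(\sqrt 2 + 1)^m \;=\; \beta\,(m+1)^2(\sqrt 2+1)^m \; ,
\]
which is exactly the claimed bound (carrying the $\tfrac{2j+1}{2}$ factor instead of $m+1$ even leaves a little slack).

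The main obstacle is this $\ell_1$-coefficient bound $\| L_j \|_{\ell_1} \le (\sqrt 2+1)^j$: the rest is bookkeeping, but this step requires either the complex-analytic Bernstein--Walsh inequality on the Bernstein ellipse through $z=i$, or else the parity/sign argument reducing $\| L_j \|_{\ell_1}$ to $|L_j(i)|$ together with an estimate of $|L_j(i)|$ obtained from the generating function $\sum_j L_j(x) t^j = (1 - 2xt + t^2)^{-1/2}$. As an alternative to the entire Legendre computation, one can run the same argument in the Chebyshev basis, using the analogous $\| T_j \|_{\ell_1} = |T_j(i)| \le (\sqrt 2 + 1)^j$ together with a Nikolskii-type inequality $\max_{[-1,1]}|p| \le O(m^2)\,\| p \|_{L^1[-1,1]}$ to control the Chebyshev coefficients of $p$; this is slightly lossier in the polynomial prefactor but sidesteps Legendre orthogonality.
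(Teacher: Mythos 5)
Your proof is correct, and it is worth noting that the paper itself does not prove this statement at all: it is imported as a black-box fact from \cite{ADLS15}, where the usual argument runs through two separate classical ingredients (first a Nikolskii-type step converting the $L_1$/nonnegativity hypothesis into a sup-norm bound $\norm{p}_\infty \lesssim (m+1)^2\beta$ on $[-1,1]$, then the Chebyshev/V.~A.~Markov-style coefficient bound, with $\sum_k \abs{t_{j,k}} = \abs{T_j(i)} \le (\sqrt{2}+1)^j$, for polynomials bounded in sup norm). Your Legendre route merges these two steps: orthogonality tested directly against the nonnegative $p$ gives $\abs{a_j} \le \tfrac{2j+1}{2}\beta$ without ever passing through a sup-norm bound on $p$, and the basis-change cost is controlled by $\norm{L_j}_{\ell_1} = \abs{L_j(i)} \le (\sqrt{2}+1)^j$, which lands exactly on the stated constant $\beta(m+1)^2(\sqrt{2}+1)^m$ rather than something lossier. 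The two steps that genuinely need justification are both sound as you describe them: the parity and sign-alternation of the Legendre coefficients (immediate from the explicit expansion $L_j(x) = 2^{-j}\sum_k (-1)^k \binom{j}{k}\binom{2j-2k}{j}x^{j-2k}$), which yields the identity $\norm{L_j}_{\ell_1} = \abs{L_j(i)}$, and the Bernstein--Walsh estimate at $z=i$, where $\abs{z+\sqrt{z^2-1}} = 1+\sqrt{2}$, applied with $\max_{[-1,1]}\abs{L_j}=1$. So the writeup would constitute a complete, self-contained proof of the fact the paper only cites, by a cleaner route than the standard one.
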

Consider the shifted polynomial $q_J (u) = p_J (u \cdot (b_J - a_J) / 2 + (b_J + a_J) / 2)$ where $J = [a_J, b_J]$.
By applying Fact \ref{lem:polybound} to $q_J$ and noting that $\int_{-1}^1 q_J = (2 / |J|) \cdot \int_J p_J$, we conclude that the coefficients of $q_J$ are bounded by
\[\frac{4}{|J|} \cdot (m + 1)^2 \cdot (\sqrt{2} + 1)^m\]
and thus
\[| q_J (u)| \leq \frac{4}{|J|} \cdot m(m + 1)^2 \cdot (\sqrt{2} + 1)^m \]
for all $u \in [-1, 1]$, and so therefore the same bound applies for $p_J (x)$ for all $x \in J$.

But notice that since we assume that $J \in \mathcal{L}_2$, it follows that for all $x \in J \cap L_\ell$, we have that 
\[
\normalpdf_\ell (x) \geq 8 \frac{k}{\epsilon} p_J(x) \;, \]
and so in particular $w_\ell \normalpdf_\ell (x) \geq 8 p_J(x)$ for all $x \in J \cap L_\ell$.
Hence we have
\begin{align*}
\int_{J \cap L_\ell} \left| \mixpdf^{-\ell}_\theta (x) + w_\ell \normalpdf_\ell(x)  - \pdens (x) \right| \diff x &= \int_{J \cap L_\ell} \mixpdf^{-\ell}_\theta (x) + w_\ell \normalpdf_\ell(x)  - p_J (x) \diff x \\
&\geq \int_{J \cap L_\ell} \left| \mixpdf^{-\ell}_\theta (x)   - p_J (x) \right| \diff x + \int_{J \cap L_\ell} \frac{7}{8} w_\ell \normalpdf_\ell(x)  - p_J (x) \diff x \\
&\geq  \int_{J \cap L_\ell} \left| \mixpdf^{-\ell}_\theta (x)   - p_J (x) \right| \diff x + \frac{3 w_\ell}{4} \int_{J \cap L_\ell} \normalpdf_\ell (x) \diff x \; . \\
\end{align*}
where the second line follows since $\mixpdf^{-\ell}_\theta (x) + w_\ell \normalpdf_\ell  - p_J(x) \geq \left| \mixpdf^{-\ell}_\theta (x)   - p_J (x)\right| + \frac{7}{8} w_\ell \normalpdf_\ell(x)  - p_J (x)$ for all $x \in J \cap L_\ell$.

Thus
\begin{equation}
\label{eq:case3}
\begin{split}
\sum_{J \in \mathcal{L}_2} \int_{J \cap L_\ell} \big| \mixpdf^{-\ell}_\theta (x) & + w_\ell \normalpdf_\ell(x)  - \pdens(x) \big| \diff x \;\, \geq \\
&\sum_{J \in \mathcal{L}_2} \left( \int_{J \cap L_\ell} \left| \mixpdf^{-\ell}_\theta (x)   - \pdens (x) \right| \diff x + \frac{3 w_\ell}{4} \int_{J \cap L_\ell} \normalpdf_\ell (x) \diff x \right) \; .
\end{split}
\end{equation}
Moreover, by Equation (\ref{eq:case1-Nbound}), we know that 
\begin{align*}
\sum_{J \in \mathcal{L}_2} \int_{J \cap L_\ell} \normalpdf_\ell (x) \diff x &= \int_{L_\ell} \normalpdf_\ell (x) dx - \sum_{J \in \mathcal{L}_1 } \int_{J \cap L_\ell} \normalpdf_\ell (x) \diff x \\
& \geq W - \frac{1}{8} \;,
\end{align*}
since $\mathcal{L}_1$ contains at most $s$ intervals. 
Thus, the RHS of Equation (\ref{eq:case3}) must be lower bounded by
\[
\sum_{J \in \mathcal{L}_2} \int_{J \cap L_\ell} \left| \mixpdf^{-\ell}_\theta (x)   - \pdens (x) \right| \diff x + \frac{3}{4} \left( W - \frac{1}{8}\right) w_\ell \;.
\]

\paragraph{Putting it all together.}
Hence, we have
\begin{align*}
\int_{L_\ell} | M_\theta (x) - \pdens (x) | \diff x &=   \sum_{J \in \mathcal{L}_1} \int_{J \cap L_\ell} | M_\theta (x) - \pdens (x) | \diff x + \sum_{J \in \mathcal{L}_2} \int_{J \cap L_\ell} | M_\theta (x) - \pdens (x) | \diff x \\
& \geq \sum_{J \in \mathcal{L}_1} \int_{J \cap L_\ell} | M_\theta^{-\ell} (x) - \pdens (x) | \diff x + \sum_{J \in \mathcal{L}_2} \int_{J \cap L_\ell} | M_\theta^{-\ell} (x) - \pdens (x) | \diff x  \\
& \qquad\qquad + \left[ \frac{3}{4} \left( W - \frac{1}{8}\right) - \frac{1}{8} \right]  w_\ell \\
&\geq \int_{L_\ell} | M_\theta^{-\ell} (x) - \pdens (x) | \diff x + \left[ \frac{3}{4} \left( W - \frac{1}{8}\right) - \frac{1}{8} \right]  w_\ell \; .
\end{align*}

We therefore have
\begin{align*}
\| M_\theta - \pdens \|_1 &= \int_{L_\ell} | M_\theta (x) - \pdens (x) | \diff x + \int_{L_\ell^c} |M_\theta (x) - \pdens (x) | \diff x \\
&\geq \int_{L_\ell} | M_\theta (x) - \pdens (x) | \diff x + \int_{L_\ell^c} \left| M_\theta^{-\ell} (x) - \pdens (x) \right| \diff x - \int_{L_\ell^c} w_\ell \normalpdf_\ell(x) \diff x\\
&\geq \int_{L_\ell} | M_\theta (x) - \pdens (x) | \diff x + \int_{L_\ell^c} \left| M_\theta^{-\ell} (x) - \pdens (x) \right| \diff x - (1 - W) w_\ell\\
&\geq \int_{L_\ell} | M_\theta^{-\ell} (x) - \pdens (x) | \diff x + \left( \frac{7}{4} W - \frac{39}{32} \right) w_\ell + \int_{L_\ell^c} \left| M_\theta^{-\ell} (x) - \pdens (x) \right| \diff x \\
& = \| M_\theta^{-\ell} - \pdens \|_1 + \frac{1}{2} w_\ell \;,
\end{align*}
when we set $W = 55/56$.
\end{proof}

\subsection{A parametrization scheme for a single Gaussian} 
Intuitively, Lemma \ref{lem:cluster-struct} says that for any $\theta \in \paramset_k$, there are some components which have bounded variance and which can be close to $\pdens$ (the components in $S_1$), and the remaining components, which may have unbounded variance but which will be far away from $\pdens$.
Since we are searching for a $k$-GMM which is close to $\pdens$, in some sense we should not have to concern ourselves with the latter components since they cannot meaningfully interact with $\pdens$.
Thus we only need find a suitably robust parametrization for admissible Gaussians.

Such a parametrization can be obtained by linearly transforming the domain so that the associated interval gets mapped to $[-1, 1]$.
Formally, fix a Gaussian $\normalpdf_{\mu,\tau}$ and an interval $J$.
Then it can be written as
\begin{equation}
\label{eq:rescaled}
\normalpdf_{\mu, \tau} (x) = \frac{\tilde{\tau}}{|J| / 2} \normalpdf \left( \tilde{\tau} \cdot \frac{x - \midp (J)}{|J| / 2} - \tilde{\mu} \right) \; ,
\end{equation}
for some unique $\tilde{\mu}$ and $\tilde{\tau}$, where for any interval $I$, we define $\midp(I)$ to denote its midpoint. Call these the \emph{rescaled mean with respect to $J$} and \emph{rescaled precision with respect to $J$} of $\normalpdf$, respectively.
Concretely, given $\mu$, $\tau$, and an interval $J$, the rescaled variance and mean with respect to $J$ are defined to be
\[\tilde{\tau} = \frac{|J|}{2} \tau \;, ~~~~~~~~~~~ \tilde{\mu} = \frac{\tilde{\tau}}{|J| / 2} \left( \mu - \midp (J) \right) \; . \]
For any $\tilde{\mu}, \tilde{\tau}$, we let $\normalpdf^{r, J}_{\tilde{\mu}, \tilde{\tau}}(x)$ denote the function given by the RHS of Equation (\ref{eq:rescaled}).
The following two lemmas says that these rescaled parameters have the desired robustness properties.

\begin{lemma}
\label{lem:rescaled-robust}
Let $\normalpdf^{r, J}_{\tilde{\mu}, \tilde{\tau}}$ be an admissible Gaussian with rescaled mean $\tilde{\mu}$ and rescaled precision $\tilde{\tau}$ with respect to its associated interval $J \in \setI$.
Then $\tilde{\mu} \in [-\frac{2 s \phi}{\omega} , \frac{2 s \phi}{\omega}]$ and $\sqrt{2 \pi} \cdot \omega / (16 s) \leq \tilde{\tau} \leq \phi / 2$.
\end{lemma}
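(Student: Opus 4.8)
The plan is to unwind the definition of the rescaled parameters and reduce the four inequalities to two elementary facts coming from the definition of admissibility (Definition~\ref{def:admissible}): that $|J| \geq |J \cap L(\normalpdf_{\mu,\tau})| \geq \frac{1}{8 s \tau}$, and that $\tau \leq \frac{\phi}{|J|}$. First I would simplify: since $\frac{\tilde\tau}{|J|/2} = \tau$, the definitions of $\tilde\mu$ and $\tilde\tau$ collapse to $\tilde\tau = \frac{|J|}{2}\tau$ and $\tilde\mu = \tau\,(\mu - \midp(J))$.

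The precision bounds then follow immediately. For the upper bound, admissibility gives $\tau \leq \frac{\phi}{|J|}$, hence $\tilde\tau = \frac{|J|}{2}\tau \leq \frac{\phi}{2}$. For the lower bound, since $J \cap L(\normalpdf_{\mu,\tau}) \subseteq J$ we get $|J| \geq \frac{1}{8 s \tau}$, so $\tilde\tau = \frac{|J|\tau}{2} \geq \frac{1}{16 s}$; it remains only to note that $\omega \leq \frac{1}{\sqrt{2\pi}}$, because for any $x \in L(\normalpdf_{\mu,\tau})$ we have $\omega\tau \leq \normalpdf_{\mu,\tau}(x) \leq \frac{\tau}{\sqrt{2\pi}}$. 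Thus $\tilde\tau \geq \frac{1}{16 s} \geq \frac{\sqrt{2\pi}\,\omega}{16 s}$.

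For the mean bound, the only step requiring a small idea is to control $|\mu - \midp(J)|$ without circularly re-using the lower bound on $|J|$. Since $|J \cap L(\normalpdf_{\mu,\tau})| \geq \frac{1}{8 s \tau} > 0$, the set $J \cap L(\normalpdf_{\mu,\tau})$ is nonempty; fixing any point $x_0$ in it, we have $|x_0 - \mu| \leq \frac{1}{2}|L(\normalpdf_{\mu,\tau})| = \frac{c_W}{\tau}$ (where $c_W$ is the half-width at which the standard normal captures weight $W$, so that $L(\normalpdf_{\mu,\tau}) = [\mu - c_W/\tau, \mu + c_W/\tau]$) because $x_0 \in L(\normalpdf_{\mu,\tau})$, and $|x_0 - \midp(J)| \leq \frac{|J|}{2}$ because $x_0 \in J$. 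The triangle inequality gives $|\mu - \midp(J)| \leq \frac{c_W}{\tau} + \frac{|J|}{2}$, and therefore $|\tilde\mu| = \tau\,|\mu - \midp(J)| \leq c_W + \frac{\tau|J|}{2} = c_W + \tilde\tau \leq c_W + \frac{\phi}{2}$. Finally, since $\phi = \phi(\epsilon, k)$ exceeds a large absolute constant (in particular $\phi > 2 c_W$), $s \geq 1$, and $\omega < 1$, we conclude $c_W + \frac{\phi}{2} \leq \phi \leq \frac{2 s \phi}{\omega}$, which is the claimed bound.

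I do not expect a genuine obstacle; the single point to be careful about is in the mean bound, where the naive attempt to bound $\tau|J|$ using $|J| \geq \frac{1}{8 s \tau}$ goes the wrong way, and one must instead pass through a common point of $J$ and $L(\normalpdf_{\mu,\tau})$ so that $\tau|J|$ is replaced by the already-controlled quantity $2\tilde\tau$.
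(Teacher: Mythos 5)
Your proof is correct, and while the precision bounds follow essentially the paper's route, your mean bound is a genuinely different and simpler argument. For the precision lower bound the paper sandwiches $\int_J \normalpdf^{r,J}_{\tilde\mu,\tilde\tau}$ between $\omega/(8s)$ and $|J|\tau/\sqrt{2\pi}$, whereas you use $|J| \geq |J \cap L(\normalpdf_{\mu,\tau})| \geq 1/(8s\tau)$ directly and then note $\omega \leq 1/\sqrt{2\pi}$; both are one line, and yours even gives the marginally stronger $\tilde\tau \geq 1/(16s)$. The real divergence is the bound on $\tilde\mu$: the paper tiles the line with $4s/\omega$ consecutive translates of $J$, argues that if $\mu$ lay beyond all of them each translate would capture at least $\omega/(8s)$ of the Gaussian's mass (by monotonicity toward the mean), so the union would hold mass at least $1/2$ while not containing the mean, a contradiction; this places $\mu$ within $4s|J|/\omega$ of $\midp(J)$ and yields a bound of the form $|\tilde\mu| = O(s\tilde\tau/\omega)$. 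You instead pick a common point $x_0 \in J \cap L(\normalpdf_{\mu,\tau})$ (nonempty since $|J \cap L| \geq 1/(8s\tau) > 0$) and apply the triangle inequality using the constant half-width $c_W/\tau$ of $L$, obtaining the sharper $|\tilde\mu| \leq c_W + \tilde\tau \leq c_W + \phi/2$, and then absorb $c_W$ via $\phi \geq 2c_W$ (which indeed holds, since $\phi \geq 32/(\omega\epsilon)$ dwarfs the universal constant $c_W$ determined by $W = 55/56$). Your version is more elementary and gives a tighter estimate; the paper's tiling argument avoids referring to the explicit half-width of $L$ but is longer and yields a weaker (still sufficient) bound. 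You also correctly identify the one pitfall — that $|J| \geq 1/(8s\tau)$ bounds $\tau|J|$ from the wrong side — which is exactly what forces either your common-point trick or the paper's tiling.
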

\begin{proof}
We first show that $\sqrt{2\pi} \cdot \omega / (16s) \leq \tilde{\tau} \leq \phi / 2$. 
That the rescaled variance is bounded from above follows from a simple change of variables and the definition of admissibility.
By the definition of admissibility, we also know that 
\begin{align*} 
\int_J \normalpdf^{r, J}_{\tilde{\mu}, \tilde{\tau}} \diff x &\geq \int_{J \cap L(\normalpdf^{r, J}_{\tilde{\mu}, \tilde{\tau}})} \normalpdf^{r, J}_{\tilde{\mu}, \tilde{\tau}} \diff x \\
&\geq \omega \tau \cdot |J \cap L(\normalpdf^{r, J}_{\tilde{\mu}, \tilde{\tau}})| \\
&\geq \frac{\omega}{8 s} \; .
\end{align*}
Furthermore, we trivially have
\[
|J| \cdot \frac{\tau}{\sqrt{2\pi}} \; \geq \; \int_J \normalpdf^{r, J}_{\tilde{\mu}, \tilde{\tau}} \diff x  \; .
\]
Thus, the precision $\tau$ must be at least $\sqrt{2 \pi} \omega / (8 s |J|)$, and so its rescaled precision must be at least $\sqrt{2 \pi} \omega / (16 s)$, as claimed.

We now show that $\tilde{\mu} \in [-\frac{2 s \phi}{\omega} , \frac{2 s \phi}{\omega}]$.
Because $\normalpdf^{r, J}_{\tilde{\mu}, \tilde{\tau}}$ is an admissible Gaussian with associated interval $J$, we know that $| J \cap L(\normalpdf^{r, J}_{\tilde{\mu}, \tilde{\tau}}) | \geq 1 / (8 s \tau)$.
Moreover, we know that on $J \cap L(\normalpdf^{r, J}_{\tilde{\mu}, \tilde{\tau}})$, we have $\normalpdf^{r, J}_{\tilde{\mu}, \tilde{\tau}} (x) \geq \omega \tau$.
Thus in particular
\[ \int_J \normalpdf^{r, J}_{\tilde{\mu}, \tilde{\tau}} \diff x \geq \int_{J \cap L(\normalpdf^{r, J}_{\tilde{\mu}, \tilde{\tau}})} \normalpdf^{r, J}_{\tilde{\mu}, \tilde{\tau}} \diff x \geq \frac{\omega}{8 s} \; . \]

Define $\tilde{J}$ to be the interval which is of length $8 s |J| / \omega$ around $\midp (J)$.
We claim that $\mu \in \tilde{J}$, where $\mu$ is the mean of $\normalpdf^{r, J}_{\tilde{\mu}, \tilde{\tau}}$.

Assume that $\midp(J) \leq \mu$. Let $J_0 = J$ and inductively, for $i < 4 s / \omega$, let $J_i$ be the interval with left endpoint at the right endpoint of $J_{i - 1}$ and with length $|J|$. 
That is, the $J_i$ consist of $4 s / \omega$ consecutive, non-intersecting copies of $J$ starting at $J$ and going upwards on the number line (for simplicity of exposition we assume that $4 s / \omega$ is an integer).
Let $J^\dagger = \cup_{i = 0}^{(4 s / \omega) - 1} J_i$. 
We claim that $\mu \in J^\dagger$.
Suppose not. 
This means that $\mu$ is strictly greater than any point in any $J_i$.
In particular, this implies that for all $i$,
\begin{align*} 
\int_{J_i} \normalpdf^{r, J}_{\tilde{\mu}, \tilde{\tau}} \diff x &\geq \int_{J_{i - 1}} \normalpdf^{r, J}_{\tilde{\mu}, \tilde{\tau}} \diff x \\
&\geq \int_{J_{0}} \normalpdf^{r, J}_{\tilde{\mu}, \tilde{\tau}} \diff x \\
&\geq \frac{\omega}{8 s} \;. 
\end{align*}
But then this would imply that 
\[\int_{J^\dagger}\normalpdf^{r, J}_{\tilde{\mu}, \tilde{\tau}} \diff x = \sum_{i = 0}^{(4 s / \omega) - 1}  \int_{J_i}\normalpdf^{r, J}_{\tilde{\mu}, \tilde{\tau}} \diff x \geq \frac{1}{2} \;.\]
Notice that $J^\dagger$ is itself an interval.
But any interval containing at least $1/2$ of the weight of any Gaussian must contain its mean, which we assumed did not happen.
Thus we conclude that $\mu \in J^\dagger$.
Moreover, $J^\dagger \subseteq \tilde{J}$, so $\mu \in \tilde{J}$, as claimed.
If $\midp(J) \geq \mu$ then apply the symmetric argument with $J_i$ which are decreasing on the number line instead of increasing.

We have thus shown that $\mu \in \tilde{J}$.
It is a straightforward calculation to show that this implies that $\tilde{\mu} \in [-\frac{4 s \tau}{\omega}, \frac{4 s \tau}{\omega}]$.
By the above, we know that $\tau \leq \phi / 2$ and thus $\tilde{\mu} \in [-\frac{2 s \phi}{\omega}, \frac{2 s \phi}{\omega}]$, as claimed.

\end{proof}

\begin{lemma}
\label{lem:perturbgeneral}
For any interval $J$, and $\tilde{\mu}_1, \tilde{\tau}_1, \tilde{\mu}_2, \tilde{\tau}_2$ so that $|\tilde{\tau}_i| \leq 2 \phi$ for $i \in \{1, 2\}$ and $|\tilde{\mu_1} - \tilde{\mu}_2| + | \tilde{\tau_1} - \tilde{\tau}_2 | \leq O((\epsilon / (\phi k ))^2)$, we have
\[ \| \normalpdf^{r, J}_{\tilde{\mu}_1, \tilde{\tau}_1}(x) - \normalpdf^{r, J}_{\tilde{\mu}_2, \tilde{\tau}_2} (x) \|_1 \leq \epsilon \; . \]
\end{lemma}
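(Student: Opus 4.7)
After the affine change of variables $u = (x - \midp(J))/(|J|/2)$, the claim reduces to bounding $\int |g_{\tilde{\mu}_1,\tilde{\tau}_1}(u) - g_{\tilde{\mu}_2,\tilde{\tau}_2}(u)|\,\diff u$, where $g_{\tilde{\mu},\tilde{\tau}}(u) \ed \tilde{\tau}\normalpdf(\tilde{\tau} u - \tilde{\mu})$. Completing the square shows that $g_{\tilde{\mu},\tilde{\tau}}$ is itself a normal pdf, namely $\normalpdf_{\tilde{\mu}/\tilde{\tau},\,\tilde{\tau}}$. The plan is to insert the intermediate function $g_{\tilde{\mu}_2,\tilde{\tau}_1}$, apply the triangle inequality, and handle the resulting mean-only and precision-only perturbations separately.

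The mean-only piece $\|g_{\tilde{\mu}_1,\tilde{\tau}_1}- g_{\tilde{\mu}_2,\tilde{\tau}_1}\|_1$ collapses under the further substitution $v = \tilde{\tau}_1 u$ to $\int |\normalpdf(v - \tilde{\mu}_1) - \normalpdf(v - \tilde{\mu}_2)|\,\diff v$, which is bounded directly by Lemma \ref{lem:perturbstandard}, since the hypothesis $|\tilde{\mu}_1 - \tilde{\mu}_2| \leq O((\epsilon/(\phi k))^2)$ is much stronger than the $\delta_1(\epsilon) = \Omega(\epsilon/\sqrt{\log(1/\epsilon)})$ threshold that lemma requires. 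For the precision-only piece $\|g_{\tilde{\mu}_2,\tilde{\tau}_1} - g_{\tilde{\mu}_2,\tilde{\tau}_2}\|_1$, I would write $g_{\tilde{\mu}_2,\tilde{\tau}_1} - g_{\tilde{\mu}_2,\tilde{\tau}_2} = \int_{\tilde{\tau}_2}^{\tilde{\tau}_1}\partial_\tau g_{\tilde{\mu}_2,\tau}\,\diff\tau$ and use Minkowski's inequality to move the $L_1$-norm inside the integral in $\tau$. A short calculation that differentiates $g_{\tilde{\mu}_2,\tau}(u)$ and then substitutes $v = \tau u - \tilde{\mu}_2$ yields $\|\partial_\tau g_{\tilde{\mu}_2,\tau}\|_1 = O\bigl((1+|\tilde{\mu}_2|)/\tau\bigr)$, and hence $\|g_{\tilde{\mu}_2,\tilde{\tau}_1} - g_{\tilde{\mu}_2,\tilde{\tau}_2}\|_1 = O\bigl((1+|\tilde{\mu}_2|)\,|\tilde{\tau}_1 - \tilde{\tau}_2|/\min(\tilde{\tau}_1,\tilde{\tau}_2)\bigr)$.

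The main obstacle is the $1/\tilde{\tau}$ factor in the precision bound: a perturbation of absolute size $O((\epsilon/(\phi k))^2)$ in $\tilde{\tau}$ only yields an $O(\epsilon)$ change in $L_1$ provided $\tilde{\tau}$ is bounded away from zero. This is where the admissibility hypothesis implicit in the intended usage enters. Since the rescaled parameters in the algorithm always arise from admissible Gaussians, Lemma \ref{lem:rescaled-robust} supplies $\tilde{\tau}_i \geq \sqrt{2\pi}\omega/(16 s) = \Omega(1/k)$ together with $|\tilde{\mu}_i| \leq 2 s \phi / \omega = O(k\phi)$. Plugging these into the precision bound gives $O(\phi k^2\,|\tilde{\tau}_1 - \tilde{\tau}_2|) \leq O(\epsilon^2/\phi) \leq \epsilon/2$ for $\epsilon \leq 1$, and adding the two pieces completes the proof.
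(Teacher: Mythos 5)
Your argument is correct, and it is substantially more detailed than the paper's own proof, which consists of the single sentence that the claim ``follows by a change of variables and Lemma \ref{lem:perturbstandard}.'' For the mean perturbation you do exactly what that sentence suggests; for the precision perturbation you supply the missing estimate explicitly, via $\|\partial_\tau g_{\tilde{\mu},\tau}\|_1 = O((1+|\tilde{\mu}|)/\tau)$ and integration in $\tau$, which is a clean way to handle the scale change that Lemma \ref{lem:perturbstandard} (a pure shift bound) does not cover on its own.

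One point deserves emphasis: your proof invokes the bounds $\tilde{\tau}_i \geq \sqrt{2\pi}\,\omega/(16s)$ and $|\tilde{\mu}_i| \leq 2s\phi/\omega$ from Lemma \ref{lem:rescaled-robust}, which are \emph{not} hypotheses of the lemma as stated. This is not a defect of your argument but a genuine gap in the statement itself: without a lower bound on the rescaled precisions the claim is false. For instance, take $\tilde{\mu}_1 = \tilde{\mu}_2 = 0$, $\tilde{\tau}_1 = \delta$ and $\tilde{\tau}_2 = 2\delta$ with $\delta$ below the allowed perturbation size; the hypotheses hold, yet the two densities are centered Gaussians whose scales differ by a factor of $2$, so their $L_1$-distance is a universal constant, not $O(\epsilon)$. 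A similar issue forces some bound on $|\tilde{\mu}_i|$, since with fixed $\tilde{\mu}$ a change in $\tilde{\tau}$ moves the actual mean $\tilde{\mu}/\tilde{\tau}$ by an amount proportional to $|\tilde{\mu}|$ relative to the standard deviation. In every application the parameters lie in $\paramset^\text{valid}_k$ (equivalently, arise from admissible components), so the bounds you import are always available and the algorithm's analysis is unaffected; but the lemma should be read with those constraints as implicit hypotheses, and your write-up has the virtue of making explicit where they are used, something the paper's one-line proof conceals.
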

\begin{proof}
This follows by a change of variables and Lemma \ref{lem:perturbstandard}.
\end{proof}

Moreover, this rescaled parametrization naturally lends itself to approximation by a piecewise polynomial, namely, replace the standard normal Gaussian density function in Equation (\ref{eq:rescaled}) with $\tilde{P}_\epsilon$.
This is the piecewise polynomial that we will use to represent each individual component in the Gaussian mixture.

\subsection{A parametrization scheme for \texorpdfstring{$k$}{k}-GMMs}
In the rest of this section, our parametrization will often be of the form described above.
To distinguish this from the previous notation, for any $\theta \in \paramset_k$, and any set of $k$ intervals $J_1, \ldots, J_k$, we will let $\theta^r \in \paramset_k$ denote the rescaled parameters so that if the $i$-th component in the mixture represented by $\theta$ has parameters $w_i, \mu_i, \tau_i$, then the $i$-th component in the mixture represented by $\theta^r$ has parameters $w_i, \tilde{\mu}_i, \tilde{\tau}_i$ so that $\normalpdf_{\mu_i, \tau_i} = \normalpdf^{r, J_i}_{\tilde{\mu}_i, \tilde{\tau}_i}$.
Notice that the transformation between the original and the rescaled parameters is a linear transformation, and thus trivial to compute and to invert.

The final difficulty is that we do not know how many mixture components have associated interval $J$ for $J \in \setI$.
To deal with this, our algorithm simply iterates over all possible allocations of the mixture components to intervals and returns the best one.
There are $O(k)$ possible associated intervals $J$ and $k$ different components, so there are at most $k^{O(k)}$ different possible allocations.
In this section, we will see how our parametrization works when we fix an allocation of the mixture components.

More formally, let $\mathcal{A}$ be the set of functions $v : [s] \to \N$ so that $\sum_{\ell = 1}^s v (\ell) = k$.
These will represent the number of components ``allocated'' to exist on the scale of each $J_\ell$.
For any $v \in \mathcal{A}$, define $\setI_v$ to be the set of $I_\ell \in \setI$ so that $v(\ell) \neq 0$.

Fix $\theta^r \in \paramset_k$ and $v \in \mathcal{A}$.
Decompose $\theta^r$ into $(\theta^r_1, \ldots, \theta^r_{s})$, where $\theta^r_\ell$ contains the rescaled parameters with respect to $J_\ell$ for the $v (\ell)$ components allocated to interval $J_\ell$ (note that $v(\ell)$ may be $0$ in which case $\theta_\ell$ is the empty set, i.e., corresponds to the parameters for no components).
For any $1 \leq \ell \leq s$, let
\[\mixpdf^r_{\ell, \theta^r_\ell }(x) = \sum_i w_i  \frac{\tilde{\tau}_i}{|I_\ell| / 2} \normalpdf \left( \tilde{\tau}_i \cdot \frac{x - \midp (I_\ell)}{|I_\ell| / 2} - \tilde{\mu}_i \right)  \;  ,\]
where $i$ ranges over the components that $\theta_j$ corresponds to, and define
$\mixpdf^r_{\theta^r, v} (x) = \sum_{\ell = 1}^s \mixpdf^r_{\ell, \theta^r_\ell} (x)$.
Similarly, define
\[P^r_{\epsilon, \ell, \theta^r_\ell }(x) = \sum_i w_i  \frac{\tilde{\tau}_i}{|I_\ell| / 2} \tilde{P}_\epsilon \left( \tilde{\tau}_i \cdot \frac{x - \midp (I_\ell)}{|J_\ell| / 2} - \tilde{\mu}_i \right)  \;  ,\]
 and define
$P^r_{\epsilon, \theta^r, v} (x) = \sum_{\ell = 1}^s P^r_{\epsilon, \ell, \theta^r_\ell} (x)$.
Finally, for any $v$, define $\setP^r_{\epsilon, v}$ to be the set of all such $P^r_{\epsilon, \theta, v}$.

We have:
\begin{lemma}
\label{lem:goodapproxrescaled}
For any $\theta^r \in \paramset_k$, we have
\[\| \mixpdf^r_{\theta^r, v} - P^r_{\epsilon, \theta^r, v} \|_1 \leq \epsilon \; .\]
\end{lemma}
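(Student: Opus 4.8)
The plan is to follow the same argument as in the proof of Lemma~\ref{lem:goodapprox}, now applied piece-by-piece to the rescaled components. First I would expand the $L_1$-norm and apply the triangle inequality to split it over the $k$ mixture components: writing $\mixpdf^r_{\theta^r, v}$ and $P^r_{\epsilon, \theta^r, v}$ as sums over $\ell \in [s]$ and, within each $\ell$, over the $v(\ell)$ components allocated to $J_\ell$, the pointwise difference $|\mixpdf^r_{\theta^r,v}(x) - P^r_{\epsilon,\theta^r,v}(x)|$ is at most $\sum_{\ell=1}^s \sum_i w_i \cdot \frac{\tilde\tau_i}{|I_\ell|/2} \bigl| \normalpdf\bigl(\tilde\tau_i \tfrac{x-\midp(I_\ell)}{|I_\ell|/2} - \tilde\mu_i\bigr) - \polystandardpdf_\epsilon\bigl(\tilde\tau_i \tfrac{x-\midp(I_\ell)}{|I_\ell|/2} - \tilde\mu_i\bigr) \bigr|$, where $i$ ranges over the components allocated to $J_\ell$.

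Next, for each such component I would perform the linear change of variables $u = \tilde\tau_i \tfrac{x - \midp(I_\ell)}{|I_\ell|/2} - \tilde\mu_i$, which has Jacobian $\diff u = \frac{\tilde\tau_i}{|I_\ell|/2}\,\diff x$ (valid since $\tilde\tau_i > 0$ and $|I_\ell| > 0$). This reduces the integral of the $i$-th term exactly to $w_i \int_{\R} |\normalpdf(u) - \polystandardpdf_\epsilon(u)| \diff u = w_i \cdot \norm{\normalpdf - \polystandardpdf_\epsilon}_1$, and by the defining property of $\polystandardpdf_\epsilon$ (the same bound already invoked in Lemma~\ref{lem:goodapprox}) we have $\norm{\normalpdf - \polystandardpdf_\epsilon}_1 \leq \epsilon$. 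Summing over all components across all $\ell$ and using that the weights $w_i$ of $\theta^r \in \paramset_k$ are nonnegative with $\sum_i w_i = 1$ then yields $\norm{\mixpdf^r_{\theta^r,v} - P^r_{\epsilon,\theta^r,v}}_1 \leq \sum_i w_i \epsilon \leq \epsilon$.

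There is essentially no serious obstacle here: the statement is the rescaled analogue of Lemma~\ref{lem:goodapprox}, and the only point requiring care is the bookkeeping of the double indexing (over the intervals $I_\ell$ and over the components allocated to each interval), together with the observation that the change of variables above is precisely the linear map defining $\normalpdf^{r,J}_{\tilde\mu,\tilde\tau}$ and its polynomial surrogate in Equation~(\ref{eq:rescaled}) — so the Gaussian piece and the piecewise-polynomial piece transform identically and the Jacobian cancels the leading $\tilde\tau_i/(|I_\ell|/2)$ factor in both. Once this is noted, each per-component contribution is literally $w_i \norm{\normalpdf - \polystandardpdf_\epsilon}_1$, independent of $I_\ell$, $\tilde\mu_i$, and $\tilde\tau_i$, and the lemma follows immediately.
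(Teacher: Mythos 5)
Your proof is correct and is exactly the argument the paper intends: the paper omits the proof, stating only that it follows from the same argument as Lemma \ref{lem:goodapprox}, and your piece-by-piece triangle inequality plus the change of variables $u = \tilde\tau_i \frac{x - \midp(I_\ell)}{|I_\ell|/2} - \tilde\mu_i$ (which cancels the leading factor and reduces each term to $w_i \norm{\normalpdf - \polystandardpdf_\epsilon}_1 \leq w_i \epsilon$) is precisely that adaptation.
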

\noindent This follows from roughly the same argument as in the proof of Lemma \ref{lem:goodapprox}, and so we omit the proof.

We now finally have all the necessary language and tools to prove the following theorem:
\begin{corollary}
\label{thm:main-general}
Fix $2 \geq \epsilon > 0$. There is some allocation $v \in \mathcal{A}$ and a set of parameters $\theta^r \in \paramset_k$ so that $\tilde{\mu}_i \in  [-\frac{2 s \phi}{\omega} , \frac{2 s \phi}{\omega}]$, $1 / (8s) \leq \tilde{\tau}_i \leq \phi / 2$, and $w_\ell \geq \epsilon / (2k)$ for all $i$.
Moreover,
\[\| f - \mixpdf^r_{\theta^r, v} \|_1 \leq 19 \cdot \OPT_k + O(\epsilon) \; .\]
\end{corollary}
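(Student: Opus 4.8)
The plan is to produce the required $\mixpdf^r_{\theta^r,v}$ in three stages: extract a near‑optimal $k$‑GMM fit to the density estimate, discard its ``bad'' components via the No Interaction Lemma, and then repair the surviving mixture into a genuine $k$‑GMM whose weights and rescaled parameters obey the stated bounds.

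First I would call Fact~\ref{thm:ADLS} to get $\pdens$ with $\|f-\pdens\|_1 \le 4\OPT_k+\epsilon$, and after the WLOG normalization $\bigcup_{i=1}^s I_i = [-1,1]$ pick $\theta^*\in\paramset_k$ minimizing $\|\pdens-\mixpdf_{\theta^*}\|_1$; by the triangle inequality $\|\pdens-\mixpdf_{\theta^*}\|_1 \le \|\pdens-f\|_1+\OPT_k \le 5\OPT_k+\epsilon$. Applying Lemma~\ref{lem:cluster-struct} to $\theta^*$ and writing $S_\good=S_\good(\theta^*)$, $S_\bad=S_\bad(\theta^*)$, and $g=\sum_{\ell\in S_\good} w^*_\ell\,\normalpdf_{\mu^*_\ell,\tau^*_\ell}$, I obtain $\|g-\pdens\|_1 \le \|\mixpdf_{\theta^*}-\pdens\|_1+2\epsilon \le 5\OPT_k+3\epsilon$ (the $\tfrac12\sum_\bad w^*_\ell$ term being nonnegative) and, by rearranging the same inequality, $\sum_{\ell\in S_\bad} w^*_\ell \le 10\OPT_k+6\epsilon$.

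Next I would convert $g$ into a valid $k$‑GMM $\mixpdf_\theta$. Since $|S_\good|\le k$, I append $k-|S_\good|$ ``dummy'' components, each supported at a fixed interval of $\setI$ (say $I_1$) with weight exactly $\epsilon/(2k)$, and then rescale the weights of the good components so the total is $1$. The $L_1$ cost of reweighting the good part is $\bigl|\sum_{\ell\in S_\bad}w^*_\ell - (k-|S_\good|)\tfrac{\epsilon}{2k}\bigr| \le \sum_{\ell\in S_\bad}w^*_\ell+\tfrac\epsilon2$, and the dummies add at most $(k-|S_\good|)\tfrac{\epsilon}{2k}\le\tfrac\epsilon2$; hence $\|\mixpdf_\theta-\pdens\|_1 \le 15\OPT_k+O(\epsilon)$, and with the density‑estimate bound $\|f-\mixpdf_\theta\|_1 \le 19\OPT_k+O(\epsilon)$. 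The multiplicative factor applied to each good weight is at least $1-\epsilon/2\ge\tfrac12$ for small $\epsilon$ (and for $\epsilon$ bounded away from $0$ the target bound is vacuous and met by any fixed admissible $k$‑GMM), so every good weight remains $\ge\tfrac{\epsilon}{2k}$ and the dummy weights equal $\tfrac{\epsilon}{2k}$.

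Finally I would pass to rescaled coordinates. Each $\ell\in S_\good$ is admissible, so it has an associated interval $J_\ell\in\setI$, and Lemma~\ref{lem:rescaled-robust} bounds its rescaled mean and precision with respect to $J_\ell$ by $\tilde\mu_\ell\in[-\tfrac{2s\phi}{\omega},\tfrac{2s\phi}{\omega}]$ and $\tfrac{1}{8s}\le\tilde\tau_\ell\le\tfrac{\phi}{2}$ (matching the claimed ranges up to the universal constants $W,\omega$); for the dummies I simply choose rescaled parameters in these ranges outright, e.g. $\tilde\mu=0$, $\tilde\tau=\tfrac{1}{8s}$ relative to $I_1$. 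Defining $v\in\mathcal{A}$ by the number of (good plus dummy) components assigned to each $I_\ell$, this map is a pure reparametrization, so $\mixpdf^r_{\theta^r,v}=\mixpdf_\theta$ as functions and all three parameter constraints hold together with $\|f-\mixpdf^r_{\theta^r,v}\|_1\le19\OPT_k+O(\epsilon)$. I expect the delicate point to be the weight surgery: redistributing the lost mass $\sum_\bad w^*_\ell$ back onto exactly $k$ components while keeping every weight $\ge\epsilon/(2k)$ and charging only $\sum_\bad w^*_\ell+O(\epsilon)$ in $L_1$, including the degenerate cases ($S_\good=\varnothing$, $g$ of tiny mass, $\epsilon$ not small) where one falls back on the bound being vacuous.
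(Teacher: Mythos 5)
Your proposal is correct and follows essentially the same route as the paper: bound the distance of a near-optimal mixture to $\pdens$, apply the No Interaction Lemma (Lemma \ref{lem:cluster-struct}) to split off the admissible components and bound the bad mass by $10\cdot\OPT_k + O(\eps)$, patch the surviving components back into a valid $k$-GMM with all weights at least $\eps/(2k)$, and invoke Lemma \ref{lem:rescaled-robust} plus the associated intervals to define $v$ and bound the rescaled parameters. The only difference is cosmetic: you delete the bad components and renormalize with weight-$\eps/(2k)$ dummy components, whereas the paper keeps the bad weights on arbitrarily repositioned admissible Gaussians and then applies a $\max(\cdot,\eps/(2k))$ weight correction; both patches cost $\sum_{\ell \in S_\bad(\theta^*)} w^*_\ell + O(\eps)$ in $L_1$ and lead to the same constant $19$ (your degenerate cases, like $S_\good(\theta^*)=\varnothing$, are indeed benign, though making the ``vacuous'' claim rigorous there needs the extra observation that $\norm{\pdens}_1 \geq 1 - 4\OPT_k - \eps$ forces $\OPT_k$ to be a large constant).
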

\begin{proof}
Let $\theta^* \in \paramset_k$ be so that $\| f - \mixpdf_{\theta^*} \|_1 = \OPT_k$, and let $\normalpdf_\ell^*$ denote its $\ell$-th component with parameters $w_i^\ast$, $\mu_i^\ast,$ and $\tau_i^\ast$.
Decompose $[k]$ into $S_\good (\theta^*), S_\bad (\theta^*)$ as in Lemma \ref{lem:cluster-struct}.

By the guarantees of the density estimation algorithm, we know that
\[ \left\| \sum_{\ell} w_\ell^* \normalpdf_{\mu_\ell^\ast, \tau_\ell^\ast} - \pdens \right\|_1 \leq 5 \OPT_k + \epsilon \; .\]
By Lemma \ref{lem:cluster-struct}, this implies that
\[ 5 \OPT_k + \epsilon \geq \left\| \sum_{\ell \in S_\good (\theta^*)} w_\ell^* \normalpdf_{\mu_\ell^\ast, \tau_\ell^\ast} - \pdens \right\|_1 +  \frac{1}{2} \sum_{\ell \in S_\bad (\theta^*)} w_\ell - 2\epsilon\; ,\]
from which we may conclude the following two inequalities:
\begin{align*}
 \left\| \sum_{\ell \in S_\good (\theta^*)} w_\ell^\ast \normalpdf_{\mu_\ell^\ast, \tau_\ell^\ast} - \pdens \right\|_1 &\leq 5 \cdot \OPT_k + 3 \epsilon, \numberthis \label{eq:goodbound} \\
 \sum_{\ell \in S_\bad (\theta^*)} w_\ell^\ast &\leq 10 \cdot \OPT_k + 6 \epsilon\; . \numberthis \label{eq:badbound}
\end{align*}
Let $\theta'$ be defined so that for all $\ell \in S_\good (\theta^\ast)$, the means and variances of the $\ell$-th component in $\theta'$ are $\mu_i^\ast$ and $\tau_i^\ast$, and so that for all $\ell \in S_\bad (\theta^\ast)$, the means of and variances of the $\ell$-th component in $\theta'$ are arbitrary but so that the underlying Gaussian is admissible.
Let the weights of the components in $\theta'$ be the same as the weights in $\theta^\ast$.

Then we have
\begin{align*}
\| M_{\theta'} - f \|_1 &= \left\| \sum_{\ell \in S_\good (\theta^*)} w_\ell^\ast \normalpdf_{\mu_\ell^\ast, \tau_\ell^\ast}  + \sum_{\ell \in S_\bad (\theta^\ast)} w_\ell^* \normalpdf_{\mu_\ell', \tau_\ell'}  - f \right\|_1  \\
&\leq \left\| \sum_{\ell \in S_\good (\theta^*)} w_\ell^\ast \normalpdf_{\mu_\ell^\ast, \tau_\ell^\ast} - f \right\|_1 + \left\| \sum_{\ell \in S_\bad (\theta^\ast)} w_\ell^\ast \normalpdf_{\mu_\ell', \tau_\ell'} \right\|_1 \\
&=  \left\| \sum_{\ell \in S_\good (\theta^*)} w_\ell^\ast \normalpdf_{\mu_\ell^\ast, \tau_\ell^\ast} - f \right\|_1 +  \sum_{\ell \in S_\bad (\theta^*)} w_\ell^\ast \\
&\leq \left\| \sum_{\ell \in S_\good (\theta^*)} w_\ell^\ast \normalpdf_{\mu_\ell^\ast, \tau_\ell^\ast} - \pdens \right\|_1 + \| f - \pdens \|_1 +  \sum_{\ell \in S_\bad (\theta^*)} w_\ell^\ast \\
&\leq 19 \cdot \OPT_k + O (\epsilon)
\end{align*}
where the last line follows from Equation (\ref{eq:goodbound}), the guarantee of the density estimation algorithm, and Equation (\ref{eq:badbound}). 

For each $\ell \in [k]$, let $J_\ell \in \setI$ denote the interval so that the $\ell$-th component of $\theta'$ is admissible with respect to $J_\ell$ Let $\theta^r$ be the rescaling of $\theta'$ with respect to $J_1, \ldots, J_\ell$. Then by Lemma \ref{lem:rescaled-robust}, $\theta^r$ satisfies that $\tilde{\mu}_i \in [-\frac{2 s \phi}{\omega} , \frac{2 s \phi}{\omega}]$ and $\sqrt{2\pi}\cdot \omega  / (16s) \leq  \tilde{\tau}_i \leq \phi / 2$ for all $i$.
Let $v \in \mathcal{A}$ be chosen so that $v(i)$ is the number of times that $I_i$ appears in the sequence $J_1, \ldots, J_k$. Then $\mixpdf_{\theta'}$ and $v$ satisfies all conditions in the lemma, except possibly that the weights may be too small.

Thus, let $\theta$ be the set of parameters whose means and precisions are exactly those of $\theta'$, but for which the weight of the  $\ell$-th component is defined to be $w_\ell = \max (\epsilon / (2k), w^*_\ell)$ for all $1 \leq \ell \leq k - 1$ and $w_k = 1 - \sum_{\ell = 1}^{k - 1} w_\ell$.
It is easy to see that $\theta \in \paramset_k$; moreover, $\| \mixpdf_\theta - \mixpdf_{\theta'} \|_1 \leq \epsilon$.
Then it is easy to see that $\theta$ and $v$ together satisfy all the conditions of the lemma.
\end{proof}

\subsection{The full algorithm}
At this point, we are finally ready to describe our algorithm \textsc{LearnGMM} which agnostically and properly learns an arbitrary mixture of $k$ Gaussians. 
Informally, our algorithm proceeds as follows. 
First, using \textsc{Estimate-Density}, we learn a $\pdens'$ that with high probability is $\epsilon$-close to the underlying distribution $f$ in $L_1$-distance. 
Then, as before, we may rescale the entire problem so that the density estimate is supported on $[-1, 1]$.
Call the rescaled density estimate $\pdens$.

As before, it suffices to find a $k$-GMM that is close to $\pdens$ in $\Ak$-distance, for $K = 4k - 1$.
The following is a direct analog of Lemma \ref{lem:Akapprox}.
We omit its proof because its proof is almost identical to that of Lemma \ref{lem:Akapprox}.
\begin{lemma}
\label{lem:generalAkapprox}
Let $\eps > 0, v \in \mathcal{A}$, $k \geq 2$, $\theta^r \in \paramset_k$, and $K = 4(k-1) + 1$.
Then we have 
\[
  0 \; \leq \; \| \pdens - \polypdf^r_{\epsilon, \theta^r, v} \|_{1} - \| \pdens - \polypdf^r_{\epsilon, \theta^r, v} \|_\Ak \; \leq \; 8 \cdot \OPT_k + O(\epsilon) \; .\]
\end{lemma}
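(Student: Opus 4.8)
\emph{Proof plan.} The argument is a line-by-line transcription of the proof of Lemma \ref{lem:Akapprox}, with $\polypdf_{\epsilon,\theta}$ replaced by $\polypdf^r_{\epsilon,\theta^r,v}$, with $\mixpdf_\theta$ replaced by $\mixpdf^r_{\theta^r,v}$, and with Lemma \ref{lem:goodapprox} replaced by its rescaled version Lemma \ref{lem:goodapproxrescaled}. The left inequality $0 \le \|\pdens - \polypdf^r_{\epsilon,\theta^r,v}\|_1 - \|\pdens - \polypdf^r_{\epsilon,\theta^r,v}\|_\Ak$ is simply the inequality $\|\cdot\|_\Ak \le \|\cdot\|_1$ from Lemma \ref{fact:akproperties}, so all the content is in the upper bound on this difference.

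For the upper bound I would first record the one structural fact that makes the transcription legal: the superscript-$r$ parametrization is obtained from an ordinary triple $(w,\mu,\tau)$ of GMM parameters by the invertible, componentwise linear change of variables of Equation (\ref{eq:rescaled}), so $\mixpdf^r_{\theta^r,v} = \mixpdf_{\widetilde\theta}$ for some $\widetilde\theta \in \paramset_k$; in particular $\mixpdf^r_{\theta^r,v}$ is an honest $k$-GMM. Conditioning on the success of the density-estimation step (Fact \ref{thm:ADLS}), fix a $k$-GMM $\mixpdf_{\theta'}$ with $\|\pdens - \mixpdf_{\theta'}\|_1 \le 5\cdot\OPT_k + \epsilon$, obtained from the $L_1$ guarantee of Fact \ref{thm:ADLS} and the triangle inequality through $f$. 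Then I would run exactly the chain of triangle inequalities from Lemma \ref{lem:Akapprox}, starting with
\begin{align*}
\|\pdens - \polypdf^r_{\epsilon,\theta^r,v}\|_1 \;\le\; \|\pdens - \mixpdf_{\theta'}\|_1 + \|\mixpdf_{\theta'} - \mixpdf^r_{\theta^r,v}\|_1 + \|\polypdf^r_{\epsilon,\theta^r,v} - \mixpdf^r_{\theta^r,v}\|_1 \;,
\end{align*}
bounding the last term by $\epsilon$ via Lemma \ref{lem:goodapproxrescaled}, and handling the middle term by (i) replacing $\|\mixpdf_{\theta'} - \mixpdf^r_{\theta^r,v}\|_1$ by $\|\mixpdf_{\theta'} - \mixpdf^r_{\theta^r,v}\|_\Ak$, which is legitimate because a difference of two $k$-GMMs has $O(k)$ zero crossings (Fact \ref{fact:gmmintersection}) and hence, by the analog of Corollary \ref{cor:gmmak} for the $K$ in the statement, has equal $\Ak$- and $L_1$-norms; (ii) inserting $\pm\pdens$ and $\pm\polypdf^r_{\epsilon,\theta^r,v}$ and using the triangle inequality for $\|\cdot\|_\Ak$; and (iii) passing the resulting non-target terms back from $\Ak$ to $L_1$ via $\|\cdot\|_\Ak \le \|\cdot\|_1$, then re-applying Lemma \ref{lem:goodapproxrescaled} and $\|\pdens - \mixpdf_{\theta'}\|_1 \le 5\cdot\OPT_k+\epsilon$. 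Collecting constants yields $\|\pdens - \polypdf^r_{\epsilon,\theta^r,v}\|_1 \le 8\cdot\OPT_k + O(\epsilon) + \|\pdens - \polypdf^r_{\epsilon,\theta^r,v}\|_\Ak$, which is the claim.

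The only step that is not purely mechanical is the analog of Corollary \ref{cor:gmmak}: one must confirm both that $\mixpdf^r_{\theta^r,v}$ really is a $k$-GMM (so that $\mixpdf_{\theta'} - \mixpdf^r_{\theta^r,v}$ is a linear combination of $O(k)$ Gaussians and Fact \ref{fact:gmmintersection} applies), and that the resulting zero-crossing count is smaller than $K$ for the $K$ chosen in the statement, so that $\|\cdot\|_\Ak = \|\cdot\|_1$ on this difference. Both are obtained exactly as in the proof of Corollary \ref{cor:gmmak}: perturb the parameters by an arbitrarily small amount to make all variances distinct, apply Fact \ref{fact:gmmintersection}, and let the perturbation tend to $0$ to transfer the identity $\|\cdot\|_\Ak = \|\cdot\|_1$ back to the unperturbed difference. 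I expect this bookkeeping — matching the $O(k)$ zero bound to the stated $K$ and verifying the $k$-GMM structure of $\mixpdf^r_{\theta^r,v}$ — to be the only place requiring any thought; everything else is identical to Lemma \ref{lem:Akapprox}.
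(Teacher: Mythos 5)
Your plan coincides with the paper's: the authors omit this proof precisely because it is ``almost identical'' to that of Lemma \ref{lem:Akapprox}, and your transcription --- substituting $\polypdf^r_{\epsilon,\theta^r,v}$, $\mixpdf^r_{\theta^r,v}$, and Lemma \ref{lem:goodapproxrescaled}, and observing that $\mixpdf^r_{\theta^r,v}$ is an honest $k$-GMM so that Fact \ref{fact:gmmintersection} and the argument of Corollary \ref{cor:gmmak} apply to the difference --- is exactly the intended proof. One cosmetic remark: with the careful bound $\|\pdens-\mixpdf_{\theta'}\|_1\le 5\cdot\OPT_k+\epsilon$ used in both places where it enters the chain, the collected constant is $10\cdot\OPT_k+O(\epsilon)$ rather than $8\cdot\OPT_k+O(\epsilon)$; this is the same slack already present in the paper's proof of Lemma \ref{lem:Akapprox} (which quotes $4\cdot\OPT_k+\epsilon$ for this quantity) and only affects the unoptimized constants.
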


Our algorithm enumerates over all $v \in \mathcal{A}$ and for each $v$ finds a $\theta^r$ approximately minimizing
\[\| \pdens - P^r_{\epsilon, \theta^r, v} \|_{\Ak} \; .\]
Using the same binary search technique as before, we can transform this problem into $\log 1 / \epsilon$ feasibility problems of the form
\begin{equation}
\label{eq:ak-general}
\| \pdens - P^r_{\epsilon, \theta^r, v} \|_{\Ak} < \eta\; .
\end{equation}
Fix $v \in \mathcal{A}$, and recall $\setP^r_{\epsilon, v}$ is the set of all polynomials of the form $P^r_{\epsilon, \theta^r, v}$.
Let $\paramset_k^\text{valid}$ denote the set of $\theta^r \in \paramset_k$ so that $\tilde{\mu}_i \in[-\frac{2 s \phi}{\omega} , \frac{2 s \phi}{\omega}]$, $\sqrt{2 \pi} \omega / (8 s) \leq \tilde{\tau} \leq \phi / 2$, and $w_i \geq \epsilon / (2k)$, for all $i$. 
For any $\theta^r \in \paramset_k^\text{valid}$, canonically identify it with $\polypdf^r_{\epsilon, \theta^r, v}$.
By almost exactly the same arguments used in Section \ref{sec:polyprogramGMM}, it follows that the class $\setP^r_{\epsilon, v}$, where $\theta \in \paramset_k^\text{valid}$, satisfies the conditions in Section \ref{subsec:polyprogram}, and that the system of polynomial equations $S_{K, \pdens, \setP^r_{\epsilon, v} } (\nu)$ has two levels of quantification (each with $O(k)$ bound variables), has $k^{O(k)}$ polynomial constraints, and has maximum degree $O(\log (1 / \epsilon))$.
Thus, we have
\begin{corollary}
\label{cor:polyprogcorrect}
For any fixed $\epsilon, \nu$, and for $K = 4k - 1$, we have that $S_{K, \pdens, \setP^r_{\epsilon, \nu}, \paramset^\text{valid}_k} (\nu)$ encodes Equation (\ref{eq:ak-general}) ranging over $\theta \in \paramset^\text{valid}_k$. 
Moreover, for all $\gamma, \lambda \geq 0$, $\textsc{Solve-Poly-Program} (S_{K, \pdens, \setP^r_{\epsilon, \nu}, \paramset^\text{valid}_k} (\nu), \lambda, \gamma)$ runs in time 
\[(k \log (1 / \epsilon))^{O(k^4)} \log \log (3 + \frac{\gamma}{\lambda}) \; . \]
\end{corollary}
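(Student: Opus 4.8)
The plan is to show that the triple $(p,\setP,S) = (\pdens,\setP^r_{\epsilon,v},\paramset^\text{valid}_k)$ satisfies the three structural hypotheses of Section~\ref{subsec:polyprogram} for a fixed allocation $v \in \mathcal{A}$, and then to read off both the encoding claim and the running time from the general construction of $S_{K,p,\setP,S}(\nu)$ given there, exactly as was done for the well-behaved case in Section~\ref{sec:polyprogramGMM}. The precise value $K = 4k-1$ plays no role in this corollary beyond being $O(k)$ --- it matters only in Lemma~\ref{lem:generalAkapprox}, which relates $\Ak$ back to $L_1$.

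First I would verify the three hypotheses. (i) Unpacking the definition of $\paramset^\text{valid}_k$, membership is the conjunction of $\sum_i w_i = 1$ with the box constraints $w_i \geq \epsilon/(2k)$, $\sqrt{2\pi}\,\omega/(8s) \leq \tilde\tau_i \leq \phi/2$, and $\tilde\mu_i \in [-2s\phi/\omega,\,2s\phi/\omega]$ for each component; all of these are linear in the $O(k)$ rescaled parameters, so $R = O(k)$ predicates of degree $D_1 = 1$ suffice, and $\paramset^\text{valid}_k$ is semi-algebraic. (ii) Once $v$ is fixed, each component $i$ is tied to a known interval $I_{\ell(i)} \in \setI$, so $\midp(I_{\ell(i)})$ and $|I_{\ell(i)}|$ are constants; the piece $\tilde{P}_\epsilon\big(\tilde\tau_i \cdot \tfrac{x-\midp(I_{\ell(i)})}{|I_{\ell(i)}|/2} - \tilde\mu_i\big)$ has its two breakpoints exactly where the affine (positive-slope, since $\tilde\tau_i > 0$) argument equals $\pm 2\sqrt{\log 1/\epsilon}$, so each of the $s' = 2k$ breakpoints $d$ is the unique real root of the polynomial equality $\tilde\tau_i \cdot \tfrac{d-\midp(I_{\ell(i)})}{|I_{\ell(i)}|/2} - \tilde\mu_i \mp 2\sqrt{\log 1/\epsilon} = 0$, of degree $D_2 = 2$ in $(d,\theta^r)$. (iii) Away from its breakpoints, $(x,\theta^r) \mapsto P^r_{\epsilon,\theta^r,v}(x)$ is a sum of terms $w_i \tfrac{\tilde\tau_i}{|I_{\ell(i)}|/2}\,T_d(\cdot)$ with $T_d$ the degree-$d$, $d = O(\log 1/\epsilon)$, Taylor polynomial, hence jointly polynomial of degree $D_3 = O(\log 1/\epsilon)$. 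Thus $D = \max(D_1,D_2,D_3) = O(\log 1/\epsilon)$.

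Granting these checks, the lemma in Section~\ref{subsubsec:Akfixed} together with the $\text{valid-parameters}$ and $\text{correct-breakpoints}$ constraints shows that $S_{K,\pdens,\setP^r_{\epsilon,v},\paramset^\text{valid}_k}(\nu)$ is feasible precisely when there is a $\theta^r \in \paramset^\text{valid}_k$ with $\|\pdens - P^r_{\epsilon,\theta^r,v}\|_{\Ak} < \nu$ --- the $\forall$-block over the $2K$ endpoints realizes the supremum defining the $\Ak$-norm, while the $\exists$-block carries the $s'$ breakpoint variables and the $t = 2K + r + s' = O(k)$ auxiliary integral variables --- so it encodes Equation~(\ref{eq:ak-general}) over $\theta^r \in \paramset^\text{valid}_k$. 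For the running time: the system has $O(k)$ free variables, two quantifier blocks of $O(k)$ variables, $R + s' + 4t^{t+1} = k^{O(k)}$ polynomial constraints, and degree $D = O(\log 1/\epsilon)$; substituting into Fact~\ref{thm:Renegar} (two levels of quantification, so the $2^{O(v)}$ factor is $O(1)$ and $\ell\prod_k n_k = O(k)\cdot O(k^2) = O(k^3)$) gives a running time of $\big(k^{O(k)}\log(1/\epsilon)\big)^{O(k^3)}\log\log(3 + \gamma/\lambda)$, which simplifies to $(k\log(1/\epsilon))^{O(k^4)}\log\log(3 + \gamma/\lambda)$, as claimed. There is no deep step here --- the corollary is a bookkeeping exercise paralleling Section~\ref{sec:polyprogramGMM} --- and the only point needing care is (ii)--(iii): confirming that fixing the allocation $v$ really does freeze each $\midp(I_{\ell(i)})$ and $|I_{\ell(i)}|$ into genuine constants, so that the breakpoint equations stay degree $2$ and $P^r_{\epsilon,\theta^r,v}$ stays jointly polynomial of degree $O(\log 1/\epsilon)$ in $(x,\theta^r)$ rather than acquiring extra dependence on interval data.
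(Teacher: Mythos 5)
Your proposal is correct and follows essentially the same route as the paper, which simply observes that (for each fixed allocation $v$) the triple $(\pdens, \setP^r_{\epsilon,v}, \paramset^\text{valid}_k)$ satisfies the hypotheses of Section \ref{subsec:polyprogram} "by almost exactly the same arguments used in Section \ref{sec:polyprogramGMM}" and then plugs the resulting system size ($O(k)$ variables per quantifier block, $k^{O(k)}$ constraints, degree $O(\log 1/\epsilon)$) into Fact \ref{thm:Renegar}. Your verification of the three conditions (linearity of the $\paramset^\text{valid}_k$ constraints, degree-$2$ breakpoint equations once $v$ freezes $\midp(I_{\ell(i)})$ and $|I_{\ell(i)}|$, and joint polynomiality of degree $O(\log 1/\epsilon)$) is exactly the bookkeeping the paper leaves implicit, and your exponent accounting matches the claimed bound.
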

 
For each $v$, our algorithm then performs a binary search over $\eta$ to find the smallest (up to constant factors) $\eta$ so that Equation (\ref{eq:ak-general}) is satisfiable for this $v$, and records both $\eta_v$, the smallest $\eta$ for which Equation (\ref{eq:ak-general}) is satisfiable for this $v$, and the output $\theta_v$ of the system of polynomial inequalities for this choice of $\eta.$ 
We then return $\theta_{v'}$ so that the $\eta_{v'}$ is minimal over all $v \in \mathcal{A}$.
The pseudocode for \textsc{LearnGMM} is in Algorithm \ref{alg:genalg}.

\begin{algorithm}[!htb]
\begin{algorithmic}[1]
\Function{LearnGMM}{$k, \epsilon, \delta$}
\LineComment{Density estimation. Only this step draws samples.}
\State $\pdens' \gets \textsc{Estimate-Density}(k, \epsilon, \delta)$
\vspace{.3cm}
\LineComment{Rescaling}
\LineComment $\pdens$ is a rescaled and shifted version of $\pdens'$ such that the support of $\pdens$ is $[-1, 1]$.
\State Let $\pdens(x) \ed \pdens'\left(\frac{2(x - \alpha)}{\beta - \alpha} - 1 \right)$
\vspace{.3cm}
\LineComment{Fitting shape-restricted polynomials}
\For{$v \in \mathcal{A}$}
	\State $\eta_v, \theta^r_v \gets \textsc{FindFitGivenAllocation}(\pdens, v)$
\EndFor
\State Let $\theta$ so that $\theta^r = \theta^r_{v'}$ so that $\eta_{v'}$ is minimal over all $\eta_v$ (breaking ties arbitrarily).
\LineComment{Round weights back to be on the simplex} \label{line:roundtosimplex}
\For{$i = 1, \ldots, k - 1$}
	\State{$w_i \gets w_i - \epsilon / 2k$ (This guarantees that $\sum_{i = 1}^{k - 1} w_i \leq 1$; see analysis for details)}
	\State{If $w_i > 1$, set $w_i = 1$}
\EndFor
\State{$w_k \gets 1 - \sum_{i = 1}^{k - 1} w_i$}
\LineComment{Undo the scaling}
\State $w_i' \gets w_i$
\State $\mu_i' \gets \frac{(\mu_i + 1)(\beta - \alpha)}{2} + \alpha$
\State $\tau_i' \gets \frac{\tau_i}{\beta - \alpha}$
\State \textbf{return} $\theta'$
\vspace{.3cm}
\EndFunction

\Function{FindFitGivenAllocation}{$\pdens, v$}
\State $\nu \gets \epsilon$
\State Let $C_1$ be a universal constant sufficiently small.
\State Let $\lambda \gets \min (C_1 (\epsilon / (\phi k ))^2, 1 / 16s, \epsilon / (4k))$ 
\LineComment{This choice of precision provides robustness as needed by Lemma \ref{lem:perturbgeneral}, and also ensures that all the weights and precisions returned must be non-negative.}
\State Let $\psi \gets 6 k s \phi / \omega + 3 k \phi / 2 + 1$
\LineComment{By Corollary \ref{thm:main-general}, this is a bound on how large any solution of the polynomial program can be.}
\State $\theta^r \gets \textsc{Solve-Poly-System}(S_{\pdens, \setP^r_{\epsilon, v}, \paramset^\text{valid}_k} (\nu), \lambda, \psi)$
\While{$\theta^r$ is ``NO-SOLUTION''}
	\State $\nu \gets 2 \cdot \nu$
  \State $\theta^r \gets \textsc{Solve-Poly-System}(S_{\pdens, \setP^r_{\epsilon, v}, \paramset^\text{valid}_k} (\nu), \lambda, \psi)$
\EndWhile
\State \textbf{return} $\theta^r, \nu$
\EndFunction
\end{algorithmic}
\caption{Algorithm for proper learning an arbitrary mixture of $k$ Gaussians.}
\label{alg:genalg}
\end{algorithm}

The following theorem is our main technical contribution:
\begin{theorem}
\textsc{LearnGMM}$(k, \epsilon, \delta)$  takes $\Otilde ((k + \log 1/\delta) / \epsilon^2)$ samples from the unknown distribution with density $f$, runs in time
\[
\left( k \log\frac{1}{\eps} \right)^{O(k^4)} + \Otilde\left( \frac{k}{\eps^2} \right) \; ,
\]
and with probability $1 - \delta$ returns a set of parameters $\theta \in \paramset_k$ so that $\| f - \mixpdf_\theta \|_1 \leq 58 \cdot \OPT + \epsilon$.
\end{theorem}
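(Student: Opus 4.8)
The plan is to chain together the structural results of this section --- the density-estimation guarantee of Fact~\ref{thm:ADLS}, the ``no interaction'' structure of Corollary~\ref{thm:main-general}, the $\Ak$-vs-$L_1$ comparison of Lemma~\ref{lem:generalAkapprox}, the polynomial-system encoding of Corollary~\ref{cor:polyprogcorrect}, and the robustness of the rescaled parametrization (Lemmas~\ref{lem:goodapproxrescaled} and~\ref{lem:perturbgeneral}) --- while tracking both the accumulated $L_1$ error and the magnitudes of every quantity fed to Renegar's algorithm. \textbf{Running time and samples.} The only randomness is in the call to \textsc{Estimate-Density}, so Fact~\ref{thm:ADLS} gives the sample bound and the $\Otilde(k/\eps^2)$ additive term. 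The outer loop iterates over $\mathcal{A}$; since $\pdens$ has $s = O(k)$ pieces, $|\mathcal{A}| = \binom{k+s-1}{s-1} = k^{O(k)}$. For fixed $v$, the doubling search in \textsc{FindFitGivenAllocation} starts at $\nu = \eps$ and must terminate once $\nu$ exceeds the trivial bound $\| \pdens - \polypdf^r_{\eps,\theta^r,v} \|_{\Ak} \le \| \pdens - \polypdf^r_{\eps,\theta^r,v} \|_1 \le 2 + O(\eps)$ (valid for any $\theta^r \in \paramset_k^{\text{valid}}$, since both functions are densities up to $O(\eps)$), hence after $O(\log 1/\eps)$ iterations. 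Each iteration calls \textsc{Solve-Poly-System}, which by Corollary~\ref{cor:polyprogcorrect} runs in $(k \log 1/\eps)^{O(k^4)} \log\log(3 + \psi/\lambda)$ time; using the remark after Definition~\ref{def:admissible} that $\phi(\eps,k) = \poly(1/\eps, k)$, both $\psi$ and $1/\lambda$ are $\poly(1/\eps,k)$, so $\log\log(3 + \psi/\lambda)$ is absorbed into $(k\log 1/\eps)^{O(k^4)}$. Multiplying $k^{O(k)} \cdot O(\log 1/\eps) \cdot (k\log 1/\eps)^{O(k^4)}$ and adding the density-estimation cost yields the claimed bound.

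\textbf{There is a good admissible fit.} Condition on $\|f - \pdens'\|_1 \le 4\OPT_k + \eps$, which holds with probability $1 - \delta$; since affine reparametrizations of the line preserve $L_1$-distances and fix $\OPT_k$, the same bound holds after rescaling to $\pdens$ on $[-1,1]$. Corollary~\ref{thm:main-general} supplies an allocation $v^\ast \in \mathcal{A}$ and parameters $\theta^{r,\ast}$ whose rescaled means and precisions lie in the ranges defining $\paramset_k^{\text{valid}}$ (using $\sqrt{2\pi}\,\omega \le 1$ to match the precision lower bound) and whose weights are at least $\eps/(2k)$, with $\|f - \mixpdf^r_{\theta^{r,\ast},v^\ast}\|_1 \le 19\OPT_k + O(\eps)$. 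Two triangle inequalities together with Lemma~\ref{lem:goodapproxrescaled} and $\|\cdot\|_{\Ak} \le \|\cdot\|_1$ give $\| \pdens - \polypdf^r_{\eps,\theta^{r,\ast},v^\ast} \|_{\Ak} \le 23\OPT_k + O(\eps)$, so for $v = v^\ast$ the system is feasible at level $23\OPT_k + O(\eps)$ and the doubling search returns $\eta_{v^\ast} \le 46\OPT_k + O(\eps)$. Since \textsc{LearnGMM} picks $v'$ with minimal $\eta_{v'}$, we get $\eta_{v'} \le 46\OPT_k + O(\eps)$, and by the guarantee of Renegar's algorithm there is a genuine solution $\bar\theta^r \in \paramset_k^{\text{valid}}$ with $\| \pdens - \polypdf^r_{\eps,\bar\theta^r,v'} \|_{\Ak} < \eta_{v'}$ whose $\ell_2$- (hence $\ell_\infty$-) distance to the returned $\theta^r_{v'}$ is at most $\lambda$.

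\textbf{Back to $f$.} Undo the reductions in reverse: Lemma~\ref{lem:generalAkapprox} passes from $\Ak$ to $L_1$ (at cost $+8\OPT_k + O(\eps)$), Lemma~\ref{lem:goodapproxrescaled} passes from $\polypdf^r$ to $\mixpdf^r$ (at cost $+\eps$), and the exact linear correspondence $\mixpdf^r_{\bar\theta^r,v'} = \mixpdf_{\bar\theta}$ between rescaled and original parameters costs nothing, so $\|\pdens - \mixpdf_{\bar\theta}\|_1 \le 54\OPT_k + O(\eps)$. Lemma~\ref{lem:perturbgeneral} --- applicable because $\tilde\tau_i \le \phi/2 \le 2\phi$ on $\paramset_k^{\text{valid}}$ and each coordinate of $\theta^r_{v'}$ differs from $\bar\theta^r$ by at most $\lambda \le C_1(\eps/(\phi k))^2$ --- bounds the $L_1$-effect of replacing $\bar\theta^r$ by $\theta^r_{v'}$ by $O(\eps)$; the weight-rounding of line~\ref{line:roundtosimplex} then produces valid mixing weights (non-negative, summing to $1$) while perturbing the mixture by a further $O(\eps)$, the choices $\lambda \le 1/(16s)$ and $\lambda \le \eps/(4k)$ being exactly what makes this possible; and the final affine un-scaling is again $L_1$-preserving. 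Hence the returned $\theta'$ obeys $\|\pdens' - \mixpdf_{\theta'}\|_1 \le 54\OPT_k + O(\eps)$, and one last triangle inequality with $\pdens'$ gives $\|f - \mixpdf_{\theta'}\|_1 \le 58\OPT_k + O(\eps)$; running the algorithm with an internal accuracy a sufficiently small constant times $\eps$ absorbs the $O(\eps)$.

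\textbf{Main obstacle.} The substantive difficulty --- that an optimal $k$-GMM fit to $\pdens$ may have precisions doubly exponential in $1/\eps$, so that no fixed accuracy parameter for Renegar suffices --- is handled upstream by the No Interaction Lemma~\ref{lem:cluster-struct} and Corollary~\ref{thm:main-general}, which replace the optimal fit by an \emph{admissible} one whose rescaled parameters are $\poly(1/\eps,k)$-bounded. Conditional on those, the work here is careful bookkeeping: checking that the concrete settings of $\lambda$ and $\psi$ in \textsc{FindFitGivenAllocation} simultaneously satisfy the hypotheses of Lemma~\ref{lem:perturbgeneral}, keep $\log\log(3+\psi/\lambda)$ negligible, and render the weight-rounding harmless, and verifying that the error stays $O(\OPT_k) + O(\eps)$ as it is pushed through the sequence $\Ak \to L_1 \to (\polypdf^r \to \mixpdf^r) \to (\text{rescaled} \to \text{original parameters}) \to \text{weight rounding} \to \text{global un-scaling}$, with the constant $58$ emerging as $4 + 8 + 2\cdot(4 + 19)$.
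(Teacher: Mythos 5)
Your proposal is correct and follows essentially the same route as the paper's proof: condition on the density-estimation event, use Corollary \ref{thm:main-general} to get a bounded admissible fit with $\Ak$-distance at most $23\cdot\OPT_k+O(\eps)$, conclude feasibility at $\eta\le 46\cdot\OPT_k+O(\eps)$, and then push the error of the Renegar-approximate, weight-rounded solution back through Lemmas \ref{lem:generalAkapprox}, \ref{lem:goodapproxrescaled}, and \ref{lem:perturbgeneral} and the affine un-scaling to obtain $58\cdot\OPT_k+O(\eps)$, with the same running-time accounting ($|\mathcal{A}|=k^{O(k)}$, $O(\log\frac{1}{\eps})$ doublings, and $\poly(\frac{1}{\eps},k)$ bounds on $\psi/\lambda$). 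Your bookkeeping of the constant $58 = 4+8+2\cdot(4+19)$ matches the paper's, and your explicit check that the Corollary's precision lower bound places the good parameters inside $\paramset_k^{\text{valid}}$ is, if anything, slightly more careful than the paper's own treatment.
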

\begin{proof}
The sample complexity follows simply because $\textsc{Estimate-Density}$ draws $\Otilde ((k + \log 1/\delta) / \epsilon^2)$ samples, and these are the only samples we ever use.
The running time bound follows because $|\mathcal{A}| = k^{O(k)}$ and from Corollary \ref{cor:polyprogcorrect}.
Thus it suffices to prove correctness.

Let $\theta$ be the parameters returned by the algorithm.
It was found in some iteration for some $v \in \mathcal{A}$.
Let $v^*, \theta^*$ be those which are guaranteed by Corollary \ref{thm:main-general}.
We have
\[
\| \pdens - P^r_{\epsilon, \theta^*, v^*} \|_{\Ak} \leq \| \pdens - f  \|_1 + \| f - \mixpdf^r_{\theta^*, v^*} \|_1 + \| \mixpdf^r_{\theta^*, v^*} - P^r_{\epsilon, \theta^*, v^*} \|_1 \leq 23 \cdot \OPT_k + O( \epsilon ) \; .
\]
By the above inequalities, the system of polynomial equations is feasible for $\eta \leq 46 \cdot \OPT_k + O( \epsilon )$ in the iteration corresponding to $v^*$ (Corollary \ref{thm:main-general} guarantees that the parameters $\theta^\ast$ are sufficiently bounded).
Hence, for some $\eta_{v^\ast} \leq \eta$, the algorithm finds some $\theta'$ so that there is some $\theta''$ so that $\| \theta' - \theta'' \|_2 \leq C_1 (\epsilon / (\phi k ))^2$, which satisfies $S_{\pdens, \setP^r_{\epsilon, v^*}, \paramset^\text{valid}_k} (\nu_{v^\ast}).$ 

Let $\theta_1$ be the set of parameters computed by the algorithm before rounding the weights back to the simplex (i.e.\ at Line \ref{line:roundtosimplex}).
By our choice of precision in solving the polynomial program, (i.e. by our choice of $\lambda$ on Line 24 of Algorithm \ref{alg:genalg}), we know that the precisions of the returned mixture are non-negative (so each component is a valid Gaussian).
It was found in an iteration corresponding to some $v \in \mathcal{A}$, and there is some $\eta_v \leq \eta_{v^*} \leq 46 \cdot \OPT_k + O(\epsilon)$ and some $\theta_1'$ satisfying the system of polynomial equalities for $v$ and $\eta_v$, so that $\| \theta_1 - \theta_1' \|_2 \leq C_1 (\epsilon / (\phi k))^2$.
Let $\theta$ be the set of rescaled parameters obtained after rounding the weights of $\theta_1$ back to the simplex.
It is straightforward to check that $\theta \in \paramset_k$, and moreover, $\| \mixpdf^r_{\theta, v} - \mixpdf^r_{\theta_1', v} \|_1 \leq 2 \epsilon$, and so $\| P^r_{\epsilon, \theta, v} - P^r_{\epsilon, \theta_1', v} \|_1 \leq O(\epsilon)$.

We therefore have
\begin{align*}
\| f - M_\theta \|_1 &\leq \| f - \pdens \|_1 + \| \pdens - P^r_{\epsilon, \theta, v} \|_1 + \| P^r_{\epsilon, \theta, v} - \mixpdf^r_{\epsilon, \theta, v} \|_1 \\
&\stackrel{(a)}{\leq} 4 \cdot \OPT + \epsilon + 8 \cdot \OPT + O(\epsilon) + \| \pdens - P^r_{\epsilon, \theta, v} \|_\Ak + \epsilon \\
&\stackrel{(b)}{\leq} 12 \cdot \OPT + O(\epsilon) + \| \pdens - P^r_{\epsilon, \theta_1', v} \|_\Ak \; \\
&\stackrel{(c)}{\leq} 58 \cdot \OPT + O(\epsilon) \;,
\end{align*}
where (a) follows from Lemmas \ref{lem:generalAkapprox} and \ref{lem:goodapproxrescaled}, (b) follows from the arguments above, and (c) follows since $\theta_1'$ satisfies the system of polynomial inequalities for $\eta_v \leq 46 \cdot \OPT_k + O(\epsilon)$.

As a final step, we choose an internal $\eps'$ in our algorithm so that the $O(\eps')$ in the above guarantee becomes bounded by $\eps$.
This proves the desired approximation guarantee and completes the proof.
\end{proof}


\section{Further classes of distributions}
\label{sec:general}

In this section, we briefly show how to use our algorithm to properly learn other parametric classes of univariate distributions.

Let $\setC$ be a class of parametric distributions on the real line, parametrized by $\theta \in S$ for $S \subseteq \R^u$.
For each $\theta$, let $F_\theta \in \setC$ denote the pdf of the distribution parametrized by $\theta$ in $\setC$.
To apply our algorithm in this setting, it suffices to show the following:
\begin{enumerate}
\item \emph{(Simplicity of $\setC$}) For any $\theta_1$ and $\theta_2$, the function $F_{\theta_1} -F_{\theta_2}$ has at most $K$ zero crossings. In fact it also suffices if any two such functions have ``essentially'' $K$ zero crossings.
\item \emph{(Simplicity of $S$)} $S$ is a semi-algebraic set.
\item \emph{(Representation as a piecewise polynomial)} For each $\theta \in S$ and any $\epsilon > 0$, there is a a piecewise polynomial $P_{\epsilon, \theta}$ so that $\| P_{\epsilon, \theta} - F_\theta \|_1 \leq \epsilon$.
Moreover, the map $(x, \theta) \mapsto P_{\epsilon, \theta} (x)$ is jointly polynomial in $x$ and $\theta$ at any point so that $x$ is not at a breakpoint of $P_{\epsilon, \theta}$.
Finally, the breakpoints of $P_{\epsilon, \theta}$ also depend polynomially on $\theta$.
\item \emph{(Robustness of the Parametrization)} There is some robust parametrization so that  we may assume that all ``plausible candidate'' parameters are $\leq 2^{\poly (1 / \epsilon)}$, and moreover, if $\| \theta_1 - \theta_2 \| \leq 2^{-\poly (1 / \epsilon)}$, then $\| F_{\theta_1} - F_{\theta_2} \| \leq \epsilon$.
\end{enumerate}
Assuming $\setC$ satisfies these conditions, our techniques immediately apply.
In this paper, we do not attempt to catalog classes of distributions which satisfy these properties.
However, we believe such classes are often natural and interesting.
We give evidence for this below, where we show that our framework produces proper and agnostic learning algorithms for mixtures of two more types of simple distributions.
The resulting algorithms are both sample optimal (up to log factors) and have nearly-linear running time.

\subsection{Learning mixtures of simple distribution}
As a brief demonstration of the generality of our technique,
we show that our techniques give proper and agnostic learning algorithms for mixtures of $k$ exponential distributions and Laplace distributions (in addition to mixtures of $k$ Gaussians) which are \emph{nearly-sample optimal}, and run in time which is \emph{nearly-linear} in the number of samples drawn, for any constant $k$.

We now sketch a proof of correctness for both classes mentioned above.
In general, the robustness condition is arguably the most difficult to verify of the four conditions required. 
However, it can be verified that for mixtures of simple distributions with reasonable smoothness conditions the appropriate modification of the parametrization we developed in Section \ref{sec:generalalg} will suffice.
Thus, for the classes of distributions mentioned, it suffices to demonstrate that they satisfy conditions (1) to (3).

\paragraph{Condition 1:} It follows from the work of \cite{Tossavainen06} that the difference of $k$ exponential distributions or $k$ Laplace distributions has at most $2k$ zero crossings.
\paragraph{Condition 2:} This holds trivially for the class of mixtures of exponential distributions. 
We need a bit of care to demonstrate this condition for Laplace distributions since a Laplace distribution with parameters $\mu, b$ has the form
\[\frac{1}{2b} e^{-|x - \mu| / b}\]
and thus the Taylor series is not a polynomial in $x$ or the parameters.
However, we may sidestep this issue by simply introducing a variable $y$ in the polynomial program which is defined to be $y = |x - \mu|$.

\paragraph{Condition 3:} It can easily be shown that a truncated degree $O(\log 1 / \epsilon)$ Taylor expansion (as of the form we use for learning $k$-GMMs) suffices to approximate a single exponential or Laplace distribution, and hence a $O(k)$-piecewise degree $O(\log 1 / \epsilon)$ polynomial suffices to approximate a mixture of $k$ exponential or Laplace distributions up to $L_1$-distance $\eps$.

Thus for both of these classes, the sample complexity of our algorithm is $\Otilde (k / \epsilon^2)$, and its running time is
\[
\left(k \log \frac{1}{\epsilon} \right)^{O(k^4)} + \Otilde \left( \frac{k}{\epsilon^2} \right) \; ,
\]
similar to the algorithm for learning $k$-GMMs.
As for $k$-GMMs, this sample complexity is nearly optimal, and the running time is nearly-linear in the number of samples drawn, if $k$ is constant.

\section{Acknowledgements}
We thank Jayadev Acharya, Ilias Diakonikolas, Piotr Indyk, Gautam Kamath, Ankur Moitra, Rocco Servedio, and Ananda Theertha Suresh for helpful discussions.

\bibliographystyle{alpha}
\bibliography{refs}

\end{document}